\documentclass{article}
\usepackage{graphicx} 
\usepackage{todonotes}

\usepackage[utf8]{inputenc}
\usepackage[USenglish]{babel}
\usepackage{amsmath,amssymb,amsfonts,amsthm}
\usepackage{mathabx}
\usepackage{mathtools}
\usepackage[T1]{fontenc}
\usepackage{url}
\usepackage{fullpage}
\usepackage{hyperref}
\usepackage{cleveref}
\usepackage[usenames,dvipsnames,table]{xcolor}
\usepackage{booktabs}
\usepackage{authblk}
\usepackage{todonotes}
\usepackage{comment}
\usepackage{paralist}
\usepackage{enumerate}
\usepackage{enumitem}
\usepackage{authblk}
\usepackage{scalerel}
\usepackage{thm-restate}
\usepackage[title]{appendix}
\usepackage{soul}
\usepackage{nicefrac}
\usepackage{thm-restate}
\usepackage{listings}
\usepackage[multiple]{footmisc}
\usepackage{bbm}
\usepackage[linesnumbered,boxruled,vlined, boxed]{algorithm2e}
\usepackage{soul}
\usepackage{thm-restate}
\usepackage{xspace}
\usepackage{empheq}
\usepackage[modulo,right]{lineno}

\title{On Diverse Solutions to Max-$k$-CSP and Bounded Degree $k$-SAT}
\author[1]{Mayank Goswami\footnote{Supported by NSF grant CCF-2503086}}
\author[2]{Adarsh Srinivasan \footnote{Supported by NSF CCF award 2313372 and NSF CAREER award 2443697}}
\affil[1]{Queens College, City University of New York\\
\tt{mayank.goswami@qc.cuny.edu}}
\affil[2]{Rutgers University\\
\tt{adarsh.srinivasan@rutgers.edu}}

\date{\vspace{-8ex}}
\newtheorem{theorem}{Theorem}

\newtheorem{lemma}[theorem]{Lemma}

\newtheorem{corollary}[theorem]{Corollary}

\theoremstyle{definition}
\newtheorem{definition}{Definition}
\newtheorem{remark}{Remark}

\newcommand{\vbl}{\textsf{vbl}}

\newcommand{\z}{\mathbf{z}}
\newcommand{\x}{\mathbf{x}}
\newcommand{\y }{\mathbf{y}}

\newcommand{\f}{\Phi}

\newcommand{\Om}{\Omega_{\f}}

\DeclareMathOperator*{\E}{\mathbb{E}}

\newcommand{\D}{\mathsf{DIAM}}

\newcommand{\Div}{\sigma}

\newcommand{\eps}{\ensuremath{\varepsilon}}

\newcommand{\IR}{\ensuremath{\mathbb{R}}}

\newcommand{\floor}[1]{{\left\lfloor{#1}\right\rfloor}}

\newcommand{\poly}{\mbox{\rm poly}}

\DeclareMathOperator*{\Prob}{\ensuremath{\textnormal{Pr}}}
\renewcommand{\Pr}{\Prob}

\newenvironment{tbox}{\begin{tcolorbox}[
		enlarge top by=5pt,
		enlarge bottom by=5pt,
		 breakable,
		 boxsep=0pt,
                  left=4pt,
                  right=4pt,
                  top=10pt,
                  arc=0pt,
                  boxrule=1pt,toprule=1pt,
                  colback=white
                  ]
	}
{\end{tcolorbox}}

\begin{document}

\maketitle

\begin{abstract}
    We study the problem of generating diverse solutions to Max-$k$-CSP and bounded-degree $k$-SAT, focusing on two distinct metrics: constraint diversity and variable diversity.

For constraint diversity, the goal is to output $s \geq 2$ assignments to the CSP such that each assignment satisfies a $c$-fraction of the constraints, while maximizing the diversity among the $0$-$1$ indicator vectors of satisfied constraints in the Hamming metric. By reducing this to a multi-criteria optimization problem, we design $\poly(n,s)$ time approximation algorithms that return s assignments achieving provable bi-criteria guarantees on both the fraction of satisfied constraints and diversity of the constraint vectors.

For variable diversity, the objective is to maximize the Hamming distance between the assignments, while also maximizing the number of constraints satisfied. For Max-$k$-CSP instances when the desired number of solutions is $s=\exp(n)$, we implicitly represent these diverse approximate solutions by constructing linear codes within the solution space.

Finally, we investigate variable diversity for $k$-SAT in the Lovász Local Lemma regime. In this setting, we establish NP-hardness for the exact diversity problem (computing the diameter of the solution space) and provide a polynomial-time approximation algorithm to efficiently generate diverse satisfying assignments.

\end{abstract}
\newpage

\newpage
\section{Introduction}

The paradigm of generating diverse solutions to various computational tasks has garnered significant attention in recent years~\cite{hebrard2005finding, fomin2020diversity, hanaka2023framework, baste2022diversity, drabik2024finding, iwamasa2025general, galvez2025computing,galvez2025framework}. Broadly speaking, the goal is to output multiple solutions to an instance of a computational problem that are as \emph{diverse} or conceptually different from each other as possible. The practical need to find diverse satisfying assignments has long been recognized in the SAT-solving community, initiated in~\cite{nadel2011generating, agbaria2010sat} using CDCL-based methods, and extensively investigated since~\cite{liang2025diversat,zheng2025exact}.

The recent work of~\cite{austrin2025algorithms} in ICALP'25 designed $O^*(\poly(s)2^{n-cn/k})$-time algorithms\footnote{The $O^*$ notation is used to suppress polynomial factors in $n$.} (for a constant $c$ not depending on $k$) for obtaining $s$ diverse satisfying assignments to \emph{arbitrary} $k$-SAT and $k$-CSP instances, comparing favorably with the state-of-the art algorithms for finding single satisfying assignments~\cite{paturi1997satisfiability,schoning1999probabilistic,paturi2005improved,scheder2024ppsz}. 

What about polynomial time algorithms? Note that the problem of finding two most diverse satisfying assignments is polyAPX-complete even for 2SAT~\cite{crescenzi_hamming_2002}. For what formulae can we then expect polynomial time (perhaps approximation) algorithms for finding $s \geq 2$ diverse solutions? This question is the focus of this paper.

Progress when $s=2$ (diverse-pair) on special classes of formulae has been achieved in ISAAC'24 by \cite{misra2025parameterized} and a preprint by~\cite{gima2024computing}. These works study the diverse-pair problem for affine formulae, 2SAT, hitting formulae and Horn formulae. Apart from some special cases (e.g., when a variable appears in at most 2 clauses), most other cases have NP or W[1]-hardness results, and novel FPT algorithms, but that are still exponential in the worst case. This paper studies three settings.

\begin{enumerate}
    \item \textbf{Diverse Max-SAT:}  Rather than requiring diverse assignments each of which satisfies \emph{all} clauses, the goal is to satisfy a \emph{large fraction} of clauses. While there is prior work on Diverse-Max-SAT~\cite{zhou2023lsdtkms}, ours is the first with theoretical guarantees.
    \item \textbf{Diverse SAT in Local Lemma regime:} We study the diverse SAT problem for formulae that satisfy the conditions of the Lovász local lemma (LLL). These are known to be polynomial time solvable by the celebrated Moser-Tardos algorithm~\cite{moser2010constructive}.
    \item  \textbf{Constraint Diversity:} In all previous work on diverse SAT, diversity of a pair of assignments is naturally defined to be the Hamming distance between them, and this definition then extended for $s>2$ assignments by either considering the minimum or the average pairwise Hamming distance among the $s\choose{2}$ pairs. Motivated by discussions with practitioners in the field of SAT solving, we also consider \emph{constraint diversity}, where the diversity between two assignments is defined as the number of constraints satisfied by exactly one of them. 
\end{enumerate}

\noindent Since our results for the first setting extend to CSPs, we call the first setting \emph{Diverse Max-CSP}. The second setting is termed the \emph{Local Lemma Regime}. These two settings, and the two notions of diversity (the usual, assignment-Hamming distance, and the constraint diversity introduced here) give rise to four questions that we now pose formally.

\noindent
\textbf{Preliminaries.} We start by formally defining the problems we study. Consider a set of $n$ variables $x_1, \dots, x_n$. An assignment $\x \in \{0,1\}^n$ assigns to the truth value $\x_i$ to the variable $x_i$. A constraint $C$ is a function that takes as input assignments to a subset of the variables $\vbl(C)$ and outputs $0$ or $1$. An assignment $\x$ \emph{satisfies} a constraint $C$ if $C(\x)=1$.\footnote{By $C(\x)$, we simply mean $C$ evaluated on the values of $\x$ on $\vbl(C)$.} A $k$-CSP on $n$ variables is a collection $\Phi = (C_1, \dots, C_m)$ of constraints with $|\vbl(C_i)| \le k$ for all $i$. A \emph{weighted $k$-CSP} additionally assigns a weight $w_i \in \mathbb{Z}_{>0}$
to each $C_i$. For $x \in \{0,1\}^n$ we define $|\Phi(x)| = \sum_{i=1}^{m} w_i C_i(x)$ for the total weight satisfied, and set
$\mathsf{CSP}_{\mathrm{OPT}} = \max_{x \in \{0,1\}^n} |\Phi(x)|$. A $k$-SAT instance is a $k$-CSP where each constraint is a disjunction ($\vee$) of at most $k$ variables or negations of variables. An E$k$-SAT instance is a $k$-SAT instance where each constraint depends on exactly $k$ variables.

\subsection{Diverse Max-CSP.} We start by investigating diverse variants of the Max-$k$-SAT and Max-$k$-CSP problems. As natural extensions of satisfiability, these maximization problems have been extensively studied and naturally lend themselves to approximation. Relaxing the objective from exact maximization to approximate maximization serves two purposes. Firstly, all Max-$k$-CSPs admit polynomial time, constant factor approximation algorithms. This gives hope for designing polynomial-time constant-factor approximation algorithms for the \emph{diverse variants of Max-$k$-CSPs} as well\footnote{Note that diverse variants do not always have the same complexity as the original CSP. Finding the two most diverse assignments to a 2-SAT formula is polyAPX-complete~\cite{crescenzi_hamming_2002}}. Secondly, relaxing the strict optimality requirement significantly expands the set of feasible solutions. This allows us to generate a set of solutions that are much more diverse than would be possible when restricted only to exact optima. Indeed, instances with no assignments that satisfy all constraints could have several assignments that satisfy a significant fraction of constraints. We now define our two notions of diversity.

\vspace{3mm}\noindent\textbf{Variable-Diversity versus Constraint-Diversity.} Consider a $k$-CSP instance $\Phi$ with constraints $C_1, \dots, C_m$. For any assignment $\x \in \{0,1\}^n$, we define its corresponding \emph{constraint vector} $\Phi(\x) = (C_1(\x), \dots, C_m(\x)) \in \{0,1\}^m$, where $C_i(\x)=1$ if the constraint $C_i$ is satisfied by $\x$ and $C_i(\x)=0$ otherwise. When searching for a set of $s$ diverse solutions $\x^{(1)}, \dots, \x^{(s)}$, we can evaluate their diversity from two perspectives: either the assignment vectors $\x^{(1)}, \dots, \x^{(s)}$ themselves are diverse in $\{0,1\}^n$ (which we call \emph{variable diversity}), or their respective constraint vectors $\Phi(\x^{(1)}), \dots, \Phi(\x^{(s)})$ are diverse in $\{0,1\}^m$ (which we call \emph{constraint diversity}). Let $d_H$ be the Hamming distance between two strings that counts the number of positions on which they are different. Concretely, we consider the following diversity measures, where a diversity measure $\Div$ is a function that takes as input several strings $\x^{(1)}, \dots, \x^{(s)}$ and outputs a natural number that measures how different they all are. Note that if we required all solutions to satisfy all constraints, every constraint vector would identically be the all-ones vector. Thus, relaxing our requirement to approximately maximize the given instance enables us to define an additional notion of diversity.

\begin{enumerate}
    \item (Variable Diversity, Max-Sum and Max-Min) \sloppy $\sigma_{\text{sum}}(\x^{(1)},\dots, \x^{(s)})=\sum_{i,j \in [s]} d_H(\x^{(i)}, \x^{(j)})$ and $\sigma_{\text{min}}(\x^{(1)},\dots, \x^{(s)})=\min_{i \neq j} d_H(\x^{(i)}, \x^{(j)})$
    \item (Constraint Diversity, Max-Sum and Max-Min) $\sigma_{\text{sum}}(\x^{(1)},\dots, \x^{(s)})=\sum_{i,j \in [s]} d_H(\Phi(\x^{(i)}), \Phi(\x^{(j)}))$ and $\sigma_{\text{min}}(\x^{(1)},\dots, \x^{(s)})=\min_{i \neq j} d_H(\Phi(\x^{(i)}), \Phi(\x^{(j)}))$.
\end{enumerate}

\medskip \noindent
\textbf{Why study constraint diversity?} A compelling motivation for the notion of constraint diversity arises naturally in the context of the \textsc{Max-Cut} problem. Recall that the objective in \textsc{Max-Cut} is to partition the vertices of a given graph $G=(V,E)$ into two disjoint sets, $S$ and $T$, maximizing the number of edges crossing between them. This can be naturally formulated as a Max-$2$-CSP, with the variables indicating which partition a vertex belongs in. However, \emph{variable diversity} does not capture the goal of finding structurally different max-cuts. Indeed, if we seek $s=2$ diverse solutions using variable diversity, this can be achieved trivially by taking any optimal cut and completely swapping the assignments in $S$ and $T$. While this pair of solutions is maximally diverse in the variable space, it represents the exact same cut of $G$. In contrast, the notion of constraint diversity resolves this by measuring the difference in the actual constraints satisfied, making this the `correct' notion of diversity in this scenario. We now informally raise the first question we consider in this work, concerning the constraint diversity concept. 

\begin{quote}
    \textbf{Question 1.} (Constraint Diversity, Max-$k$-CSP) Given a Max-$k$-CSP instance $\Phi$ and $c<1$, can we obtain $s$ assignments $\x^{(1)}, \dots, \x^{(s)}$ each achieving an objective value of at least $c \cdot \mathsf{CSP}_{\text{OPT}}$ (where $\mathsf{CSP}_{\text{OPT}}$ is the optimal value of the CSP), that maximize the $\sigma_{\text{min}}$ or $\sigma_{\text{sum}}$ measures of $\Phi(\x^{(1)}), \dots, \Phi(\x^{(s)})$?
\end{quote}

The above question naturally extends to a weighted $k$-CSP instance, where the assignments must obtain total weight at least a $c$-fraction of the maximum possible attainable weight.

\medskip \noindent
\textbf{Variable Diversity.} The variable diversity question has been previously raised in the context of exact satisfiability problems by~\cite{nadel2011generating,austrin2025algorithms}. The algorithmic problem is that of constructing a set of $s$ \emph{well dispersed} assignments $\x^{(1)}, , \dots, \x^{(s)}$ in $\{0,1\}^n$ that satisfy all the constraints of a given CSP $\Phi$. Let us consider how to extend this question to Max-CSP, where as in Question 1, we want to consider assignments that satisfy at least a $c$-fraction of the constraints. 

When $c=0$, all assignments are $c$-satisfying, and this is simply the well-studied problem of constructing an \emph{error-correcting code} in the Hamming space with asymptotically good rate-distance tradeoffs~\cite{macwilliams1977theory, guruswami2019essential}. Such codes of size $s$ can easily be constructed in $\text{poly}(n,s)$ time. When $c=1$, we are only allowed to return satisfying assignments, and must therefore a) resort to exponential time algorithms, and b) limit ourselves to the maximum diversity available in the space of optimal assignments. 

What about a value of $c$ between 0 and 1? Can we obtain an error-correcting code in the space of $c$-satisfying assignments in polynomial (in $n$,$s$) time? For several well-studied classes of CSPs such as $k$-SAT, the provably optimal algorithm is to simply sample assignments in $\{0,1\}^n$ uniformly at random. We observe that this naturally also outputs assignments that are well-dispersed in the Hamming space (\Cref{thm:random-sampling}). We can go further and ask the same question even when $s$ is large (exponential in $n$). In this setting, one needs a succinct representation of a code. This is possible if the code is \emph{linear}, i.e., forms a vector subspace of $\mathbb{F}_2^n$. This brings us to the following natural question.

\begin{quote}
    \textbf{Question 2.} (Variable Diversity, \textsc{Max-$k$-CSP}) Given a CSP $\Phi$, construct a (linear) code $\mathcal{C}$ such that at least a $1-o(1)$ fraction of the codewords satisfy at least $c \cdot \mathsf{CSP}_{\text{OPT}}$ constraints of $\Phi$.\footnote{The approximation factor $c$ we use varies depending on the type of CSP in question. Typically, we choose it such that $c$-approximation, polynomial time algorithms exist for finding one solution.} Can we do so with the same rate-distance tradeoffs that the usual error-correcting codes attain? 
\end{quote}

By definition, the zero vector $\mathbf{0}$ must be a member of any linear code $\mathcal{C}$, and we can easily construct Max-$k$-CSP instances for which the assignment $\mathbf{0}$ satisfies only a vanishingly small fraction of constraints. Furthermore, the linearity of the code forces the existence of at least some bad codewords with respect to the CSP. Hence, it is not reasonable to expect \emph{every} codeword to satisfy a good fraction of the constraints. This leads us to focus on the question of constructing codes with the vast majority of the codewords achieving a good approximation of the optimum value of the CSP.

\subsection{Diverse-SAT in the local lemma regime.} 

While $k$-SAT is NP-complete, there are several classes of $k$-CNF formulae for which the satisfiability problem can be solved in polynomial time. One of the most well-studied such classes is that of CNF formulae where each clause intersects with a few other clauses. In such cases, the celebrated Lov\'asz Local Lemma implies that the formula is always satisfiable~\cite{erdos1975problems,erdos1991lopsided}. While the original proof of the Lov\'asz Local Lemma was purely existential, a long line of work beginning with Beck~\cite{beck1991algorithmic} and culminating in the algorithm by Moser and Tardos~\cite{moser2010constructive} has made it fully constructive. While the Lov\'asz Local Lemma can be applied to a wide range of CSPs, we focus on the case of E$k$-SAT in this section (recall that E$k$-SAT is the special case of $k$-SAT with each clause of the input formula having exactly $k$ literals). There exists an algorithm that takes as input an E$k$-CNF formula $\Phi$ with $n$ variables and $m$ clauses, such that each clause intersects with at most $\left \lfloor \frac{2^k}{e}-1 \right \rfloor$ other clauses (the ``local lemma regime''), and in expected $\poly(n)$ time outputs a satisfying assignment for $\Phi$. 

Let us consider the special case of $s=2$. Recall that although 2-SAT is in P, finding two most diverse assignments to a 2-CNF formula is polyAPX-complete~\cite{crescenzi_hamming_2002}. For a E$k$-CNF formula $\Phi$, let $\Omega_{\Phi}$ denote the set of assignments that satisfy $\Phi$. We define the \emph{diameter} of $\Phi$ to be $\D(\Phi) = \max_{\alpha,\alpha^{\prime} \in \Omega_{\Phi}} d_{H}(\alpha,\alpha^{\prime})$. 

\begin{quote}
    \textbf{Question 3.}(Diameter, local lemma regime) For any formula $\Phi$ that satisfies the conditions of the Lov\'{a}sz Local Lemma, what is the complexity of computing $\D(\Phi)$?
\end{quote}

Finally, what is the complexity of finding $s>2$ diverse assignments to a $k$-CNF formula in the local lemma regime? 

\begin{quote}
    \textbf{Question 4.} (Variable diversity, local lemma regime) Given an E$k$-CNF formula $\f$ such that each clause intersects with at most $\left \lfloor \frac{2^k}{e}-1 \right \rfloor$ other clauses, and an integer $s=\poly(n)$, can we output $s$ satisfying assignments $\x^{(1)}, \dots, \x^{(s)}$ to $\Phi$ that are maximally diverse in $\poly(n)$ time?  
\end{quote}

\section{Our Results}

We now state our results on the 4 questions in Section~1, starting with Question 1 (Constraint Diversity, Max-SAT). 

\subsection{Constraint Diversity, Max-$k$-CSP}

 Consider $\sigma=\sigma_\text{min}$ or $\sigma_{\text{sum}}$. Assume $\Phi$ is an input $k$-CSP, and let $\mathsf{CSP}_{\text{OPT}}$ be the maximum number of constraints satisfied by any assignment to the variables in $\Phi$. Recall Question 1, where we want to output maximally diverse assignments, each of which satisfies $c\cdot\mathsf{CSP}_{\text{OPT}}$ many constraints. Suppose there exist (optimal) assignments $\x_{\text{OPT}}^{(1)}, \dots, \x_{\text{OPT}}^{(s)}$ achieving a diversity of $\Div(\Phi(\x_{\text{OPT}}^{(1)}), \dots, \Phi(\x_{\text{OPT}}^{(s)})) = \mathsf{D}_{\text{OPT}}$, with each assignment satisfying at least $c \cdot \mathsf{CSP}_{\text{OPT}}$ constraints of the given CSP $\Phi$. Our goal is to output assignments $\x^{(1)}, \dots, \x^{(s)}$ that achieve a diversity of at least $c_1 \mathsf{D}_{\text{OPT}}$, where each assignment satisfies at least $c_2 \cdot (c\cdot \mathsf{CSP}_{\text{OPT}})$ constraints. We refer to this as a $(c_1, c_2)$-approximation. We obtain $(c_1, c_2)$-approximations by designing a reduction to a problem of \emph{simultaneously (approximately) maximizing multiple CSPs}:
\begin{itemize}
    \item \textbf{Max-Sum, Constraint Diversity:} We reduce this problem to the simultaneous maximization of two CSPs (\Cref{thm:max-sum-CSP}). The first is a weighted CSP in which every constraint is either in $\Phi$ or the negation of a constraint in $\Phi$, assigned a positive integer weight of at most $s$. The second is exactly the original CSP.
    \item \textbf{Max-Min Constraint Diversity:} We reduce this variant to the simultaneous maximization of $s+1$ unweighted CSPs (\Cref{thm:max_min_clause_diversity}). In each of these CSPs, the constraints are once again either in $\Phi$ or the negations of constraints in $\Phi$. 
\end{itemize}
Next, we use these reductions to design approximation algorithms for the constraint-diversity problems for various classes of $k$-CSPs. 

\medskip \noindent
\textbf{Max-$k$-XOR using random sampling.} For several well-studied Max-CSPs of interest, such as \textsc{Max-$k$-SAT} and \textsc{Max-$k$-XOR}, the simple strategy of choosing random assignments achieves the optimal approximation ratio for a single instance~\cite{hastad_optimal_2001,hastad_advantage_2004}. Recall that in \textsc{Max-$k$-XOR} each constraint requires the modulo-2 sum (XOR) of exactly $k$ variables to equal a specified parity bit (either $0$ or $1$). In \Cref{cor:xor-maxmin}, we observe that this approach also extends to maximizing $s=\poly(n)$ many simultaneous, unweighted \textsc{Max-$k$-XOR} instances, achieving a Pareto approximation factor of $\frac{1}{2}-\eps$ in polynomial time for every constant $\eps>0$. Consequently, our reduction yields a $(1/2 - \eps,1/2 - \eps)$-approximation algorithm for Max-Min Constraint Diversity in \textsc{Max-$k$-XOR}.

\medskip \noindent
\textbf{General Max-$k$-CSPs.} For general Max-$k$-CSPs, the method of random assignments achieves an approximation ratio of $\frac{1}{2^k}$. However, there exist more involved algorithms: Trevisan developed an algorithm using linear programming which improves this approximation ratio to $\frac{2}{2^k}$~\cite{trevisan1998parallel}, and subsequent work improved this to $\frac{k}{2^k}$~\cite{hast2005approximating,charikar2009near}. This was later shown to be optimal under the unique games conjecture~\cite{samorodnitsky2006gowers}. Bhangale, Kopparty, and Sachdeva~\cite{bhangale2015simultaneous} show that one can achieve an approximation ratio of $\frac{2}{2^k}$ for simultaneously maximizing $\ell$ CSP instances, provided $\ell \leq O(\log^{1/4} n)$. Our reduction then implies that for any constant $\eps>0$, there exists a polynomial time, $(1/2^k - \eps, 1/2^{k-1} - \eps)$-approximation algorithm for Max-Sum Constraint Diversity for $k$-CSPs for $s=\poly(n)$ (\Cref{cor:max_sum_constant}), as well as a $(1/2^k - \eps, 1/2^{k-1} - \eps)$-approximation for Max-Min Constraint Diversity for $k$-CSPs for $s = O(\log^{1/4} n)$ (\Cref{cor:max_min_constant}). 

\medskip \noindent
\textbf{Max-$k$-SAT.} For \textsc{Max-$k$-SAT}, the method of random assignments yields an approximation factor of $1-1/2^k$, which is optimal. Hence, it is natural to ask whether we can get a better bi-approximation algorithm for \textsc{Max-$k$-SAT} than for \textsc{Max-$k$-CSP}. For simultaneous \textsc{$k$-SAT} maximization, there exists a $0.75-\eps$ bi-approximation~\cite[Theorem 1.3]{bhangale2015simultaneous}\footnote{Surprisingly, even for simultaneously maximizing $2^{k-3}$ instances of \textsc{Max-$k$-SAT}, it is NP-hard to get a $\frac{7}{8}+\eps$-Pareto approximation, for any constant $\eps>0$~\cite[Proposition 1.1]{bhangale2015simultaneous}.} . However, we cannot use this algorithm for the Max-Sum Constraint Diversity problem for $k$-SAT. This is because our reduction generates two CSPs. The second CSP is the original $k$-SAT instance. The first instance, however, is a weighted CSP where each constraint is either a conjunction (AND) of $k$ variables or a disjunction (OR) of $k$ variables. Due to the work of Trevisan~\cite{trevisan1998parallel}, it is known that approximately maximizing a CSP where each constraint is a conjunction of $k$ variables is at least as hard as maximizing a general $k$-CSP, and hence, our approximation factors for $k$-SAT are only as good as one would expect for general $k$-CSPs.

\begin{table}[htbp]
    \centering
    \renewcommand{\arraystretch}{1.5}
    \resizebox{\textwidth}{!}{%
    \begin{tabular}{lllll}
        \toprule
        \textbf{Problem} & \textbf{Diversity Variant} & \textbf{Number of Solutions} & \textbf{Approx. Factor $(c_1, c_2)$} & \textbf{Technique} \\
        \midrule
        \textsc{Max-$k$-XOR} & Max-Min & $\poly(n)$ & $(1/2, 1/2)$ & Random Assignments \\
        General \textsc{Max-$k$-CSP} & Max-Sum & $\poly(n)$ & $(1/2^k - \eps, 1/2^{k-1} - \eps)$ & \cite{bhangale2015simultaneous} \\
        General \textsc{Max-$k$-CSP} & Max-Min & $O(\log^{1/4} n)$ & $(1/2^k - \eps, 1/2^{k-1} - \eps)$ & \cite{bhangale2015simultaneous} \\
        \bottomrule
    \end{tabular}%
    }
    \vspace{0.2cm}
    \caption{Summary of bi-approximation guarantees for constraint diversity achieved via our reductions.}
    \label{tab:reduction_consequences}
\end{table}
\noindent
\textbf{Practical Solvers.} Our reductions are also of significant practical interest due to the extensive line of work on developing efficient solvers for simultaneously maximizing multiple CSPs. The underlying reductions are developed in \Cref{sec:clause-diversity}. 

\noindent
\textbf{Extension to non-Boolean domains.} The constraint-diversity reductions extend without essential modification
to CSPs over any fixed finite alphabet $[q]$. Moreover, the multicriteria
approximation algorithm of Bhangale, Kopparty, and
Sachdeva~\cite[Theorem~1.2]{bhangale2015simultaneous} applies to
Max-$k$-CSP over $[q]$, yielding a Pareto approximation factor of
$1/q^{k-1}-\eps$. Consequently, our constraint-diversity results extend to
non-Boolean domains, with the corresponding approximation guarantees
inherited from their multicriteria algorithm.

\subsection{Variable Diversity, Max-$k$-CSP}
In \Cref{sec:vbl-diverse}, we answer Question 2. We demonstrate that for any linear code with rate $\rho$ (yielding $2^{\rho n}$ codewords) and relative minimum distance $\delta$ (meaning any two codewords differ in at least $\delta n$ coordinates), provided the code has a dual distance of $\Omega(\log n)$, almost all of its codewords satisfy a significant fraction of the given constraints (\Cref{thm:vbldiverse-general}). The requirement of a dual distance $\Omega(\log n)$ is a mild structural assumption; it is satisfied with high probability by random linear codes, as well as by several explicit families of error-correcting codes equipped with efficient decoding algorithms (\Cref{rem:good-codes}). For \textsc{Max-$k$-XOR}, we can leverage the symmetric structure of the predicate to ensure that the variance of the number of satisfied constraints under a uniformly random assignment is exactly $m$, yielding strong concentration for any instance (\Cref{cor:xor}). For \textsc{Max-$k$-SAT}, we introduce a structural parameter $\Lambda(\Phi)$ capturing the clause overlap profile of a $k$-CNF formula, and show that concentration holds whenever $\Lambda(\Phi)$ is small relative to $m \cdot 2^k / (\log n)^k$ (\Cref{sec:structured-sat}).

\subsection{Diameter, Local Lemma Regime}

In \Cref{sec:LLL}, we show that finding two satisfying assignments at maximal Hamming distance is equivalent to the Not-All-Equal E$k$-SAT (NAE-E$k$-SAT) problem. By analyzing bounded-degree NAE-E$k$-SAT, we establish a complexity phase transition. Specifically, we define $g(k)$ as the maximum degree $d$ for which every $(k,d)$-formula is unconditionally NAE-satisfiable, and prove that determining NAE-satisfiability for $(k, g(k)+1)$-formulae is NP-hard (\Cref{thm:NAE-jump}). Furthermore, we relate this phase transition threshold to the standard E$k$-SAT threshold $f(k)$, proving that $g(k) \geq \lfloor f(k)/2 \rfloor$. We leave open the question of whether $g(k)=f(k)/2$, asymptotically. 

\subsection{Variable Diversity, Local Lemma Regime}

Finally, we turn our attention to Diverse-SAT in the Lovász Local Lemma regime. While the constructive Moser-Tardos algorithm efficiently finds a single satisfying assignment, we demonstrate in~\Cref{thm:LLL-dispersion} that surprisingly, the same algorithm can be naturally extended to generate a set of diverse satisfying assignments in polynomial time. 

\section{Constraint Diversity} \label{sec:clause-diversity}
The main result of this section is a reduction from constraint diversity to multi-criteria CSP optimization problems. 

\begin{definition}[$(c_1, \dots, c_{\ell})$-Pareto approximation algorithm to $\ell$-criteria Max-$k$-CSP]

    \par \noindent
    \textbf{Input:} $\ell$ weighted CSPs $\Phi_1, \dots, \Phi_\ell$, $\mathsf{OPT}_1, \dots, \mathsf{OPT}_\ell \in \mathbb{Z}_{\geq 0}$.\footnote{We note that $\mathsf{OPT}_i$ is \emph{not} the optimum value of $\Phi_i$. While for each $i$ there exists an assignment $\x^{(i)}$ achieving the optimum value of $\Phi_i$, this does not imply the existence of a single achievement $\x$ simultaneously achieving these values.} \\

    \par \noindent
    \textbf{Output:} Assignment $\x \in \{0,1\}^n$ such that $|\Phi_i(\x)| \geq c_i\cdot\mathsf{OPT}_i$ for each $i \in [\ell]$ if there exists an assignment $\x^\star \in \{0,1\}^n$ such that $|\Phi_i(\x^\star)| \geq \mathsf{OPT}_i$, for each $i \in [\ell]$. 
\end{definition}


\subsection{Reduction for Max-Sum Constraint Diversity}
The main result of this section shows that Pareto approximating Max-Sum constraint diversity on $\Phi$ reduces to finding an approximation for Bicriteria ($\ell =2$) Max-$k$-CSP problem. 

\begin{theorem}\label{thm:max-sum-CSP}
Suppose there exists an algorithm running in $T(n)$ time that computes a $(c_1, c_2)$-Pareto approximation for the Bicriteria Max-$k$-CSP problem. Then, for any given $k$-CSP instance $\Phi$, there exists an algorithm for the Max-Sum constraint diversity problem that outputs $s$ solutions achieving a $(c'_1, c_2)$- approximation, where $ c'_1 = \max\left\{\frac{c_1}{2}, \frac{c_1(s-1)}{(2-c_1)s + c_1}\right\}$. This algorithm runs in $O(s^4 \poly(n) \cdot T(n))$ time. 
\end{theorem}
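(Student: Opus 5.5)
The plan is a greedy reduction. We build the $s$ solutions one at a time. At each step the next solution should maximize its total constraint-diversity to the solutions already chosen, while still satisfying many constraints of $\Phi$.

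\textbf{The weighted CSP for one greedy step.} Suppose $\x^{(1)},\dots,\x^{(t)}$ have been chosen. For a new assignment $\x$,
\[
\sum_{j\le t} d_H(\Phi(\x),\Phi(\x^{(j)})) \;=\; \sum_{i=1}^m \Big( a_i\, C_i(\x) + b_i\,(1-C_i(\x))\Big),
\]
where $a_i=|\{j: C_i(\x^{(j)})=0\}|$ and $b_i=|\{j: C_i(\x^{(j)})=1\}|$. The right-hand side is $|\Psi_t(\x)|$ for a weighted $k$-CSP $\Psi_t$. Its constraints are $C_i$ with weight $a_i$ and $\neg C_i$ with weight $b_i$, all weights are at most $s$, and zero-weight constraints are dropped. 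At step $t$ we call the bicriteria algorithm on $(\Psi_t,\Phi)$ with thresholds $(\mathsf{OPT}_1,\mathsf{OPT}_2)$. We do not know $c\cdot\mathsf{CSP}_{\text{OPT}}$ or the best achievable $\Psi_t$-value, so we guess them. The value $\mathsf{OPT}_2$ ranges over $\{0,\dots,m\}$ and can be fixed once in an outer loop. The value $\mathsf{OPT}_1$ ranges over integers at most $sm$, and we binary search for the largest value on which the algorithm succeeds. These guesses, together with the $s$ steps and the computation of the weights, should account for the claimed $O(s^4\poly(n)\cdot T(n))$ bound. The output satisfies $|\Phi(\x^{(j)})|\ge c_2\cdot c\,\mathsf{CSP}_{\text{OPT}}$ at every step, which gives the second coordinate $c_2$ directly. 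The first point is obtained from the single-criterion case, or from the bicriteria call with $\Psi_0$ empty.

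\textbf{The bound $c_1/2$.} Let $O=\{\mathbf{o}_1,\dots,\mathbf{o}_s\}$ be the optimal solutions, and write $D=\sum_{i<j} d_H(\Phi(\mathbf{o}_i),\Phi(\mathbf{o}_j))$. The map $\x\mapsto\Phi(\x)$ sends assignments to points of the Hamming metric space, so the triangle inequality holds for constraint diversity. For any $\mathbf{p}\in S_t=\{\x^{(1)},\dots,\x^{(t)}\}$ and any $i<j$,
\[
d(\mathbf{o}_i,\mathbf{o}_j)\le d(\mathbf{o}_i,\mathbf{p})+d(\mathbf{p},\mathbf{o}_j).
\]
Summing over all pairs and all $\mathbf{p}\in S_t$ gives $tD\le (s-1)\sum_{i} d(\mathbf{o}_i,S_t)$, where $d(\mathbf{o},S_t)=\sum_{\mathbf{p}\in S_t}d(\mathbf{o},\mathbf{p})$. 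Hence some $\mathbf{o}\in O$ has $d(\mathbf{o},S_t)\ge tD/(s(s-1))$. This $\mathbf{o}$ also satisfies $c\cdot\mathsf{CSP}_{\text{OPT}}$ constraints, so it witnesses feasibility for $\mathsf{OPT}_1=\lceil tD/(s(s-1))\rceil$. The binary search therefore returns $\x^{(t+1)}$ with $d(\x^{(t+1)},S_t)\ge c_1 tD/(s(s-1))$. Summing over $t=1,\dots,s-1$ gives total diversity at least $c_1 D/2$.

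\textbf{The second bound (the main obstacle).} I expect the refined ratio $\frac{c_1(s-1)}{(2-c_1)s+c_1}$ to be the hardest step. To get it, I would strengthen the averaging step by also using the diversity already accumulated inside $S_t$, in the style of Birnbaum--Goldman's analysis of greedy max-sum dispersion. Applying the triangle inequality to triples $(\mathbf{o},\mathbf{p},\mathbf{p}')$ with $\mathbf{p},\mathbf{p}'\in S_t$ should give a lower bound on $\sum_{\mathbf{o}\in O} d(\mathbf{o},S_t)$ that grows with the current value $f_t=\sigma(S_t)$. Combined with the oracle guarantee, this yields a recurrence of the form $f_{t+1}\ge f_t+c_1\cdot(\alpha_t D+\beta_t f_t)$. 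Solving this recurrence, or writing it as an LP over the $f_t$, should give the stated expression. The algorithm itself is unchanged, so taking the maximum of the two analyses gives $c_1'$.
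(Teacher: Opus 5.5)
Your reduction to the weighted CSP $\Psi_t$ (with $C_i$ weighted by $a_i$ and $\neg C_i$ by $b_i$) is exactly the paper's. Your averaging argument for the $c_1/2$ term is correct, and more explicit than the paper, which only asserts this bound and cites it.

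The gap is the second term $\frac{c_1(s-1)}{(2-c_1)s+c_1}$. It cannot be obtained from the greedy algorithm analyzed through the triangle inequality. Consider a metric with $s$ ``good'' points pairwise at distance $1$, at least $s$ ``decoys'' pairwise at distance $\tfrac12+\epsilon$, and every good--decoy distance equal to $\tfrac12$. The triangle inequality holds. Run greedy with an \emph{exact} farthest-point oracle, started at a decoy. When the current set is $t$ decoys, another decoy scores $t(\tfrac12+\epsilon)$ against $t/2$ for a good point, so greedy picks only decoys. It ends with ratio $\tfrac12+\epsilon$. The claimed term, by contrast, is already $3/5$ at $c_1=1$, $s=4$, and tends to $1$ as $s\to\infty$. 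Any argument using only triangle inequalities on triples would hold in this metric too, so no such argument can give the formula. Since you keep the algorithm unchanged, this is a missing idea, not a missing computation.

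The paper gets the second term from a different algorithm and a stronger property of the metric. After the greedy phase it runs single-swap local search (Cevallos--Eisenbrand--Zenklusen): it replaces some $\mathbf{p}\in S$ by the oracle's answer for $S\setminus\{\mathbf{p}\}$ whenever this improves $\sigma_{\mathrm{sum}}$. For the analysis it cites Lemma~23 of Austrin et al. The argument runs as follows.
\begin{itemize}
\item Let $S$ be a local optimum and $f=\sum_{\{\mathbf{p},\mathbf{p}'\}\subseteq S}d(\mathbf{p},\mathbf{p}')$. Then for every $\mathbf{p}\in S$, $d(\mathbf{p},S\setminus\{\mathbf{p}\})\ge c_1\max_{\mathbf{o}\in O}d(\mathbf{o},S\setminus\{\mathbf{p}\})$.
\item Summing over $\mathbf{p}$ gives $2f\ge\frac{c_1(s-1)}{s}\sum_{\mathbf{o}\in O,\mathbf{p}\in S}d(\mathbf{o},\mathbf{p})$.
\item The crucial extra ingredient: Hamming distance on $\{0,1\}^m$ equals squared Euclidean distance, so it is of negative type. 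This gives $\sum_{\mathbf{o}\in O,\mathbf{p}\in S}d(\mathbf{o},\mathbf{p})\ge D+f$; the difference is $s^2\|\bar{\mathbf{o}}-\bar{\mathbf{p}}\|_2^2$, where $\bar{\mathbf{o}},\bar{\mathbf{p}}$ are the centroids.
\item Substituting and rearranging yields exactly $f\ge\frac{c_1(s-1)}{(2-c_1)s+c_1}D$.
\end{itemize}

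Even the negative-type inequality does not rescue the greedy route. Plugging it into the natural greedy recurrence gives $f_{t+1}\ge f_t+c_1\bigl(\tfrac{t}{s^2}D+\tfrac{f_t}{t}\bigr)$. For $c_1=1$ this solves only to $f_s\ge D(1-H_s/s)$, where $H_s$ is the $s$-th harmonic number; for example $\tfrac{23}{48}D<\tfrac35 D$ at $s=4$. The stated expression is specifically a local-optimality bound. To prove the theorem as stated, you need to add the swap phase, together with a bound on the number of swaps for the running time.
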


\begin{remark} \label{rem:same-family-sum}
     The reduction in \Cref{thm:max-sum-CSP} maps a Max-Sum constraint diversity instance $\Phi$ to a Bicriteria Max-$k$-CSP with the following structure: the first CSP has constraints that are either in $\Phi$, or are negations of constraints in $\Phi$, it has positive integer weights bounded by $s$ (or $s$ times the original weights if $\Phi$ was weighted), the second CSP is the original instance $\Phi$. 
\end{remark}

\begin{corollary}\label{cor:max_sum_constant}
Let $k$ be any constant. Let $\mathcal{F}$ be any family of $k$-CSPs. There exists a $\poly(n)$ time, $(1/2^{k}-\eps, 1/2^{k-1}-\eps)$- approximation algorithm for Max-Sum constraint diversity on instances from $\mathcal{F}$, for any $s = \poly(n)$. 
\end{corollary}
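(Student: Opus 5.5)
We plan to combine \Cref{thm:max-sum-CSP} with the bicriteria algorithm of Bhangale, Kopparty, and Sachdeva~\cite{bhangale2015simultaneous}. That algorithm gives a $(\frac{2}{2^k}-\eps', \frac{2}{2^k}-\eps')$-Pareto approximation for simultaneously maximizing $\ell = O(\log^{1/4} n)$ weighted $k$-CSPs in polynomial time. Taking $\ell=2$ is well within this range, so we obtain $c_1 = c_2 = \frac{1}{2^{k-1}}-\eps'$ with running time $T(n)=\poly(n)$.

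The first step is to check that the bicriteria instance produced by the reduction is admissible input for their algorithm. By \Cref{rem:same-family-sum}, the first CSP consists of constraints from $\Phi$ and their negations. Each negation $\neg C$ still depends only on $\vbl(C)$, so it is again a $k$-ary constraint. This is why the corollary holds for an arbitrary family $\mathcal{F}$: the Pareto algorithm handles general $k$-CSPs, so membership in $\mathcal{F}$ is never used. The weights are positive integers bounded by $s$ times the original weights. With $s=\poly(n)$ they have polynomial bit-length, which we must confirm is permitted by~\cite{bhangale2015simultaneous}. If only unweighted instances are allowed, we duplicate each constraint up to $s$ times, which keeps the instance size polynomial. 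The second CSP is $\Phi$ itself.

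The second step is to read off the parameters from \Cref{thm:max-sum-CSP}:
\[
c'_1 \ge \frac{c_1}{2} = \frac{1}{2^k}-\frac{\eps'}{2} \ge \frac{1}{2^k}-\eps,
\qquad
c_2 = \frac{1}{2^{k-1}}-\eps' \ge \frac{1}{2^{k-1}}-\eps,
\]
which holds once we choose $\eps'=\eps$. The total running time is $O(s^4\poly(n)\cdot T(n))$, which is polynomial because $s=\poly(n)$ and $k,\eps$ are constants.

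The main obstacle is the first step: confirming that the simultaneous-approximation guarantee of~\cite{bhangale2015simultaneous} really covers weighted instances in the exact form we need. Their algorithm must return an assignment meeting $c_i\cdot\mathsf{OPT}_i$ whenever targets $\mathsf{OPT}_i$ are simultaneously achievable, rather than approximating each instance's individual optimum. I would first check their Theorem~1.2 for the weighted, given-targets formulation. If weights are not supported directly, I would fall back on the duplication argument above.
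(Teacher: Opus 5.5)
Your proposal is correct and matches the paper's proof: compose \Cref{thm:max-sum-CSP} with the Bhangale--Kopparty--Sachdeva Pareto algorithm (\Cref{thm:bks15}) at $\ell=2$, then read off $c_1'\ge c_1/2 = 1/2^k-\eps/2$ and $c_2 = 1/2^{k-1}-\eps$. The paper leaves your admissibility checks implicit, and they hold: negated constraints remain $k$-ary, the weights are polynomially bounded, and zero-weight constraints can simply be dropped.
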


\begin{proof}
Using \Cref{thm:max-sum-CSP}, we reduce Max-Sum constraint diversity to an instance of Bicriteria Max-$k$-CSP problem. Then, we use the following theorem from~\cite{bhangale2015simultaneous}, with $\ell=2$. 
\end{proof}

\begin{theorem} \label{thm:bks15} (\cite{bhangale2015simultaneous}, Theorem 1.2)
    For every $\eps>0$, there exists a $2^{O\left(\frac{\ell^4}{\eps^2 \log(\ell/\eps)}\right)}\cdot \poly(n)$ time $\left(\frac{1}{2^{k-1}}-\eps, \dots,\frac{1}{2^{k-1}}-\eps\right)$-Pareto approximation algorithm for $\ell$-criteria Max-$k$-CSP. 
\end{theorem}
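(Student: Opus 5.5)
Since \Cref{thm:bks15} is quoted from~\cite{bhangale2015simultaneous}, the plan below is a reconstruction of how its proof should go. The idea is to take a single-instance LP-rounding algorithm whose \emph{expected} guarantee is $\frac{1}{2^{k-1}}$ and make that guarantee hold \emph{simultaneously with high probability} for all $\ell$ instances. The obstruction is variables on which too much weight is concentrated, and these are removed by brute force.

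\textbf{Step 1: one LP for all instances.} Write each constraint of each $\Phi_i$ as a disjoint sum of $k$-conjunctions, one per satisfying assignment of its predicate. Then form Trevisan's LP relaxation~\cite{trevisan1998parallel} for conjunctions, with one shared set of variable marginals $y\in[0,1]^n$. Add the $\ell$ constraints $\mathrm{LP}_i(y)\ge \mathsf{OPT}_i$. This LP is feasible because the integral point $\x^\star$ satisfies it, and a feasible $y$ can be found in $\poly(n)$ time.

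\textbf{Step 2: rounding in expectation.} Apply Trevisan's biased rounding, which sets each variable independently with a bias that is an affine function of $y_j$. It guarantees $\E|\Phi_i(\x)|\ge \frac{2}{2^k}\mathrm{LP}_i(y)\ge \frac{1}{2^{k-1}}\mathsf{OPT}_i$ for every $i$. By itself this bound holds only in expectation, and one bad instance can ruin the outcome.

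\textbf{Step 3: concentration on light instances.} Call instance $\Phi_i$ \emph{$\delta$-spread} if every variable has total incident weight at most $\delta\cdot\mathsf{OPT}_i$. For a spread instance, $|\Phi_i(\x)|$ is a function of independent bits in which each bit has bounded influence. A variance bound (Chebyshev), or McDiarmid's inequality, then shows that $|\Phi_i(\x)|\ge(\frac{1}{2^{k-1}}-\eps)\mathsf{OPT}_i$ fails with probability at most $1/(2\ell)$ once $\delta\approx \eps^2/(\ell k)$. A union bound over the $\ell$ instances then gives a simultaneous guarantee with constant probability, which can be amplified by repetition.

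\textbf{Step 4: peeling heavy variables (main obstacle).} An instance that is not spread has some variable carrying more than a $\delta$ fraction of its target weight. The plan is an iterative peeling procedure:
\begin{itemize}
\item Collect the heavy variables of all currently non-spread instances.
\item Enumerate every assignment to them; one of these guesses agrees with $\x^\star$.
\item Restrict all instances under each guess and update the targets $\mathsf{OPT}_i$ accordingly.
\item Repeat.
\end{itemize}
The difficulty is that restriction can create new heavy variables, so the potential argument has to be done carefully. In each round, every non-spread instance either loses a $\delta$ fraction of its remaining target into constraints that are now fully decided, or becomes spread. It can also happen that the weight already fixed exceeds $(1-\eps)\mathsf{OPT}_i$, in which case the instance is essentially satisfied and can be dropped.

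If this potential argument goes through, the number of rounds is $O(\ell/(\eps\delta))$ and the number of heavy variables guessed per round is $O(\ell k/\delta)$. Substituting $\delta\approx\eps^2/(\ell k)$, the total number of guessed bits comes to roughly $\poly(\ell,1/\eps)$. A sharper accounting, as in~\cite{bhangale2015simultaneous}, is needed to reach the stated exponent $O\!\left(\frac{\ell^4}{\eps^2\log(\ell/\eps)}\right)$.

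\textbf{Step 5: combine.} Under the correct guess, apply Steps 1--3 to the residual spread instances. The decided constraints contribute their full weight, and the remaining constraints lose at most an $\eps$ fraction through the concentration bound of Step 3. Together this gives $|\Phi_i(\x)|\ge(\frac{1}{2^{k-1}}-\eps)\mathsf{OPT}_i$ for all $i\in[\ell]$. The running time is the number of guesses multiplied by $\poly(n)$.
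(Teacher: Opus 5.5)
The paper does not prove this statement. It is quoted from \cite{bhangale2015simultaneous}, so your sketch has to stand on its own.

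Steps 1--2 are sound. The single LP carrying the $\ell$ target constraints is feasible via $\x^\star$. Independent rounding with bias, e.g., $\frac12-\frac1{2k}+\frac{y_j}{k}$ gives $\E|\Phi_i(\x)|\ge 2^{1-k}\mathsf{OPT}_i$, and this also holds for the lower-arity constraints left after restriction.

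\textbf{The gap.} Steps 3 and 4 both silently assume that the total weight $W_i$ of each instance is $O(\mathsf{OPT}_i)$.
\begin{itemize}
\item \emph{The count in Step 4 fails.} The number of variables of incident weight $>\delta\,\mathsf{OPT}_i$ is only bounded by $kW_i/(\delta\,\mathsf{OPT}_i)$, not by $O(k/\delta)$. Pareto targets can be a vanishing fraction of the total weight. For instance, take an unweighted $\Phi_i$ with many constraints and $\mathsf{OPT}_i=1$. Then every variable occurring in $\Phi_i$ is heavy, and your enumeration ranges over assignments to all of them.
\item \emph{The concentration in Step 3 fails.} The variance bound available there is $\sum_j d_j^2\le \delta\,\mathsf{OPT}_i\cdot kW_i$. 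This gives nothing at deviation $\eps\,\mathsf{OPT}_i$ when $W_i\gg\mathsf{OPT}_i$.
\end{itemize}

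Step 3 can be patched. The rounding keeps every bias in $[\frac{k-1}{2k},\frac{k+1}{2k}]$, so each surviving constraint is satisfied with probability at least $(\frac{k-1}{2k})^k$. This gives the mean a slack of $\Omega_k(U_i)$ over the requirement when the residual weight $U_i$ is large. But then you need spread at the scale of $U_i$. That scale can collapse again and again as you fix variables consistently with $\x^\star$, because $\x^\star$ may falsify most of $\Phi_i$. Bounding the number of guessed variables in this regime is the missing idea, and a weight-based potential like yours does not supply it.

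\textbf{The normalized regime.} When $W_i=O(\mathsf{OPT}_i)$, your iterative peeling is unnecessary, and its stated potential is wrong. For $k\ge 2$, fixing a heavy variable only lowers the arity of the constraints containing it. It need not decide any of them, so no $\delta$ fraction of the target moves into ``fully decided'' constraints.

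A single round suffices here, for two reasons. Restriction never increases incident weights. And you only need additive error $\eps\,\mathsf{OPT}_i$ relative to the \emph{original} targets. So guess the $O(\ell k\max_i(W_i/\mathsf{OPT}_i)/\delta)$ variables that are heavy with respect to the original targets. After that, every instance stays spread, and Chebyshev with $\mathsf{Var}\le \delta k\,\mathsf{OPT}_i W_i$ finishes.

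In short, the obstacle you single out is self-inflicted, while the real one (targets much smaller than total weight) is not addressed. You also do not derive the stated running time. The exponent as transcribed here, with $\log(\ell/\eps)$ in the denominator, is most likely a misreading of $\frac{\ell^4}{\eps^2}\log(\ell/\eps)$.
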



To prove \Cref{thm:max-sum-CSP}, we define an intermediate problem: Farthest-Point-Sum-$k$-CSP. Recall that $\Phi$ is an input $k$-CSP, and let $\mathsf{CSP}_{\text{OPT}}$ be the maximum number of constraints satisfied by any assignment to the variables in $\Phi$. Farthest-Point-Sum-$k$-CSP takes as input $t$ elements $\z^{(1)}, \dots, \z^{(t)} \in \{0,1\}^m$. Suppose there exists $\x_{\text{OPT}} \in \{0,1\}^n$ such that $\sum_{i=1}^t d_H(\z^{(i)}, \Phi(\x_{\text{OPT}})) = d^{\star}$ and $|\Phi(\x_{\text{OPT}})| \geq c \cdot \mathsf{CSP}_{\text{OPT}}$. Then, a $(c_1,c_2)$-approximation for Farthest-Point-Sum-$k$-CSP outputs an $\x \in \{0,1\}^n$ such that $\sum_{i=1}^t d_H(\z^{(i)}, \Phi(\x)) \geq c_1 \cdot d^{\star}$ and $|\Phi(\x)| \geq c_2 \cdot (c\cdot\mathsf{CSP}_{\text{OPT}})$. We note the following lemma, which is standard in diverse solutions literature. 

\begin{lemma}[Farthest Point Insertion] \label{lem:farthest-point}
    Suppose that for each $t \leq s$, there exists a $T(n,t)$ for Farthest-Point-Sum-$k$-CSP with approximation factors $(c_1, c_2)$. Then, there exists  $O(s^4 \poly(n) T(n,s))$ for Max-Sum constraint diversity with approximation factors of $c_1', c_2$, where $c_1' = \max \left\{ \frac{c_1}{2},\frac{c_1(s-1)}{(2-c_1)s+c_1} \right\}$. 
\end{lemma}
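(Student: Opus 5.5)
The plan is to run the standard greedy farthest-point insertion in the constraint-vector space $\{0,1\}^m$ and to analyze it with two separate bounds. The bound $c_1/2$ will come from a triangle-inequality argument. The bound $c_1(s-1)/((2-c_1)s+c_1)$ will come from a tighter averaging argument over all insertion rounds.

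\textbf{Algorithm.} The algorithm works as follows.
\begin{enumerate}
\item Pick a first assignment $\x^{(1)}$. To ensure the $c_2$ guarantee, take it as the output of the Farthest-Point-Sum-$k$-CSP oracle with $t=1$ and an arbitrary reference point.
\item For $t=1,\dots,s-1$, call the $(c_1,c_2)$-approximation for Farthest-Point-Sum-$k$-CSP on $\z^{(i)}=\Phi(\x^{(i)})$, $i\le t$, and let its output be $\x^{(t+1)}$.
\end{enumerate}
Every returned assignment satisfies at least $c_2\cdot c\cdot\mathsf{CSP}_{\text{OPT}}$ constraints by the oracle's guarantee. This holds provided a witness $\x_{\text{OPT}}$ with $|\Phi(\x_{\text{OPT}})|\ge c\,\mathsf{CSP}_{\text{OPT}}$ exists, and the optimal solutions $\x^{(j)}_{\text{OPT}}$ are such witnesses.

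The oracle is only promised $d^\star$ for some specific $\x_{\text{OPT}}$. Since the $c$-feasibility threshold is fixed, we apply the guarantee to the $c$-feasible point maximizing the sum, which is at least every $\sum_i d_H(\z^{(i)},\Phi(\x^{(j)}_{\text{OPT}}))$. The extra $s^4\poly(n)$ factor in the running time is absorbed by guessing or enumerating the target value $d^\star$ and the threshold bookkeeping, over $O(s)$ rounds.

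\textbf{Diversity analysis.} Write $\mathbf{o}_j=\Phi(\x^{(j)}_{\text{OPT}})$, $\mathsf{D}_{\text{OPT}}=\sum_{j,j'}d_H(\mathbf{o}_j,\mathbf{o}_{j'})$, and $\mathbf{a}_i=\Phi(\x^{(i)})$. At round $t$, the oracle gives
\[
\sum_{i\le t} d_H(\mathbf{a}_i,\mathbf{a}_{t+1})\ \ge\ c_1\max_j \sum_{i\le t} d_H(\mathbf{a}_i,\mathbf{o}_j)\ \ge\ \frac{c_1}{s}\sum_{j}\sum_{i\le t} d_H(\mathbf{a}_i,\mathbf{o}_j).
\]
The triangle inequality gives $d_H(\mathbf{o}_j,\mathbf{o}_{j'})\le d_H(\mathbf{a}_i,\mathbf{o}_j)+d_H(\mathbf{a}_i,\mathbf{o}_{j'})$. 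Averaging over pairs then yields $\sum_j d_H(\mathbf{a}_i,\mathbf{o}_j)\ge \mathsf{D}_{\text{OPT}}/(2s)$ for each $i$, up to the ordered/unordered convention. Summing over rounds $t$ and $i\le t$ gives total diversity at least roughly $\frac{c_1}{2}\mathsf{D}_{\text{OPT}}$. This is the classical $1/2$ argument for max-sum dispersion via greedy insertion (Borodin–Lee–Ye / Ravi et al.).

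\textbf{Refined bound.} For the second term, we use a potential argument. Let $A_t=\sum_{i<i'\le t}d_H(\mathbf{a}_i,\mathbf{a}_{i'})$ and $B_t=\sum_{i\le t}\sum_j d_H(\mathbf{a}_i,\mathbf{o}_j)$. The triangle inequality also gives a lower bound on $B_t$ in terms of both $\mathsf{D}_{\text{OPT}}$ and $A_t$:
\[
(s-1)B_t\ \ge\ t\,\mathsf{D}_{\text{OPT}}/\text{const}-(s)A_t\text{-type terms}.
\]
This comes from bounding $d_H(\mathbf{o}_j,\mathbf{o}_{j'})$ by paths through the chosen points, and $d_H(\mathbf{a}_i,\mathbf{a}_{i'})$ appears when two optimal points are routed through different chosen points. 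Combining this with $A_{t+1}-A_t\ge \frac{c_1}{s}B_t$ gives a linear recurrence in $A_t$. Solving it at $t=s$ should produce exactly $c_1(s-1)/((2-c_1)s+c_1)$. Taking the max of the two bounds finishes the proof.

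\textbf{Main obstacle.} I expect the main obstacle to be getting the precise constant in the refined recurrence, i.e., the right triangle-inequality inequality relating $B_t$, $A_t$, and $\mathsf{D}_{\text{OPT}}$. I would first try the inequality $\sum_{j,j'}d(\mathbf{o}_j,\mathbf{o}_{j'})\cdot t\le 2(s-1)\sum_i\sum_j d(\mathbf{a}_i,\mathbf{o}_j)$, and then sum the increments telescopically instead of solving the recurrence exactly.
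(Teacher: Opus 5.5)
Your greedy phase and the $c_1/2$ bound are fine and match what the paper says the greedy step gives. To get exactly $c_1/2$, use $\sum_j d_H(\mathbf{a}_i,\mathbf{o}_j)\ge \mathsf{D}_{\text{OPT}}/(2(s-1))$ with $\mathsf{D}_{\text{OPT}}$ summed over ordered pairs; with $2s$ you lose a factor $(s-1)/s$.

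The genuine gap is the second term $\frac{c_1(s-1)}{(2-c_1)s+c_1}$, which exceeds $c_1/2$ exactly when $c_1(s-1)>2$. Your recurrence is never pinned down. Worse, no argument that uses only the greedy guarantee $A_{t+1}-A_t\ge \frac{c_1}{s}B_t$ and the triangle inequality can give this term, because such an argument would hold in every metric space. Take a metric with $s$ points $\mathbf{o}_j$ pairwise at distance $2$, at least $s$ decoys pairwise at distance $1+\eps$, and every decoy-to-$\mathbf{o}_j$ distance equal to $1$. Exact farthest-point insertion ($c_1=1$) started at a decoy always prefers another decoy, since it gains $t(1+\eps)$ rather than $t$. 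It therefore ends with only $\frac{1+\eps}{2}$ of the optimum, while $\frac{s-1}{s+1}>\frac{1+\eps}{2}$ for $s\ge 4$ and small $\eps$.

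What is missing is a property of Hamming distance beyond the triangle inequality, together with a different algorithm. On $\{0,1\}^m$ we have $d_H(\mathbf{u},\mathbf{v})=\Vert\mathbf{u}-\mathbf{v}\Vert_2^2$, so $d_H$ is of negative type: $\sum_{i,j}b_ib_j\,d_H(\mathbf{p}_i,\mathbf{p}_j)\le 0$ whenever $\sum_i b_i=0$. The decoy metric violates this with $b=1$ on the $\mathbf{o}_j$ and $b=-1$ on $s$ decoys. Even with negative type, your greedy recurrence falls short. Taking $b=\frac1t$ on $\mathbf{a}_1,\dots,\mathbf{a}_t$ and $b=-\frac1s$ on $\mathbf{o}_1,\dots,\mathbf{o}_s$ gives $B_t\ge \frac{s}{t}A_t+\frac{t}{2s}\mathsf{D}_{\text{OPT}}$. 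For $c_1=1$, solving $A_{t+1}\ge(1+\frac1t)A_t+\frac{t}{2s^2}\mathsf{D}_{\text{OPT}}$ from $A_1=0$ yields only $A_s\ge\bigl(1-\frac1s\sum_{t=1}^{s}\frac1t\bigr)\frac{\mathsf{D}_{\text{OPT}}}{2}$. This is strictly below $\frac{s-1}{s+1}\cdot\frac{\mathsf{D}_{\text{OPT}}}{2}$.

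The paper instead follows the greedy phase with single-swap local search in the style of Cevallos, Eisenbrand and Zenklusen. Each candidate replacement is produced by the Farthest-Point-Sum oracle run against the remaining $s-1$ points, and the search stops when no swap improves $\sigma_{\mathrm{sum}}$. The second term of $c_1'$ comes from analyzing swap-optimal sets in negative-type spaces, imported from Austrin et al.\ (Lemma~23). In the paper, this local-search phase, not the enumeration of $d^\star$, accounts for the extra oracle calls behind the $s^4$ factor.
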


\label{app:lem:farthest-point}
\begin{proof}
The algorithm has two steps.

\medskip \noindent
\textbf{Step 1: Farthest-point insertion (Gonzalez~\cite{gonzalez1985clustering}).}
Starting from an empty list $L$, we greedily insert points one at a time:
at each step, given the current list $L = \{x^{(1)}, \dots, x^{(t)}\}$, we call
the Farthest-Point-Sum-$k$-CSP algorithm with inputs $z^{(1)} = x^{(1)}, \dots,
z^{(t)} = x^{(t)}$ to obtain a new point $x^{(t+1)}$ maximising
$\sum_{i=1}^{t} d_H(\Phi(x^{(t+1)}), \Phi(x^{(i)}))$ subject to the
constraint-satisfaction guarantee. We repeat until $|L| = s$.

\medskip \noindent
\textbf{Step 2: Local search improvement (Cevallos--Eisenbrand--Zenklusen
\cite{cevallos2016local,cevallos2019improved}).}
After the greedy phase, a local search over single-swap moves
(replace one element of $L$ with the Farthest-Point-Sum-$k$-CSP algorithm's output with the remaining $s-1$ elements) is applied until no swap improves $\sigma_{\mathrm{sum}}$.

\medskip \noindent
\textbf{Approximation factor and time complexity.} Let $\mathrm{OPT}$ denote the optimal $\sigma_{\mathrm{sum}}$ value among all
$s$-tuples whose constraint vectors each satisfy the $c_2$ guarantee. Each call to Farthest-Point-Sum-$k$-CSP
call runs in time $T(n, t) \le T(n, s)$ and the farthest point insertion phase phase makes $s$ calls;
the local search makes at most $O(s^3)$ calls (there are $O(s)$ swaps and each
reduces to a constant number of oracle calls), giving total time
$O(s^4 \operatorname{poly}(n) T(n,s))$. The approximation factor $c_1' = \max\!\left\{\tfrac{c_1}{2},\,
\tfrac{c_1(s-1)}{(2-c_1)s + c_1}\right\}$ arises as follows.
The greedy step alone gives a $c_1/2$-approximation to $\sigma_{\mathrm{sum}}$
(a consequence of the triangle inequality and the $(c_1,c_2)$-bicriteria
guarantee). The local search then improves this to the second term, which
dominates for $s \geq 3$. For the complete
calculation of these factors, and a proof that the local search algorithm terminates in $O(s^3)$ iterations, we refer to~\cite[Lemma~23, Appendix~B]{austrin2024algorithmsarxiv},
whose argument applies without modification to our setting.
\end{proof}

\medskip \noindent
    \textbf{Proof of \Cref{thm:max-sum-CSP}.} To prove~\Cref{thm:max-sum-CSP}, what remains is to reduce Farthest-Point-Sum-$k$-CSP to Bicriteria Max-$k$-CSP instances with the same approximation factors. Consider Farthest-Point-Sum-$k$-CSP with inputs $\z^{(1)}, \dots, \z^{(t)} \in \{0,1\}^m$ and a CSP $\Phi$. Suppose there exists $\x_{\text{OPT}} \in \{0,1\}^n$ such that $\sum_{i=1}^t d_H(\z^{(i)}, \Phi(\x_{\text{OPT}})) = d^{\star}$ and $|\Phi(\x_{\text{OPT}})|\geq c\cdot\mathsf{CSP}_{\text{OPT}}$. Let $\Phi=(C_1, \dots,C_m )$. For each $i \in [m]$, we define $\widebar{C}_i(\x)=C_i(\x ) \oplus 1$, the negation of $C_i$. We now define the weighted CSP $\Psi$ as follows. It has the clauses $C_1, \dots,C_m , \widebar{C}_1, \dots, \widebar{C}_m$. For each $i \in [m]$, the clause $C_i$ has the weight $w_i=|\{j \in [t]: \z^{(j)}_i =0\}|$, and $\widebar{C}_i$ has the weight $\widebar{w}_i=|\{j \in [t]: \z^{(j)}_i =1\}|$. For any $\y \in \{0,1\}^n$, 

\begin{equation*}
    \sum_{j \in [t]}  d_H(\Phi(\y), \z^{(j)}) = \sum_{j \in [t]} \left( \sum_{i \in [m]: \z^{(j)}_i=0} C_i(\y)+\sum_{i \in [m]: \z^{(j)}_i=1} \widebar{C}_i(\y) \right) =  \sum_{i=1}^m w_i C_i(\y) +\sum_{i=1}^m \widebar{w}_i \widebar{C}_i(\y) = |\Psi(\y)| \; .
\end{equation*}

Hence, $\x_{\text{OPT}}$ satisfies $|\Psi(\x_{\text{OPT}})|=d^\star$, and $|\Phi(\x_{\text{OPT}})|\geq c\cdot \mathsf{CSP}_{\text{OPT}}$. This implies that a $(c_1, c_2)$-bi-criteria approximation algorithm for bi-critera \textsc{Max-$k$-CSP} with $\Phi,\Psi, c \cdot \mathsf{CSP}_{\text{OPT}},d^\star$ as input will return $\x \in \{0,1\}^n$ such that $|\Psi(\x)|\geq c_1 \cdot d^\star$, and $|\Phi(\x)|\geq c_2 \cdot (c \cdot \mathsf{CSP}_{\text{OPT}})$. Iterating over all choices of $\mathsf{CSP}_{\text{OPT}},d^\star$ completes the proof of \Cref{thm:max-sum-CSP}.
\subsection{Reduction for Max-Min constraint diversity}

In this section, we design a reduction from Max-Min constraint diversity to $(s+1)$-criteria $k$-CSP maximization. 

\begin{theorem}\label{thm:max_min_clause_diversity}
Suppose there exists an algorithm running in $T(n, s)$ time that computes a $(c_1,\dots, c_1, c_2)$-Pareto approximation for the $(s+1)$-criteria Max-$k$-CSP problem on unweighted instances. Then, for any given $k$-CSP instance $\Phi$, there exists an algorithm for the Max-Min constraint diversity problem that outputs $s$ solutions achieving a $\left(\frac{c_1}{2}, c_2\right)$-approximation. Furthermore, this algorithm runs in $O(s \cdot T(n, s))$ time.
\end{theorem}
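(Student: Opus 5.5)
Throughout, $\mathsf{D}_{\text{OPT}}$ is the optimal max-min constraint diversity over $s$-tuples whose assignments each satisfy at least $c\cdot\mathsf{CSP}_{\text{OPT}}$ constraints. The plan is a greedy farthest-point procedure for the max-min objective, in the spirit of Gonzalez's analysis, with each greedy step implemented by one call to the $(s+1)$-criteria algorithm.

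\textbf{The greedy step.} Suppose we have already chosen $\x^{(1)},\dots,\x^{(t)}$ with $t<s$. We want a new $\x$ such that $d_H(\Phi(\x),\Phi(\x^{(j)}))\ge c_1 \mathsf{D}_{\text{OPT}}/2$ for every $j\le t$, and $|\Phi(\x)|\ge c_2 c\,\mathsf{CSP}_{\text{OPT}}$.

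For each $j$, define the unweighted CSP $\Psi_j$ on constraints $\{C_i : \Phi(\x^{(j)})_i=0\}\cup\{\widebar{C}_i:\Phi(\x^{(j)})_i=1\}$. As in the proof of \Cref{thm:max-sum-CSP}, $|\Psi_j(\y)| = d_H(\Phi(\y),\Phi(\x^{(j)}))$ for every $\y$. We pad to exactly $s+1$ criteria with $s-t$ dummy criteria (copies of $\Psi_1$ or empty instances with threshold $0$). We then call the algorithm on $\Psi_1,\dots,\Psi_t$ with thresholds $\mathsf{D}_{\text{OPT}}/2$, and on $\Phi$ with threshold $c\cdot\mathsf{CSP}_{\text{OPT}}$. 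We guess $\mathsf{D}_{\text{OPT}}\in[m]$ and $\mathsf{CSP}_{\text{OPT}}$ by enumeration, which costs only $\poly(n)$ and which I will not fuss over. The first point $\x^{(1)}$ is obtained the same way with only the $\Phi$ criterion active.

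\textbf{The main obstacle: feasibility of the thresholds.} The step needs some $\x^\star$ in the optimal tuple $\x_{\text{OPT}}^{(1)},\dots,\x_{\text{OPT}}^{(s)}$ with $d_H(\Phi(\x^\star),\Phi(\x^{(j)}))\ge \mathsf{D}_{\text{OPT}}/2$ for all $j\le t$. The argument is pigeonhole with the triangle inequality in $\{0,1\}^m$. Suppose every optimal point were within distance $<\mathsf{D}_{\text{OPT}}/2$ of some chosen $\x^{(j)}$. Since $t<s$, two optimal points would share the same $\x^{(j)}$. Their constraint vectors would then be at distance $<\mathsf{D}_{\text{OPT}}$, contradicting optimality. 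So such an $\x^\star$ exists, and it also satisfies the $\Phi$ threshold. The Pareto guarantee then returns $\x$ with distance at least $c_1\mathsf{D}_{\text{OPT}}/2$ to each previous point and $|\Phi(\x)|\ge c_2 c\,\mathsf{CSP}_{\text{OPT}}$.

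\textbf{Conclusion.} After $s$ iterations, every pair $j<t$ satisfies $d_H(\Phi(\x^{(t)}),\Phi(\x^{(j)}))\ge c_1\mathsf{D}_{\text{OPT}}/2$, because this was enforced when $\x^{(t)}$ was inserted. This gives the $(c_1/2,c_2)$-approximation with $s$ oracle calls, i.e.\ $O(s\cdot T(n,s))$ time up to the guessing overhead. One point to watch: the problem instance must still have its weak guarantee $c_2$ met even if the oracle's output is not among the optimal points. That holds, since the $\Phi$ criterion is included in every call.
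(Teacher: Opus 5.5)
Your proposal is correct and follows the paper's route: Gonzalez farthest-point insertion, with each insertion realized by one call to the $(s+1)$-criteria oracle on the instances $\Psi_j$ (whose value is exactly $d_H(\Phi(\y),\Phi(\x^{(j)}))$) together with $\Phi$ as the last criterion. The differences are cosmetic. You use the fixed threshold $\mathsf{D}_{\text{OPT}}/2$ and spell out the pigeonhole/triangle-inequality step that the paper delegates to Gonzalez, and you explicitly pad to $s+1$ criteria, which the paper leaves implicit.
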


\label{app:proof:max_min_clause_diversity}
As in the proof of~\Cref{thm:max-sum-CSP}, we define an intermediate problem-- Farthest-Point-Min-$k$-CSP. It takes as input $t$ elements $\z^{(1)}, \dots, \z^{(t)} \in \{0,1\}^m$. Suppose we are promised that there exists $\x_{\text{OPT}} \in \{0,1\}^n$ such that $ d_H(\z^{(i)}, \Phi(\x_{\text{OPT}})) \geq d^{\star}$ for all $i \in [t]$ and $|\Phi(\x_{\text{OPT}})|\geq c\cdot\mathsf{CSP}_{\text{OPT}}$. The goal is to output $\x \in \{0,1\}^n$ such that $\min_{i=1}^t d_H(\z^{(i)}, \Phi(\x)) \geq c_1 \cdot d^{\star}$ and $|\Phi(\x_{\text{OPT}})| \geq c_2 \cdot c \cdot \mathsf{CSP}_{\text{OPT}}$. An algorithm for this problem implies the existence of an algorithm running in time $O(s \cdot T(n,s))$ for Max-Min constraint diversity with approximation factors of $\left(\frac{1}{2}c_1,c_2\right)$:

    \begin{lemma} \label{lem:gonzales-max-min}
        Suppose that for each $t \leq s$, there exists a $T(n,t)$-time algorithm for Farthest-Point-Min-$k$-CSP with approximation factors $(c_1, c_2)$. Then, there exists  $O(s \cdot T(n,s))$ for Max-Min constraint diversity with approximation factors of $c_1/2, c_2$.
    \end{lemma}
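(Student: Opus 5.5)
This is Gonzalez's greedy farthest-point insertion, analyzed the standard way for max-min dispersion. The one adaptation is that the oracle only promises $c_1\cdot d^\star$ when a feasible point at distance $\ge d^\star$ exists.

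\textbf{Algorithm.} Start with an arbitrary $\x^{(1)}$ satisfying $|\Phi(\x^{(1)})|\geq c_2\cdot c\cdot \mathsf{CSP}_{\text{OPT}}$. To get one, call the oracle with $t=0$ points, or with a dummy $\z$ and $d^\star=0$; the second criterion of the oracle supplies exactly this guarantee. Given $\x^{(1)},\dots,\x^{(t)}$, call the Farthest-Point-Min-$k$-CSP algorithm with $\z^{(i)}=\Phi(\x^{(i)})$ and target $d^\star=\mathsf{D}_{\text{OPT}}/2$, which gives $\x^{(t+1)}$. We do not know $\mathsf{D}_{\text{OPT}}$, but it lies in $\{0,\dots,m\}$, so we either guess it (a polynomial overhead, or a binary search) or run the oracle for each candidate $d^\star$ and keep the largest one for which the output is certified. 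Every output meets the constraint guarantee $c_2\cdot c\cdot\mathsf{CSP}_{\text{OPT}}$ by the oracle's second criterion. There are $s$ calls, each costing $T(n,t)\le T(n,s)$, which gives the stated running time up to the guessing overhead.

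\textbf{Key step (promise holds at every stage).} Let $\x^{(1)}_{\text{OPT}},\dots,\x^{(s)}_{\text{OPT}}$ be an optimal solution: each is $c$-satisfying, and their constraint vectors are pairwise at distance $\ge \mathsf{D}_{\text{OPT}}$. Fix $t<s$ and the current points $\z^{(1)},\dots,\z^{(t)}$. Assign each $\z^{(i)}$ to an optimal vector within distance $<\mathsf{D}_{\text{OPT}}/2$, if one exists. By the triangle inequality, two distinct optimal vectors cannot both be within $<\mathsf{D}_{\text{OPT}}/2$ of the same $\z^{(i)}$, since otherwise they would be at distance $<\mathsf{D}_{\text{OPT}}$. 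So each $\z^{(i)}$ "blocks" at most one optimal vector. With $t<s$ points, some $\x^{(j)}_{\text{OPT}}$ is unblocked, meaning $d_H(\z^{(i)},\Phi(\x^{(j)}_{\text{OPT}}))\ge \mathsf{D}_{\text{OPT}}/2$ for all $i\le t$, and it is $c$-satisfying. The promise of Farthest-Point-Min-$k$-CSP therefore holds with $d^\star=\mathsf{D}_{\text{OPT}}/2$. The oracle then returns $\x^{(t+1)}$ with $\min_{i\le t} d_H(\z^{(i)},\Phi(\x^{(t+1)}))\ge c_1\mathsf{D}_{\text{OPT}}/2$.

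\textbf{Conclusion and obstacle.} Every pair $i<j$ is separated when $\x^{(j)}$ is inserted. So $\sigma_{\min}\ge \frac{c_1}{2}\mathsf{D}_{\text{OPT}}$ and every solution satisfies the $c_2$ guarantee, which is the $(c_1/2,c_2)$-approximation. The only delicate point is this pigeonhole/triangle-inequality step, together with handling the unknown $\mathsf{D}_{\text{OPT}}$. For the latter, the oracle must be trusted only when its promise holds. Running it for all $m+1$ targets and taking the largest one whose output actually achieves distance $\ge c_1 d^\star$ (which can be checked directly) gives an output at least as good as the true target's. This keeps the argument valid at the cost of a $\poly(m)$ factor, which I would absorb into the stated bound.
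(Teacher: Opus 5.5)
Your argument is correct and is the Gonzalez farthest-point-insertion analysis that the paper invokes in one line; your pigeonhole/triangle-inequality step, which shows the promise holds with $d^\star=\lceil \mathsf{D}_{\text{OPT}}/2\rceil$ at every stage, supplies exactly the content the paper leaves implicit. The guessing of $\mathsf{D}_{\text{OPT}}$ is unnecessary: Farthest-Point-Min-$k$-CSP as defined receives only $\z^{(1)},\dots,\z^{(t)}$ and must achieve $c_1 d^\star$ for any $d^\star$ satisfying the promise, which gives exactly $O(s\cdot T(n,s))$; and if you did keep the guessing, certifying only the distance would not suffice, because calls whose promise fails come with no constraint-satisfaction guarantee either.
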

    \begin{proof}
        This follows from the farthest point insertion algorithm due to Gonzalez~\cite{gonzalez1985clustering}. 
    \end{proof}

    \medskip \noindent
    \textbf{Proof of~\Cref{thm:max_min_clause_diversity}.} Consider the problem of solving the Farthest-Point-Min-$k$-CSP for $t$ elements $\mathbf{z}^{(1)}, \dots, \mathbf{z}^{(t)} \in \{0,1\}^m$ (where $t \leq s$). Let $\x_{\text{OPT}} \in \{0,1\}^n$ be an assignment such that $\min_{i=1}^t d_H(\mathbf{z}^{(i)}, \Phi(\x_{\text{OPT}})) = d^\star$ and $|\Phi(\x_{OPT})| \geq c \cdot \mathsf{CSP}_{\text{OPT}}$. For each $j \in [t]$, we define the CSP $\Psi^{(j)}$ as follows. For each $i \in [m]$, we include the clause $C_i$ if $\z_i^{(j)}=0$ and $\widebar{C}_i$ if $\z_i^{(j)}=1$. This implies that $|\Psi^{(j)}(\x_{\text{OPT}})| \geq d^\star$ for each $j \in [t]$. Hence, a $(c_1, \dots, c_{\ell})$-Pareto approximation solver for $t+1$-criteria \textsc{Max-$k$-CSP} with inputs $\Psi^{(1)}, \dots, \Psi^{(t)}, \Phi, d^\star, \dots, d^\star, c \cdot \mathsf{CSP}_{\text{OPT}}$ with approximation factors $(c_1, \dots, c_1, c_2)$ will output $\x \in \{0,1\}^n$ such that $d_H(\Phi(\x), \z^{(j)}) \geq c_1 \cdot d^{\star}$ and $|\Phi(\x)| \geq c_2 \cdot (c\cdot\mathsf{CSP}_{\text{OPT}})$. 

\begin{remark} \label{rem:same-CSP-min}
    The reduction in \Cref{thm:max_min_clause_diversity} maps a Max-Min constraint diversity instance $\Phi$ to an $s+1$-criteria Max-$k$-CSP with the following structure: All the CSPs have constraints that are either in the original CSP $\Phi$, or are negations of constraints in $\Phi$. 
\end{remark}

As earlier, the results of~\cite{bhangale2015simultaneous} imply the following corollary. 

\begin{corollary}\label{cor:max_min_constant}
Let $\mathcal{F}$ be any family of $k$-CSPs. There exists a polynomial-time, $(1/2^{k}-\eps, 1/2^{k-1}-\eps)$-approximation algorithm for Max-Min constraint diversity on instances from $\mathcal{F}$, for any $s \leq O(\log^{1/4}n)$.
\end{corollary}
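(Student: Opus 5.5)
The plan is to combine \Cref{thm:max_min_clause_diversity} with the multi-criteria algorithm of \Cref{thm:bks15}, in the same way \Cref{cor:max_sum_constant} combined \Cref{thm:max-sum-CSP} with it.

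\textbf{Step 1 (reduction).} Given $\Phi$ from $\mathcal{F}$ and $s \le O(\log^{1/4} n)$, apply \Cref{thm:max_min_clause_diversity}. Each call in the Gonzalez procedure of \Cref{lem:gonzales-max-min} produces an instance of $(t+1)$-criteria Max-$k$-CSP with $t+1 \le s+1$ criteria. By \Cref{rem:same-CSP-min}, every $\Psi^{(j)}$ consists of constraints of $\Phi$ or their negations. A negated $k$-ary constraint is still a predicate on at most $k$ variables, and all instances are unweighted, so these are legitimate inputs to a general Max-$k$-CSP solver. Membership of the instance in $\mathcal{F}$ is therefore irrelevant. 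The guessed thresholds $d^\star$ and $c\cdot\mathsf{CSP}_{\text{OPT}}$ range over at most $(m+1)^2$ values, which costs only a polynomial factor.

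\textbf{Step 2 (solver and approximation factors).} Invoke \Cref{thm:bks15} with $\ell = s+1$ and accuracy $\eps' = 2\eps$. This gives a $(1/2^{k-1}-2\eps, \dots, 1/2^{k-1}-2\eps)$-Pareto approximation, so $c_1 = c_2 = 1/2^{k-1}-2\eps$. Since every solver guarantee is at least $1/2^{k-1}-2\eps \ge 1/2^{k-1}-\eps$ up to the choice of $\eps$, we may equally take $c_2 = 1/2^{k-1}-\eps$ after rescaling $\eps$. \Cref{thm:max_min_clause_diversity} then yields diversity factor $c_1/2 = 1/2^k - \eps$ and satisfaction factor $c_2$, which together give the claimed $(1/2^k-\eps, 1/2^{k-1}-\eps)$ guarantee.

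\textbf{Step 3 (running time).} This is the only step needing care. The running time of \Cref{thm:bks15} is $2^{O(\ell^4/(\eps^2\log(\ell/\eps)))}\poly(n)$. With $\ell = s+1 = O(\log^{1/4} n)$ we have $\ell^4 = O(\log n)$. For constant $\eps$, the exponent is $O(\log n)$, so the running time is $\poly(n)$; the denominator $\log(\ell/\eps) \ge \Omega(1)$ only helps. Multiplying by $O(s)$ calls from \Cref{thm:max_min_clause_diversity} and the polynomial number of threshold guesses keeps the total polynomial. The main obstacle is simply this exponent check, which is exactly why $s$ is restricted to $O(\log^{1/4} n)$ here, whereas \Cref{cor:max_sum_constant} only ever needs $\ell = 2$.
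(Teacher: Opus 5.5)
Your proposal matches the paper's argument: combine \Cref{thm:max_min_clause_diversity} with \Cref{thm:bks15} at $\ell = s+1$, where $s = O(\log^{1/4} n)$ gives $\ell^4 = O(\log n)$, so $2^{O(\ell^4/(\eps^2\log(\ell/\eps)))} = n^{O(1/\eps^2)}$ is polynomial for constant $\eps$. One small fix in Step 2: the inequality $1/2^{k-1}-2\eps \ge 1/2^{k-1}-\eps$ is false, but you don't need it, since calling the solver with accuracy $\eps$ itself gives $c_2 = 1/2^{k-1}-\eps$ and $c_1/2 = 1/2^{k}-\eps/2 \ge 1/2^{k}-\eps$.
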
 

\medskip \noindent
\textbf{Applications of our result for specific CSPs.} As an example, we now construct an algorithm for simultaneously maximizing \textsc{Max-$k$-XOR} instances. We note that this predicate family is closed under negations, and hence, the instances of $s+1$-criteria Max-$k$-CSP instances we reduce to are all also \textsc{Max-$k$-XOR} instances.
\begin{corollary} \label{cor:xor-maxmin}
    For every constant $\eps>0$, there exists a randomized, $\poly(n)$ time $\left(\frac{1}{2}-\eps,\frac{1}{2}-\eps \right)$-approximation algorithm for the Max-Min constraint diversity problem with the inputs being unweighted \textsc{Max-$k$-XOR} CSPs and $s=\poly(n)$. 
\end{corollary}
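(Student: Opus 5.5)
Via \Cref{thm:max_min_clause_diversity}, the task reduces to giving a randomized $\poly(n)$-time $(\frac12-\eps,\dots,\frac12-\eps)$-Pareto approximation for $(s+1)$-criteria unweighted \textsc{Max-$k$-XOR}, with $s=\poly(n)$. By \Cref{rem:same-CSP-min}, every instance $\Psi^{(1)},\dots,\Psi^{(t)},\Phi$ produced by the reduction has constraints that are constraints of $\Phi$ or their negations. The negation of a $k$-XOR constraint $\bigoplus_{j\in S}x_j=b$ is $\bigoplus_{j\in S}x_j=1-b$, which is again a $k$-XOR constraint. So all $t+1\le s+1$ instances are unweighted \textsc{Max-$k$-XOR} instances on the same variable set, each with at most $m$ constraints.

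The Pareto algorithm samples a uniformly random $\x\in\{0,1\}^n$. Each $k$-XOR constraint (with $k\ge1$ distinct variables) is satisfied with probability exactly $1/2$, so for every instance $\Psi$ we have $\E|\Psi(\x)|=m_\Psi/2\ge \mathsf{OPT}_\Psi/2$, where $m_\Psi$ is the number of its constraints. We need all $s+1$ bounds to hold simultaneously with good probability, which calls for concentration. Two constraints on different variable sets $S\ne S'$ are pairwise independent, since the XOR over $S\triangle S'$ is uniform. Two constraints on the same set $S$ are either identical or complementary. Grouping constraints by their variable set $S$, write $|\Psi(\x)|=\sum_S Y_S$, where $Y_S$ is determined by the single uniform bit $\bigoplus_{j\in S}x_j$ and takes two values $a_S,b_S$ with $a_S+b_S=n_S$, the number of constraints on $S$. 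The $Y_S$ for distinct $S$ are pairwise independent, so $\mathrm{Var}|\Psi(\x)|=\sum_S (a_S-b_S)^2/4$. This is at most $\sum_S n_S^2/4$ but can be as large as $m^2/4$ with heavy repetition, so Chebyshev does not give concentration in general.

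This variance is the main obstacle, and I would sidestep it with a per-instance averaging argument. Since $|\Psi(\x)|\le m_\Psi$ and $\E|\Psi(\x)|=m_\Psi/2$, Markov's inequality applied to $m_\Psi-|\Psi(\x)|$ gives only $\Pr[|\Psi(\x)|\ge(\frac12-\eps)m_\Psi]\ge \eps/(\frac12+\eps)$, which is not enough for a union bound over $s+1$ instances. Instead I would use symmetry: $\x$ and $\x\oplus\mathbf 1$ are equally likely, and when $k$ is odd, flipping all bits complements every constraint, so $|\Psi(\x)|+|\Psi(\x\oplus\mathbf1)|=m_\Psi$. This pairing alone still does not yield a simultaneous guarantee. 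I would therefore switch to a derandomized \emph{conditional expectations} approach on a potential function. Take the potential $\sum_\Psi \exp(-\lambda(|\Psi(\x)|-m_\Psi/2)/m_\Psi)$, or more simply use pessimistic estimators via method-of-conditional-expectations over the $s+1$ criteria. Then fix variables one at a time, keeping $\sum_\Psi \Pr[\text{$\Psi$ currently fails}]$-type estimators below $1$. Conditional probabilities of XOR constraints are trivially computable (exactly $1/2$ until the last variable of the constraint is fixed).

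If that fails, my fallback is the structure of the reduction. Each $\Psi^{(j)}$ consists of exactly one of $C_i,\widebar C_i$ for every $i$, so $m_{\Psi^{(j)}}=m$. I would then aim for a guarantee of $(\frac12-\eps)\mathsf{OPT}$ only when $\mathsf{OPT}$ is large relative to the standard deviation, and handle repeated constraints by merging them into weighted constraints and arguing per distinct set $S$. I expect this concentration or simultaneity step to be where the real work lies. Everything else is a direct substitution into \Cref{thm:max_min_clause_diversity}: it gives factor $\frac{c_1}{2}$, and with $c_1=\frac12-\eps$ from the Pareto step this yields the stated $(\frac12-\eps,\frac12-\eps)$ only if the stated first coordinate already absorbs the Gonzalez factor via the definition of $c_1$. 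I would double-check this constant bookkeeping before writing the proof.
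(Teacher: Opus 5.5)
\textbf{Gap: the simultaneous concentration step.} You never get a tail bound strong enough to make one random assignment good for all $s+1$ instances at once, and none of the tools you list supplies one.
\begin{itemize}
\item Even when all variable sets are distinct, so that $\mathsf{Var}|\Psi(\x)|=m/4$, Chebyshev gives per-instance failure probability about $1/(4\eps^2 m)$. That is useless for a union bound over $s+1=\poly(n)$ instances unless $m\gg s$.
\item Markov and the $\x\mapsto\x\oplus\mathbf{1}$ pairing are weaker still.
\item Conditional expectations needs an estimator whose initial value $\sum_\Psi\Pr[\Psi\text{ fails}]$ is below $1$. That is exactly the union-bound statement you are missing. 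Also, the conditional failure probability of a threshold event for a sum of dependent parities is not easy to compute just because single XOR constraints are.
\end{itemize}

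\textbf{The missing idea is hypercontractivity, which is the paper's route} (\Cref{lem:bonami}, \Cref{lem:xor-conc}). With $z=(-1)^{\x}$, the quantity $2|\Psi(\x)|-m_\Psi=P(z)=\sum_i(-1)^{b_i}\prod_{j\in S_i}z_j$ is a mean-zero multilinear polynomial of degree at most $k$. The Bonami--Beckner inequality $\|P\|_q\le(q-1)^{k/2}\|P\|_2$ then gives $\Pr[P\le -t\sqrt{\mathsf{Var}[P]}]\le\exp(-\frac{k}{2e}t^{2/k})$ for $t\ge(2e)^{k/2}$. Take $t=\Theta_k(\log^{k/2}s)$, so each failure probability is at most $1/(3(s+1))$. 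A union bound shows that a single uniform $\x$ falls short of $m/2$ by only $O(t\sqrt{m})\le\eps m$ in every instance. This needs $m$ above a threshold polylogarithmic in $s$, a regime restriction the paper leaves implicit.

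\textbf{Repeated constraints.} Your variance computation is correct. It exposes an assumption the paper makes without comment when it asserts $\mathsf{Var}[P]=m$, which holds only for distinct variable sets. The fix is an input assumption, not a cleverer algorithm. If $\Phi$ has no repeated constraint, each set $S$ carries at most two constraints of $\Phi$, hence at most two of each $\Psi^{(j)}$. Then $\mathsf{Var}[P_{\Psi^{(j)}}]=\sum_S\widehat{P}(S)^2\le 2m$ and the tail bound applies. With unbounded repetition the $\Psi^{(j)}$ behave like weighted instances: for example, $m$ copies of one constraint give $|\Psi(\x)|\in\{0,m\}$. Random-point arguments, derandomized or not, then break down.

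\textbf{The factor of two.} Your bookkeeping worry is justified. \Cref{thm:max_min_clause_diversity} halves $c_1$, so routing the Pareto step through it gives only $(\frac{1}{4}-\frac{\eps}{2},\frac{1}{2}-\eps)$. Nothing in the definition of $c_1$ absorbs that factor, and the paper's write-up does not address this either. To reach the stated first coordinate, bypass Gonzalez:
\begin{itemize}
\item Draw $\x^{(1)},\dots,\x^{(s)}$ independently and uniformly.
\item For $i\ne j$, $\x^{(i)}\oplus\x^{(j)}$ is uniform.
\item $d_H(\Phi(\x^{(i)}),\Phi(\x^{(j)}))$ equals the number of constraints $\bigoplus_{p\in S_\ell}y_p=1$ satisfied at $y=\x^{(i)}\oplus\x^{(j)}$, which is again a $k$-XOR count.
\item The same tail bound, with a union bound over $\binom{s}{2}+s$ events, gives pairwise constraint distance at least $(\frac{1}{2}-\eps)m\ge(\frac{1}{2}-\eps)\mathsf{D}_{\mathrm{OPT}}$ and $|\Phi(\x^{(i)})|\ge(\frac{1}{2}-\eps)m$ for every $i$.
\end{itemize}
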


We start by noting the following concentration bound. 
\label{app:proof:cor:xor-maxmin}
\begin{lemma} \label{lem:bonami}
    Let $\Phi$ be an unweighted \textsc{Max-$k$-XOR} instance on $n$ variables and $m$ clauses. Let $\x \sim \{0,1\}^n$ be a uniformly random assignment. Then,

    $$\Pr_{\x \sim \{0,1\}^n} [ |\Phi(\x)| \leq \frac{m}{2} -t\sqrt{m} ] \leq \exp \left(-\frac{k 2^{2/k}}{2e}t^{2/k}\right) \; .$$
\end{lemma}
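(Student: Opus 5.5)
We write $|\Phi(\x)| - \frac{m}{2}$ as a degree-$k$ multilinear polynomial in $\pm 1$ variables and then apply hypercontractivity (Bonami's lemma) together with Markov's inequality.

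\textbf{Fourier expansion.} Set $y_i = (-1)^{\x_i}$, so the $y_i$ are independent uniform $\pm1$ variables. A constraint $C_j$ asks that $\bigoplus_{i \in S_j} x_i = b_j$ with $|S_j| = k$. Its indicator is
$$C_j(\x) = \tfrac12 + \tfrac12 (-1)^{b_j}\prod_{i\in S_j} y_i .$$
Hence
$$|\Phi(\x)| - \tfrac{m}{2} = \tfrac12 \sum_{j=1}^m (-1)^{b_j} \chi_{S_j}(y) =: f(y),$$
a polynomial that is homogeneous of degree exactly $k$. Repeated constraints on the same set $S_j$ simply combine coefficients. By Parseval,
$$\|f\|_2^2 = \tfrac14 \sum_{S}\Bigl(\sum_{j: S_j = S} (-1)^{b_j}\Bigr)^2 .$$
This is exactly $m/4$ when the sets $S_j$ are distinct. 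If repeats are allowed, the bound $\|f\|_2^2 \le m/4$ is not automatic. I would assume distinct constraints here, as is standard for unweighted instances, or note that $m$ can be replaced by the true variance. This is the one point I would double-check against what the paper intends, in particular its remark that the variance is ``exactly $m$''.

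\textbf{Hypercontractivity.} For a degree-$k$ polynomial and any $q \ge 2$, the Bonami lemma gives
$$\|f\|_q \le (q-1)^{k/2}\|f\|_2 \le (q-1)^{k/2}\sqrt{m}/2 .$$
Markov's inequality applied to $|f|^q$ then yields
$$\Pr\bigl[f \le -t\sqrt m\bigr] \le \Pr\bigl[|f| \ge t\sqrt m\bigr] \le \left(\frac{(q-1)^{k/2}}{2t}\right)^q .$$

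\textbf{Optimizing $q$.} The main obstacle is matching the stated constant. I would choose $q$ so that
$$(q-1)^{k/2} = \frac{2t}{e^{k/2}}, \qquad\text{i.e.}\qquad q - 1 = \frac{(2t)^{2/k}}{e} .$$
The base of the bound then equals $e^{-k/2}$, and
$$\Pr\bigl[f \le -t\sqrt m\bigr] \le \exp\!\left(-\frac{k}{2}\,q\right) \le \exp\!\left(-\frac{k}{2}(q-1)\right) = \exp\!\left(-\frac{k\,2^{2/k}}{2e}\, t^{2/k}\right),$$
which is precisely the claimed bound.

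This choice requires $q \ge 2$, i.e. $(2t)^{2/k} \ge e$. When $t$ is smaller than this, the right-hand side of the claim is at least $e^{-k/2}$. In that range I would either note that the claim is trivial or weak enough for the applications, or fall back on Chebyshev's inequality to cover it. I expect this small-$t$ case to be the only remaining detail in the proof.
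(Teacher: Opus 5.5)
Your computation is the same as the paper's. The paper writes $|\Phi(x)|=(P(z)+m)/2$ with $\mathrm{Var}[P]=m$ (your $f$ is $P/2$, which is why you get $m/4$), and cites O'Donnell's Theorem~9.23, whose proof is exactly your hypercontractivity-plus-Markov argument. For $2t\ge e^{k/2}$ your derivation is correct. Taking $q-1=(2t)^{2/k}/e$ rather than $q=(2t)^{2/k}/e$ even gives a slightly wider range than the cited theorem, which in this normalization needs $2t\ge(2e)^{k/2}$.

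The gap is your final step. The small-$t$ range cannot be closed by Chebyshev or by a triviality argument, because the inequality is false there. Take $k=4$, $m=1$, the single constraint $x_1\oplus x_2\oplus x_3\oplus x_4=0$, and $t=1/2$. The left side is $\Pr[|\Phi(x)|\le 0]=1/2$. The right side is $\exp\bigl(-\tfrac{4\cdot 2^{1/2}}{2e}\cdot 2^{-1/2}\bigr)=e^{-2/e}\approx 0.479$. More generally, at $t=1/2$ the right side is $e^{-k/(2e)}$, which tends to $0$ as $k$ grows, while the left side stays $1/2$. Chebyshev gives only $1/(4t^2)$, which is vacuous for $t\le 1/2$. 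So this case is not routine: the statement needs the hypothesis $2t\ge e^{k/2}$, and the paper's one-line proof silently drops the range condition of the theorem it cites. The restriction is harmless for \Cref{lem:xor-conc}, since for fixed $k$ one can always take $t$ at least this constant.

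Your caution about the variance is also justified, and the paper's ``variance is $m$'' silently assumes it. Repeated constraints break the lemma even in the large-$t$ range. With $m$ identical copies of one constraint, $\Pr[|\Phi(x)|=0]=1/2$. Take $t=\sqrt m/2$, which satisfies $2t\ge e^{k/2}$ once $m\ge e^k$. Then the right side is $\exp\bigl(-\tfrac{k}{2e}m^{1/k}\bigr)$, which tends to $0$ as $m\to\infty$.

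On the other hand, your Markov step only uses $\|f\|_2^2\le m/4$, and this holds as soon as no constraint appears twice verbatim. Each support then carries either a single constraint or two constraints with opposite parity bits, and the latter cancel in $\sum_{j:S_j=S}(-1)^{b_j}$. With the hypotheses ``no duplicated constraints'' and ``$2t\ge e^{k/2}$'' added, your argument is a complete proof.
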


\begin{proof}
    Recall that each constraint $i \in [m]$ is of the form:
\[ \bigoplus_{j \in S_i} x_j = b_i \; , \]
where $|S_i| \leq k$ and $b_i \in \mathbb{F}_2$. For an assignment $x \in \{0,1\}^n$, we define the corresponding $z \in \{\pm 1\}^n$ as $z_i = (-1)^{x_i}$. It is easy to see that $(-1)^{b_i} \prod_{j \in S_i} z_i=1$ if $\bigoplus_{j \in S_i} x_j = b_i$ and $(-1)$ otherwise. Hence, we can define a degree-$k$ multilinear polynomial $P(z)=\sum_{i=1}^m(-1)^{b_i} \prod_{j \in S_i} z_i$ such that $(P(z)+ m)/2$ is equal to $|\Phi(x)|$. Hence, $\Pr_{\x \sim \{0,1\}^n}[|\Phi(\x)| \leq \frac{m}{2} -t\sqrt{m}]=\Pr_{\z \sim \{\pm 1\}^n}[ P(\z) \leq -2t \sqrt{m}]$. We note that the variance of the random variable $P(\z)$ is $m$. The concentration bound that this probability is at most $\exp\left(-\frac{k}{2e} (2t)^{2/k}\right)$ follows from the Bonami-Beckner hypercontractivity theorem. We refer the reader to~\cite[Theorem 9.23]{odonnell2014analysis} for more details.
\end{proof}

\begin{lemma} \label{lem:xor-conc}
    Let $\Phi_1, \cdots, \Phi_s$ be instances of \textsc{Max-$k$-XOR} on $n$ variables and $m$ constraints. There exists an algorithm running in time $\poly(n)$ that, with probability at least $2/3$, outputs an assignment $\x \in \{0,1\}^n$ such that $|\Phi^{(i)}(\x)| \geq \frac{m}{2}\left(1-O\left(\frac{\log^ks}{\sqrt{m}}  \right)\right)$
\end{lemma}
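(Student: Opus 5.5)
The plan is to sample a uniformly random assignment $\x \sim \{0,1\}^n$ and combine \Cref{lem:bonami} with a union bound over the $s$ instances. Random sampling takes $O(n)$ time, and checking the output needs only $O(sm)$ time, so the running time is clearly polynomial. All the work is in choosing the deviation parameter $t$ in \Cref{lem:bonami} so that the failure probability per instance is below $1/(3s)$.

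First, I fix $i \in [s]$ and apply \Cref{lem:bonami} to $\Phi_i$, which has $m$ constraints. This gives
\[
\Pr\left[|\Phi_i(\x)| \le \tfrac{m}{2} - t\sqrt{m}\right] \le \exp\left(-\tfrac{k2^{2/k}}{2e}\, t^{2/k}\right).
\]
Next I choose $t = C_k (\ln(3s))^{k/2}$ for a suitable constant $C_k$ depending only on $k$. Since $k$ is a constant, this makes the exponent at least $\ln(3s)$, so each bad event has probability at most $1/(3s)$. A union bound over the $s$ instances then shows that, with probability at least $2/3$, every $i$ satisfies
\[
|\Phi_i(\x)| \ge \tfrac{m}{2} - C_k(\ln 3s)^{k/2}\sqrt{m} = \tfrac{m}{2}\left(1 - \tfrac{2C_k(\ln 3s)^{k/2}}{\sqrt{m}}\right).
\]

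The remaining step is to match the stated form. Because $(\log s)^{k/2} \le (\log s)^k$ once $s \ge 3$, and $C_k$ is absorbed into the $O(\cdot)$ for constant $k$, the bound above is at least $\frac{m}{2}\bigl(1 - O(\log^k s/\sqrt{m})\bigr)$, as claimed; the statement is in fact slightly weaker than what the argument gives. Small $s$ is handled by replacing $\log s$ with $\log(3s)$, which is what the constant hidden in the $O(\cdot)$ accounts for.

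I do not expect a real obstacle. The only subtle point is that \Cref{lem:bonami} relies on the variance of $P(\z)$ being exactly $m$. That requires the $m$ parity constraints to have pairwise distinct (support, parity) structure so that the characters are orthogonal. If two constraints share the same set $S_i$, their monomials coincide and the variance can be larger. I would handle this the way the lemma implicitly does: either assume distinct supports, or note that the hypercontractive tail bound only needs $\|P\|_2$, and then bound the deviation in terms of $\|P\|_2$ rather than $\sqrt{m}$. Since the lemma is stated as given, I will invoke it as a black box. If desired, repeating the sampling and checking all $s$ instances deterministically boosts the success probability.
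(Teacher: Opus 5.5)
Your proposal is correct and follows the same route as the paper: output a uniformly random assignment, apply \Cref{lem:bonami} to each $\Phi_i$, and take a union bound over the $s$ instances. Your choice $t = C_k(\ln 3s)^{k/2}$ is tighter than the paper's sketch, which takes the deviation to be $c\log^k s\sqrt{m}$; it gives a loss of order $(\log s)^{k/2}/\sqrt{m}$, which implies the stated $\log^k s/\sqrt{m}$ bound.
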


\begin{proof}
    This can be proved by choosing $t= c \cdot \log^ks \sqrt{m}$ for a sufficiently small constant $c$, applying \Cref{lem:bonami} and the union bound. 
\end{proof}

\Cref{lem:xor-conc} implies that the method of randomly sampling assignments in $\{0,1\}^n$ gives us an approximation algorithm for the Max-Min constraint diversity on unweighted \textsc{Max-$k$-XOR} instances, completing the proof of \Cref{cor:xor-maxmin}. 

\section{Variable Diversity in Max-$k$-CSPs}
In this section, we focus on Question 2 (variable diversity, Max-$k$-CSP). We divide our results into two cases: when $s=\text{poly}(n)$, and we require algorithms to run in $\text{poly}(n,s)$ time, and when $s=\exp(n)$, where we want an implicit representation of the $s$ assignments in $\text{poly}(n)$ time. In \Cref{sec:random-assts}, we demonstrate that for $s=\poly(n)$, the simple method of choosing random assignments to a given $k$-CSP can also be used to obtain variable diverse assignments to the given CSP while also simultaneously maximizing the number of satisfied constraints up to a constant approximation factor. Later, in \Cref{sec:conc-xor}, we address Question 2 for exponential regimes of $s$. For $k$-CSPs that satisfy some structural properties, we obtain linear codes of constant relative rate and constant relative distance, such that a $1-o(1)$ fraction of codewords satisfy at least $\frac{1}{2}-o(1)$ fraction of the constraints. Our code is optimal in the sense that it achieves the Gilbert-Varshamov bound.

\label{sec:vbl-diverse}
\subsection{Randomized algorithms for $s=\text{poly}(n)$}
\label{sec:random-assts}
Consider any CSP instance $\Phi$. It is very well known that the method of repeatedly sampling random assignments $\sim \{0,1\}^n$ is a constant factor approximation algorithm. For the case of Max-$k$-SAT, this yields a $(1-1/2^k)$-approximation algorithm, and for Max-$k$-XOR, a $1/2$-approximation algorithm. Now, we show that this algorithm can also be used to obtain $s$ assignments $\x_1, \dots, \x_s$ to $\Phi$ with the same approximation guarantee, while also satisfying $\sigma_{\text{min} } (\x_1, \dots, \x_s) \geq \frac{n}{2}(1-o(1))$. 

\begin{theorem} \label{thm:random-sampling}
    For all $s \leq \frac{1}{n^2}e^{n^{1/3}/4}$, there exists a randomized algorithm running in time $O(s n 2^k)$ that takes as input a $k$-CSP instance $\Phi$ and with probability at least $2/3$, outputs $\x^{(1)}, \dots, \x^{(s)} \in \{0,1\}^n$ such that $\min_{i \neq j} d_H(\x^{(i)}, \x^{(j)}) \geq \frac{n}{2}(1-o(1))$ and $|\Phi(\x^{(i)})| \geq (1-o(1)) \cdot \E_{\x \sim \{0,1\}^n} |\Phi(\x)|$ for each $i \in [s]$. For Max-$k$-SAT, an algorithm with the same guarantees exists that runs in time $O(sn)$. 
\end{theorem}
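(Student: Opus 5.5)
The plan is rejection sampling of uniformly random assignments. Accepted assignments get the CSP guarantee directly from the acceptance test. The diversity guarantee comes from a union bound over \emph{all} pairs of raw samples, which avoids any conditioning on acceptance.

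\textbf{Step 1 (acceptance probability).} Let $\mu = \E_{\x \sim \{0,1\}^n}|\Phi(\x)|$. Constraints that no assignment satisfies contribute $0$ to both $\mu$ and $|\Phi(\x)|$, so we may discard them. Every remaining constraint depends on at most $k$ variables and is satisfied with probability at least $2^{-k}$, so $\mu \ge m/2^k$.

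Since $0 \le |\Phi(\x)| \le m$, a reverse-Markov argument applied to $m - |\Phi(\x)|$ gives, for any $\eps>0$,
\[
\Pr\bigl[|\Phi(\x)| \ge (1-\eps)\mu\bigr] \;\ge\; \frac{\eps\mu}{m} \;\ge\; \frac{\eps}{2^k}.
\]
I cannot use concentration of $|\Phi(\x)|$ here. For a general CSP it need not concentrate; for example, all constraints could sit on the same $k$ variables. This reverse-Markov bound is the only robust tool available.

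For Max-$k$-SAT every clause is satisfied with probability at least $1/2$, so $\mu \ge m/2$. The acceptance probability is then at least $\eps/2$, with no $2^k$ factor, and this is the source of the improved $O(sn)$ running time.

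\textbf{Step 2 (algorithm).} Repeatedly draw independent uniform $\x \in \{0,1\}^n$ and evaluate $|\Phi(\x)|$. Keep $\x$ if $|\Phi(\x)| \ge (1-\eps)\mu$. Here $\mu$ is computed exactly by summing, over constraints, the fraction of the $\le 2^k$ local assignments that satisfy each one. Stop once $s$ assignments have been kept.

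Take $\eps = \eps(n) \to 0$ slowly. With $N = O(s 2^k/\eps)$ total draws, a Chernoff bound on the number of accepted samples shows that we keep at least $s$ assignments with probability at least $5/6$. This yields the $O(sn2^k)$ bound, up to the slowly growing $1/\eps$ factor and the per-sample evaluation cost, which I will absorb or charge as in the statement.

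\textbf{Step 3 (diversity).} For two independent uniform strings, $d_H(\x,\y)$ is a sum of $n$ independent Bernoulli$(1/2)$ variables. Hoeffding's inequality gives
\[
\Pr\bigl[d_H(\x,\y) < n/2 - n^{2/3}\bigr] \le e^{-2n^{1/3}}.
\]
Apply a union bound over all $\binom{N}{2}$ pairs of \emph{raw} samples, not just the accepted ones, so that the bias introduced by acceptance never enters. The failure probability is at most $N^2 e^{-2n^{1/3}}$.

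The hypothesis $s \le \frac{1}{n^2}e^{n^{1/3}/4}$ gives $s^2 \le e^{n^{1/3}/2}/n^4$. Hence $N^2 e^{-2n^{1/3}} = O\bigl((2^k/\eps)^2 e^{-3n^{1/3}/2}/n^4\bigr)$, which is below $1/6$ for constant $k$ and $\eps$ decaying slowly. Combining the two failure events gives success probability at least $2/3$, with $\min_{i\ne j} d_H \ge n/2 - n^{2/3} = \frac{n}{2}(1-o(1))$.

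\textbf{Main obstacle.} The delicate step is the one flagged in Step 1: $|\Phi(\x)|$ need not concentrate, so the CSP guarantee must come from reverse Markov plus rejection. The diversity argument must then survive the resulting selection bias, and taking the union bound over all raw draws is how I plan to handle this.

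The remaining bookkeeping is choosing $\eps$ so that $1/\eps$ does not spoil the stated running time, or else reading the $(1-o(1))$ loosely enough that a slowly decaying $\eps$ suffices.
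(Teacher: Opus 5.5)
Your proposal is correct, but it enforces diversity differently from the paper.

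The paper puts the distance requirement into the acceptance rule. A fresh uniform $\x$ is kept only if $|\Phi(\x)| \ge (1-\eps)\E|\Phi(\x)|$ \emph{and} $d_H(\x,\y) \ge \frac{n}{2}(1-n^{-1/3})$ for every $\y$ already on the list. The fresh draw is independent of the current list, so a Chernoff bound against each of the at most $s$ fixed list elements gives $\Pr[\neg E_2] \le s e^{-n^{1/3}/4} \le 1/n^2$. This is subtracted from the reverse-Markov bound $\Pr[E_1] \ge 1/(n2^k)$ to get per-step success probability $\Omega\bigl(\frac{1}{n2^k}\bigr)$. In that scheme the selection bias you worry about never enters: the distance guarantee on the output is deterministic, and only the running time is random.

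You instead skip the distance checks, saving $O(sn)$ comparison work per draw, and union-bound over all $\binom{N}{2}$ raw draws. The distance guarantee then becomes Monte Carlo, and you pay a factor $N^2$ instead of $s$; the hypothesis on $s$ absorbs this easily. Your route also avoids a condition the paper uses silently: $2^k$ must be smaller than $n$ by a constant factor for $\frac{1}{n2^k} - \frac{1}{n^2} = \Omega\bigl(\frac{1}{n2^k}\bigr)$ to hold. Discarding unsatisfiable constraints before claiming $\mu \ge m/2^k$ is likewise a small point the paper glosses over.

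The bookkeeping you leave open is resolved exactly as in the paper: take $\eps = 1/n$. The factor $n$ in $O(sn2^k)$ is precisely this $1/\eps$, since the bound counts sampling iterations, not bit operations. Your $N = O(s2^k/\eps)$ therefore becomes $O(sn2^k)$, and $O(sn)$ for Max-$k$-SAT. Your union bound still closes, because $N^2 e^{-2n^{1/3}} = O\bigl(4^k n^{-2} e^{-3n^{1/3}/2}\bigr)$, so there is no need for $\eps$ to decay slowly.
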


\label{app:thm:random-sampling}

\begin{proof}
    Fix $\eps = 1/n$ and $\delta = n^{-1/3}$. The algorithm maintains a list $\mathcal{L}$ of assignments in $\{0,1\}^n$ satisfying $\min_{\x, \y \in \mathcal{L}} d_H(\x,\y) \geq \frac{n}{2}(1-\delta)$. We initialize $\mathcal{L} = \emptyset$ and terminate when $|\mathcal{L}| = s$. At each step, we sample $\x \sim \{0,1\}^n$ uniformly at random. We add $\x$ to $\mathcal{L}$ if $|\Phi(\x)| \geq (1-\eps) \E_{\x} |\Phi(\x)|$ and $d_H(\x, \y) \geq \frac{n}{2}(1-\delta)$ for all $\y \in \mathcal{L}$. Let $E_1$ be the event that $|\Phi(\x)| \geq (1-\eps) \E_{\x} |\Phi(\x)|$, and $E_2$ be the event that $d_H(\x, \y) \geq \frac{n}{2}(1-\delta)$ for all $\y \in \mathcal{L}$.

    \medskip \noindent
    Because $|\Phi(x)| \leq m$, for all $x \in \{0,1\}^n$, $\E[|\Phi(\x)|] \leq \Pr[E_1] \cdot m + (1 - \Pr[E_1])(1-\eps)\E[|\Phi(\x)|]$. Solving for $\Pr[E_1]$ yields $\Pr[E_1] \geq \eps \frac{\E[|\Phi(\x)|]}{m}$. Since $\E[|\Phi(\x)|] \geq m/2^k$, we have $\Pr[E_1] \geq \frac{\eps}{2^k} = \frac{1}{n 2^k}$. For a fixed $\y \in \mathcal{L}$, the Chernoff bound implies $\Pr[d_H(\x, \y) < \frac{n}{2}(1-\delta)] \leq e^{-\delta^2 n / 4}$. By the union bound over the elements already present in $\mathcal{L}$, $\Pr[\neg E_2] \leq s \cdot e^{-\delta^2 n / 4}$. Substituting $\delta = n^{-1/3}$, we obtain $\Pr[\neg E_2] \leq s \cdot e^{-n^{1/3} / 4}$. Because $s \leq \frac{1}{n^2}e^{n^{1/3}/4}$, this term is bounded by $\frac{1}{n^2}$.

    \medskip \noindent
    The probability of success in a single step is $\Pr[E_1 \cap E_2] \geq \Pr[E_1] - \Pr[\neg E_2] \geq \frac{1}{n 2^k} - \frac{1}{n^2} = \Omega(\frac{1}{n 2^k})$. The expected number of samples to find one valid assignment is $O(n 2^k)$. To find $s$ assignments, the expected number of iterations is $O(s n 2^k)$.

    One can remove the $2^k$ term in the running time of this algorithm for \textsc{Max-$k$-SAT}. Here, the expected number of satisfied clauses is much higher: $\E[|\Phi(\x)|] = m(1 - 1/2^k)$. Substituting this into the inequality $\Pr[E_1] \geq \eps \frac{\E[|\Phi(\x)|]}{m}$ from the proof yields $\Pr[E_1] \geq \eps(1 - 1/2^k) = \frac{1 - 1/2^k}{n}$. Since $1 - 1/2^k \geq 1/2$ for all $k \geq 1$, we have $\Pr[E_1] = \Omega(1/n)$. Consequently, the expected number of samples required to find each valid assignment drops to $O(n)$. Thus, the total expected number of iterations to find $s$ assignments becomes $O(sn)$, completely removing the exponential dependence on $k$ present in the $O(s n 2^k)$ bound for general $k$-CSPs.
\end{proof}

\begin{remark}
The corresponding algorithm for finding a single assignment can be
derandomized using the method of conditional expectations. In our setting,
however, the algorithm must simultaneously maintain the approximation
guarantees and the pairwise-distance guarantees for multiple assignments.
It is not clear how to encode all these requirements in a pessimistic
estimator that remains efficiently computable under partial assignments.
We therefore leave the derandomization of this algorithm as an open
question.
\end{remark}

\subsection{Preliminaries on linear codes}
\label{sec:linear-codes}

Before we state our results in Section~\ref{sec:conc-xor}, we need some preliminaries on linear codes.

\medskip \noindent
\textbf{Linear codes.} We define the binary entropy function $H_2(x)=-x \log (x) - (1-x) \log(1-x) $, for each $x \in (0,1)$, and the inverse of this function $H_2^{-1}(z)$ for any $z \in (0,1)$ is defined to be the unique $y \in (0,1/2)$, such that $H_2(y)=z$. A binary-code of rate $\rho$ and relative distance $\delta$ is a subset $\mathcal{C}\subseteq\{0,1\}^n$ such that $|\mathcal{C}|=2^{\rho n}$ and $d_H(x,y) \geq \delta n$ for each $x \neq y \in \mathcal{C}$. The elements of $\mathcal{C}$ are referred to as \emph{codewords}. The celebrated Gilbert-Varshamov (GV) bound for codes states that for any $0 < \eps \leq 1-H_2(\delta)$, there exists a code $\mathcal{C}$ with rate $r \geq 1-H_2(\delta)-\eps$ and relative distance $\delta$. However, when $|\mathcal{C}|$ is very large, we often need a succinct representation of a code. This is possible if the code is \emph{linear}, i.e, when $\mathcal{C}$ forms a vector subspace of $\mathbb{F}_2^n$. In such a case, there exists a $n \times \rho n$ matrix $G$ with entries in $\mathbb{F}_2$ such that $\mathcal{C}=\{G x: x \in \mathbb{F}_2^{\rho n}\}$, and the relative distance can be shown to be $\min_{z \in \mathcal{C}}|z|$, where $|z|=|\{i \in [n]: z_i=1\}|$. The GV bound also proves the existence of linear codes with the same rate and relative distance tradeoffs we mentioned above. We can now rephrase question 2 in a more refined manner. 

\begin{quote}
    Let $\Phi$ be a $k$-CSP on $n$ variables and $m$ constraints such that $\E_{\x \sim \{0,1\}^n} |\Phi(\x)|$ (the expected number of constraints satisfied by a random assignment) is $cm$. Construct a linear code $\mathcal{C}$ with rate $\rho$ and relative distance $H_2^{-1}(1-\rho)$ such that at least a $1-o(1)$ fraction of the codewords satisfy at least $(1-o(1))\E_{\x}|\Phi(\x)|$ constraints.
\end{quote}

\noindent
\textbf{The dual distance.} Let $G \in \mathbb{F}_2^{n \times r}$ be a generating matrix for a code $\mathcal{C}$. Let $G_i$ denote the $i$-th column of $G$ For the code generated by $G$, we define the dual code $\mathcal{C}^{\perp}:=\{\y \in \mathbb{F}_2^n: \langle G_i, \y \rangle=0 \text{, for each } i \in [r] \}$. Suppose $G$ possesses the following structural property: any set of $\leq M$ rows of $G$ are linearly independent over $\mathbb{F}_2$. This is equivalent to the dual code $\mathcal{C}^{\perp}$ having minimum distance at least $M+1$, that is every non-zero $\y \in \mathcal{C}^{\perp}$ satisfies $|y| \geq M+1$. The minimum distance of $\mathcal{C}^{\perp}$ is also referred to as the \emph{dual distance} of $\mathcal{C}$. We use $\delta^{\perp}$ to denote the relative dual distance, i.e, the ratio $(M+1)/n$. It is well known in coding theory literature that for any constant $\eps>0$, a random linear code of rate $\rho$, with probability at least $2/3$, has relative distance $\delta \geq H_2^{-1}(1-\rho) -\eps$ and dual distance $\delta^{\perp} \geq H_2^{-1}(\rho) - \eps$~\cite[Theorem 4.2.1]{guruswami2019essential}. We refer the reader to~\cite{macwilliams1977theory,guruswami2019essential} for more details.

\subsection{Linear Codes for Max-$k$-CSPs}
\label{sec:conc-xor}

\textbf{The bias polynomial of a CSP.} Let $\Phi$ be a Max-$k$-CSP instance, consisting of constraints $\Phi_1, \dots, \Phi_m$. For each assignment $x \in \{0,1\}^n$, define the corresponding assignment $z \in \{\pm 1\}^n$ by $z_i = (-1)^{x_i}$. Each constraint $\Phi_i$ is function that acts on a subset $S_i$ of $k$ variables and outputs $1$ if the constraint is satisfied and $0$ otherwise. Hence, for each $\Phi_i$, we can define a multilinear polynomial $P_i$ that acts on only the variables in $S_i$ and has degree at most $k$ such that $P_i(z)=1$ if $x$ satisfies $\Phi_i$ and $P_i(z)=0$ otherwise. Hence, we can define $P(z)=\sum_{i=1}^m P_i(z)$ that counts the number of constraints satisfied by the assignment $z \in \{\pm 1\}^n$. The function $P :\{\pm 1\}^n \to \IR$ is a Pseudo-Boolean function and hence, has a unique multi-linear expansion. For any $S \subseteq [n]$, we define the Fourier coefficient $\widehat{P}$ to be the coefficient of $\prod_{i \in S}$ in this expansion of $P(z)$. Due to the fact that $P$ has degree at most $k$, $P(z)= \sum_{S \subseteq [n], |S| \leq k} \widehat{P}(S) \prod_{j \in S} z_j$. For a uniformly random assignment $\z \sim \{\pm 1\}^n$, we consider the random variable that counts the number of constraints satisfied by $\mathbf{z}$. We define $\E[\Phi]$ and $\mathsf{Var}[\Phi]$ to be the expectation and variance of this random variable, respectively. It is easy to see that $\E [\Phi]=\widehat{P}(\emptyset)$ and  $\mathsf{Var}[\Phi]=\sum_{S \neq \emptyset} \widehat{P}(S)^2$. We will now prove the following theorem.

\begin{theorem} \label{thm:vbldiverse-general}
    Let $\Phi$ be a \textsc{Max-$k$-CSP} instance with $n$ variables and $m$ constraints. Let $\mathcal{C}$ be any linear code with relative dual distance $\delta^{\perp} \geq \frac{c \ln n}{n}$. The probability that $|\Phi(\mathbf{c})| \leq \E[\Phi]-\frac{e}{4} \sqrt{(c \ln n)^k \mathsf{Var}[\Phi]}$ for a uniformly random $\mathbf{c} \sim \mathcal{C}$ is at most $n^{-c}$.
\end{theorem}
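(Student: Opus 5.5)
Let $d := \delta^{\perp} n \ge c\ln n$ denote the dual distance of $\mathcal{C}$. We will compare a uniformly random codeword with a uniformly random point of the cube through moments. The plan is: (i) use dual distance to show that low-degree moments of $P(\z)-\E[\Phi]$ under $\mathbf{c}\sim\mathcal{C}$ coincide with those under $\z \sim \{\pm1\}^n$; (ii) bound those cube moments using hypercontractivity; (iii) apply Markov's inequality to a suitably chosen even moment.

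\textbf{Step 1 (moment matching).} Write $Q(z) = P(z) - \widehat{P}(\emptyset) = \sum_{0<|S|\le k}\widehat{P}(S)\chi_S(z)$, where $\chi_S(z) = \prod_{j\in S} z_j$. For an even integer $q$, expanding $Q^q$ gives a combination of characters $\chi_T$ with $|T| \le kq$.
\begin{itemize}
\item For a uniformly random codeword $\mathbf{c}=G\mathbf{u}$, we have $\E[\chi_T(\mathbf{c})] = \E_{\mathbf{u}}[(-1)^{\langle \mathbf{1}_T, G\mathbf{u}\rangle}]$.
\item This expectation equals $1$ if $\mathbf{1}_T \in \mathcal{C}^\perp$ and $0$ otherwise.
\item Every nonzero element of $\mathcal{C}^\perp$ has weight at least $d$. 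So whenever $0<|T|\le kq < d$, both $\E_{\mathcal{C}}[\chi_T]$ and $\E_{\{\pm1\}^n}[\chi_T]$ vanish.
\end{itemize}
Hence $\E_{\mathbf{c}\sim\mathcal{C}}[Q(\mathbf{c})^q] = \E_{\z}[Q(\z)^q]$ whenever $kq < d$. Equivalently, $\mathcal{C}$ is a $(d-1)$-wise independent distribution.

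\textbf{Step 2 (hypercontractivity).} The polynomial $Q$ has degree at most $k$, so the Bonami lemma (\cite[Theorem 9.21]{odonnell2014analysis}) gives
\[
\|Q\|_q \le (q-1)^{k/2}\|Q\|_2 = (q-1)^{k/2}\sqrt{\mathsf{Var}[\Phi]}.
\]
Combining this with Step 1 and Markov's inequality,
\[
\Pr_{\mathbf{c}}\bigl[Q(\mathbf{c}) \le -\lambda\bigr] \le \frac{(q-1)^{kq/2}\,\mathsf{Var}[\Phi]^{q/2}}{\lambda^q}.
\]

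\textbf{Step 3 (choice of parameters).} We take $\lambda = \frac{e}{4}\sqrt{(c\ln n)^k\,\mathsf{Var}[\Phi]}$, so the bound becomes $\bigl((q-1)^{k/2}\cdot \frac{4}{e}(c\ln n)^{-k/2}\bigr)^q$. We then pick an even $q$ of order $c\ln n / k$, chosen so that $kq<d$ holds.
\begin{itemize}
\item The base then behaves like $(4/e)\,k^{-k/2}$, up to the precise constants.
\item Raised to a power $q \approx c\ln n/k$, this should yield a bound of the form $n^{-\Theta(c)}$.
\end{itemize}

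\textbf{Main obstacle.} The delicate part is bookkeeping the constants so that the final bound is exactly $n^{-c}$. This needs the precise form of the tail bound, the $\exp(-\frac{k}{2e}t^{2/k})$ form used in \Cref{lem:bonami}, rederived from moments with the optimal $q$. It also needs care with the interplay between the degree blow-up $kq$ and the dual distance $d \ge c\ln n$. If the naive choice loses constants, I would first try the sharper route: reproduce the standard proof of \cite[Theorem 9.23]{odonnell2014analysis}, which optimizes $q \approx t^{2/k}/e$. I would note that this proof only uses moments of order $q$ with $kq \le d$, and I would allow the $\frac{e}{4}$ constant and the $\ln n$ power to absorb the slack.
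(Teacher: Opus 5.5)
Your route is the paper's route: dual distance gives limited independence, which gives matching of even moments, followed by Bonami and Markov. Your Step~1 is actually more careful than the paper's. The paper asserts that the $q$-th moments agree for all $q\le\lfloor c\log n\rfloor$, overlooking that $Q^q$ has degree $kq$; your constraint $kq<d$ is the correct one.

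The gap is Step~3, and it is not a matter of bookkeeping. With $\lambda=\frac e4\sqrt{(c\ln n)^k\mathsf{Var}[\Phi]}$ and $q\approx c\ln n/k$, your base is about $\frac4e k^{-k/2}$. This exceeds $1$ for $k=1$ and gives only $n^{-0.15c}$ for $k=2$. Optimizing $q$ as in \cite[Theorem~9.23]{odonnell2014analysis} gives essentially $\exp\bigl(-\frac{k}{2e}(e/4)^{2/k}c\ln n\bigr)$, e.g.\ $n^{-c/4}$ for $k=2$.

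No choice of parameters can reach $n^{-c}$, because the statement as written is false for small $k$, even with full independence. Take $k=2$ and let $\Phi$ consist of the $\binom n2$ constraints $x_i\ne x_j$ (Max-Cut on $K_n$). Let $\mathcal C=\mathbb{F}_2^n$, whose dual code is $\{\mathbf 0\}$, so the hypothesis holds for every $c$. With $S=\sum_i z_i$ one has $|\Phi(\mathbf c)|=(n^2-S^2)/4$, $\E[\Phi]=n(n-1)/4$ and $\mathsf{Var}[\Phi]=n(n-1)/8$. The event in the statement is then $S^2\ge n+\frac{e}{\sqrt 8}\,c\ln n\,\sqrt{n(n-1)}$, i.e.\ roughly $|S|/\sqrt n\ge\sqrt{0.96\,c\ln n}$. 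By the central limit theorem in the moderate-deviation range, this has probability $n^{-0.48c+o(1)}$, much larger than $n^{-c}$. For instance, with $c=1$ and $n=10^6$ it is about $1.6\cdot 10^{-4}$.

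The paper's proof reaches $n^{-c}$ only through two slips. The first is the moment-matching range just mentioned. The second is that it takes $t=\frac e2(c\ln n)^{k/2}$ and $q=(2t/e)^{2/k}=c\ln n$. This bounds the probability of a deviation $2t\sqrt{\mathsf{Var}[\Phi]}=e\sqrt{(c\ln n)^k\mathsf{Var}[\Phi]}$, four times the deviation in the statement.

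What your argument does establish is a corrected version, in either of two forms:
\begin{itemize}
\item With threshold $e\sqrt{(c\ln n)^k\mathsf{Var}[\Phi]}$ and dual distance greater than $kc\ln n$, take $q$ to be the largest even integer at most $c\ln n$. The probability is then at most $e^{-q}\le e^{2}n^{-c}$.
\item With dual distance only $c\ln n$, use threshold $e\sqrt{(c\ln n/k)^k\mathsf{Var}[\Phi]}$ and take $q$ to be the largest even integer below $c\ln n/k$. This gives $O(n^{-c/k})$.
\end{itemize}
You should state and prove one of these. Do not hope to absorb slack in the constant $\frac e4$ or in the power of $\ln n$: the statement fixes both, and there is no slack.
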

The proof of \Cref{thm:vbldiverse-general} uses moment inequalities, with the dual distance of the code guaranteeing that for a randomly chosen $\mathbf{c} \sim \mathcal{C}$, the random variables $\{\mathbf{c}_i\}_{i=1}^n$, while not independent, are $k$-wise independent, for large enough $k$. 

\label{app:vbldiverse-proof}

\textbf{Some preliminaries.} We say that a sequence of random variables $Y_1, \dots, Y_n$ are $M$-wise independent if for any subset $S \subseteq [n]$ of size at most $M$, the random variables $\{ Y_j \}_{j \in S}$ are mutually independent. 

\begin{lemma} \label{lem:linear-kwise}
    Let $\mathbf{v}^{(1)}, \dots, \mathbf{v}^{(M)}$ be a set of linearly independent vectors over $\mathbb{F}_2^r$. Let $X_1, \dots, X_r$ be uniform, mutually independent random variables over $\mathbb{F}_2$. Then, the random variables $\{Y_{i}=\sum_{j=1}^r \mathbf{v}_j^{(i)} X_j \mod 2\}_{i=1}^M$ are mutually independent. 
\end{lemma}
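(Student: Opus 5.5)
To prove \Cref{lem:linear-kwise}, I will show directly that the joint distribution of $(Y_1,\dots,Y_M)$ is uniform on $\mathbb{F}_2^M$. Uniformity of the joint law is exactly mutual independence of the $Y_i$ together with each $Y_i$ being uniform. The plan is to write the vector $\mathbf{Y}=(Y_1,\dots,Y_M)$ as $V\mathbf{X}$. Here $V\in\mathbb{F}_2^{M\times r}$ is the matrix whose $i$-th row is $\mathbf{v}^{(i)}$, and $\mathbf{X}=(X_1,\dots,X_r)$ is uniform on $\mathbb{F}_2^r$. Since the rows of $V$ are linearly independent, $V$ has rank $M$. Hence the linear map $\mathbf{X}\mapsto V\mathbf{X}$ from $\mathbb{F}_2^r$ to $\mathbb{F}_2^M$ is surjective.

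The key step is a fiber-counting argument. For any $\mathbf{b}\in\mathbb{F}_2^M$, surjectivity gives some $\mathbf{x}_0$ with $V\mathbf{x}_0=\mathbf{b}$. The preimage $V^{-1}(\mathbf{b})$ is then the coset $\mathbf{x}_0+\ker V$. By rank--nullity, $|\ker V| = 2^{r-M}$. Consequently
\[
\Pr[\mathbf{Y}=\mathbf{b}] = \frac{|\mathbf{x}_0+\ker V|}{2^r} = \frac{2^{r-M}}{2^r} = 2^{-M}
\]
for every $\mathbf{b}$, so $\mathbf{Y}$ is uniform on $\mathbb{F}_2^M$.

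Finally, the joint probability factorizes: $\Pr[\mathbf{Y}=\mathbf{b}]=\prod_{i=1}^M 2^{-1}=\prod_i \Pr[Y_i=b_i]$. The marginals are $\Pr[Y_i=b_i]=1/2$, obtained by summing the uniform joint law over the other coordinates. This is the definition of mutual independence.

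An alternative route, equally short, uses characters. For each nonempty $T\subseteq[M]$, one has $\sum_{i\in T}Y_i=\langle \sum_{i\in T}\mathbf{v}^{(i)},\mathbf{X}\rangle$, and the vector $\sum_{i\in T}\mathbf{v}^{(i)}$ is nonzero by linear independence. Hence $\E[(-1)^{\sum_{i\in T}Y_i}]=0$. Vanishing of all nontrivial Fourier coefficients then forces the joint law to be uniform.

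I do not expect a genuine obstacle; the only point needing care is surjectivity, i.e., that row-rank $M$ implies the image is all of $\mathbb{F}_2^M$. This follows because row rank equals column rank.

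In the subsequent application to \Cref{thm:vbldiverse-general}, the lemma will be applied with the $\mathbf{v}^{(i)}$ being any $M$ rows of $G$. Their linear independence is guaranteed by the dual distance hypothesis, which makes the coordinates of a random codeword $M$-wise independent.
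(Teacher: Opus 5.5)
Your proposal is correct and follows the paper's proof essentially verbatim. Both write $Y = VX$ and use full row rank to get surjectivity. Each fiber is then a coset of $\ker V$ of size $2^{r-M}$ by rank--nullity, so $Y$ is uniform on $\mathbb{F}_2^M$ and its coordinates are mutually independent; your character-based alternative is also valid, though not needed.
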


\begin{proof}
    Let $V$ be an $M \times r$ matrix where each row $i$ corresponds to the vector $v^{(i)}$. Since the $M$ vectors are linearly independent, the matrix $V$ has full row rank $M$. The random variables $Y_1, \dots, Y_M$ can be represented as the column vector $Y = VX$, where $X$ is the column vector of the uniform random variables $X_1, \dots, X_r$.

    Because $V$ has full row rank, the linear transformation defined by $V$ is surjective. By the rank-nullity theorem, the kernel of $V$ has dimension $r - M$. Therefore, for every possible vector $y \in \mathbb{F}_2^M$, the linear system $V x = y$ has exactly $2^{r-M}$ solutions for $x$. Since $X$ is drawn uniformly at random from the $2^r$ possible vectors in $\mathbb{F}_2^r$, the probability of obtaining any specific $y \in \mathbb{F}_2^M$ is exactly:
    \[
    \Pr(Y = y) = \frac{2^{r-M}}{2^r} = \frac{1}{2^M}
    \]

    Because every joint outcome $y \in \mathbb{F}_2^M$ occurs with the exact same probability $2^{-M}$, the random vector $Y$ is uniformly distributed over $\mathbb{F}_2^M$. This uniformity over the product space directly implies that its individual components $Y_1, \dots, Y_M$ are mutually independent and identically distributed as uniform random variables over $\mathbb{F}_2$.
\end{proof}




\begin{lemma} \label{lem:dual-indep}
    Let $\mathcal{C}$ be a linear code with dual distance at least $M+1$. Let $Y=(Y_1, \dots, Y_n) \sim \mathcal{C}$ be a uniformly random codeword. Then, the random variables $Y_1, \dots, Y_n$ are $M$-wise independent. 
\end{lemma}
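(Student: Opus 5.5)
The plan is to reduce directly to \Cref{lem:linear-kwise}. Write the code as $\mathcal{C}=\{G\x : \x \in \mathbb{F}_2^{r}\}$ for a generator matrix $G \in \mathbb{F}_2^{n\times r}$ with rows $\mathbf{g}^{(1)},\dots,\mathbf{g}^{(n)} \in \mathbb{F}_2^r$. We may assume $G$ has full column rank $r=\dim \mathcal{C}$; otherwise we drop redundant columns, which changes neither $\mathcal{C}$ nor its dual. Then the map $\x \mapsto G\x$ is a bijection from $\mathbb{F}_2^r$ onto $\mathcal{C}$. Consequently, a uniformly random codeword $Y\sim\mathcal{C}$ has the same distribution as $G X$, where $X=(X_1,\dots,X_r)$ is uniform on $\mathbb{F}_2^r$, i.e.\ $X_1,\dots,X_r$ are independent uniform bits. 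In coordinates, $Y_i=\sum_{j=1}^r \mathbf{g}^{(i)}_j X_j \bmod 2$.

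Next I will use the equivalence already recorded in the preliminaries on dual distance: dual distance at least $M+1$ means any set of at most $M$ rows of $G$ is linearly independent. To see this directly, suppose a nonzero vector $\y$ with $|\y|\le M$ satisfied $\sum_{i} \y_i \mathbf{g}^{(i)}=0$. Then $\langle G_j,\y\rangle=0$ for every column $j$, so $\y\in\mathcal{C}^\perp$ would be a nonzero dual codeword of weight at most $M$, which is a contradiction.

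Now fix any $S\subseteq[n]$ with $|S|\le M$. The vectors $\{\mathbf{g}^{(i)}\}_{i\in S}$ are linearly independent, so \Cref{lem:linear-kwise} applies to them with the independent uniform $X_1,\dots,X_r$. It shows that $\{Y_i\}_{i\in S}$ are mutually independent, and in fact each is uniform on $\mathbb{F}_2$. Since $S$ was arbitrary, $Y_1,\dots,Y_n$ are $M$-wise independent.

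The only step requiring care is the first one: we must ensure the uniform distribution on $\mathcal{C}$ really is the pushforward of the uniform distribution on $\mathbb{F}_2^r$. This holds because $G$ is injective, and more generally because every fiber of a linear map has the same size. The rest is a direct invocation of \Cref{lem:linear-kwise}.
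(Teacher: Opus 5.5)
Your proof is correct and follows the paper's argument. You write $Y=GX$ with $X$ uniform on $\mathbb{F}_2^r$, note that a linear dependency among at most $M$ rows of $G$ would give a nonzero dual codeword of weight at most $M$, and then apply \Cref{lem:linear-kwise} to the rows indexed by $S$. Your extra care — checking that $GX$ is uniform on $\mathcal{C}$, and writing a dependency with a general nonzero coefficient vector $\mathbf{y}$ rather than as the sum of all chosen rows — fills in points the paper leaves implicit.
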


\begin{proof}
    Let $X \in \mathbb{F}_2^r$ be a vector of $r$ independent, uniform random variables $X_1, \dots, X_r$. The random variable $Y = GX$ represents a uniformly random element of the code $\mathcal{C}$. Now, consider any set of $m \leq M$ rows $G_{i_1}, \dots, G_{i_m}$. Suppose these rows were linearly dependent. That is, $G_{i_1}+ \dots+ G_{i_m} = \mathbf{0}$. This would imply that vector $\mathbf{z} \in \mathbb{F}_2^n$ such that $\mathbf{z}_{i_1}=\cdots =\mathbf{z}_{i_m}=1$, and $0$ elsewhere is a member of the dual code $\mathcal{C}^\perp$. However, as $|\mathbf{z}| \leq M$, this violates the dual distance assumption. 
\end{proof}




Finally, we need the following lemma from \cite[Theorem 9.21]{odonnell2014analysis}.  

\begin{lemma} \label{lem:hypercontractivity}
    Let $Q:\{\pm 1\}^n \to \mathbb{R}$ be a multilinear polynomial of degree $k$. For any $d$, we define $\Vert Q \Vert_{d}=\left(\E_{X \sim \{\pm 1\}^n} |Q(X)|^d\right)^{1/d}$. Then, $\Vert Q \Vert_d \leq (d-1)^{k/2} \Vert Q\Vert_2$, for any $k \geq 2$. 
\end{lemma}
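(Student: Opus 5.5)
The plan is the standard hypercontractivity argument: derive the moment bound from the $(2,d)$-hypercontractivity of the noise operator. For $d \le 2$ there is nothing to prove, since $\Vert Q\Vert_d \le \Vert Q\Vert_2$ by monotonicity of $L^p$ norms on a probability space and $(d-1)^{k/2}\ge 0$ (for $1\le d\le 2$ we also need the factor to be at least $1$ only when $d=2$, where equality holds; for $d<2$ we read the statement as intended for $d\ge 2$). So we assume $d>2$ and set $\rho = 1/\sqrt{d-1}\in(0,1)$. Define the noise operator $T_\rho$ on multilinear polynomials by $T_\rho Q = \sum_S \rho^{|S|}\widehat{Q}(S)\prod_{j\in S}z_j$.

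\textbf{Step 1 (two-point inequality).} We show that for $n=1$ and any $f(z)=a+bz$ with real $a,b$,
\[
\Vert T_\rho f\Vert_d = \left(\tfrac12|a+\rho b|^d+\tfrac12|a-\rho b|^d\right)^{1/d} \le (a^2+b^2)^{1/2} = \Vert f\Vert_2 .
\]
By homogeneity we may take $a=1$ and $|b|\le1$; the case $|a|<|b|$ reduces to this one by symmetry and a short monotonicity argument. We then expand $(1\pm\rho b)^d$ as a binomial series in $\rho b$. The odd terms cancel, and we compare term by term with $(1+b^2)^{d/2}$, using $\binom{d}{2j}\rho^{2j}\le\binom{d/2}{j}$ when $\rho^2=1/(d-1)$. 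For $d$ an even integer this is the classical Bonami calculation. For general real $d>2$ we follow the proof in O'Donnell, Chapter 10, which handles the series with nonnegative coefficients.

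\textbf{Step 2 (tensorization).} We prove $\Vert T_\rho Q\Vert_d\le\Vert Q\Vert_2$ for all $n$ by induction on $n$. Write $Q = A(z') + z_n B(z')$, where $z'=(z_1,\dots,z_{n-1})$.
\begin{itemize}
\item For fixed $z'$, apply Step 1 in the variable $z_n$ to $(T_\rho A)(z') + z_n(T_\rho B)(z')$ (the noise in $z'$ has already been applied). This gives
\[
\E_{z_n}|T_\rho Q|^d \le \left((T_\rho A)^2+(T_\rho B)^2\right)^{d/2}.
\]
\item Take expectation over $z'$ and apply Minkowski's inequality in $L^{d/2}$. This yields
\[
\Vert T_\rho Q\Vert_d^2\le \Vert T_\rho A\Vert_d^2+\Vert T_\rho B\Vert_d^2 .
\]
\item The induction hypothesis bounds the right-hand side by $\Vert A\Vert_2^2+\Vert B\Vert_2^2=\Vert Q\Vert_2^2$.
\end{itemize}

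\textbf{Step 3 (degree-$k$ conclusion).} Since $Q$ has degree at most $k$, we can write $Q=T_\rho R$ with $R=\sum_S\rho^{-|S|}\widehat Q(S)\prod_{j\in S}z_j$. By Parseval,
\[
\Vert R\Vert_2^2=\sum_S\rho^{-2|S|}\widehat Q(S)^2\le\rho^{-2k}\Vert Q\Vert_2^2 .
\]
Combining with Step 2 gives $\Vert Q\Vert_d\le\Vert R\Vert_2\le\rho^{-k}\Vert Q\Vert_2=(d-1)^{k/2}\Vert Q\Vert_2$.

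The main obstacle is Step 1 for non-integer $d$. The first thing to try is to restrict to even integers $d$, where the binomial comparison is elementary. This suffices for the paper's use of the lemma in \Cref{thm:vbldiverse-general}, where moments can be taken at even $d$. The general case can then be cited from the standard proof.
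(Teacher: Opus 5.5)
The paper gives no proof of this lemma; it cites O'Donnell's book. Your skeleton is exactly the standard argument behind that citation: a single-bit bound, tensorization via Minkowski in $L^{d/2}$, and then $Q=T_\rho R$ with Parseval.

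Steps 2 and 3 are valid for every real $d\ge 2$ once Step 1 is available for that $d$. For an even integer $d=2m$, Step 1 is fully rigorous, and simpler than you make it. Since $(1\pm\rho b)^{2m}$ is a polynomial, you need no normalization $|b|\le 1$ and no case analysis on $|a|<|b|$. The coefficient bound $\binom{2m}{2j}(2m-1)^{-j}\le\binom{m}{j}$ holds because $\binom{2m}{2j}/\binom{m}{j}=(2m-1)(2m-3)\cdots(2m-2j+1)/(2j-1)!!\le(2m-1)^j$. So you have a complete proof for even $d$. That is the only case \Cref{thm:vbldiverse-general} actually uses, since its moment-matching step under limited independence needs $(P-\widehat{P}(\emptyset))^d$ to be a polynomial.

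The gap is Step 1 for non-integer $d>2$, which the lemma as stated covers. There your termwise comparison is false, not merely unproved. For $d\in(3,4)$ and $j=3$ one has $\binom{d}{6}\rho^6>0$, while $\binom{d/2}{3}<0$ because $d/2\in(1,2)$. Also, the series with nonnegative coefficients in the textbook argument is not the $(2,d)$ series.

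The missing idea is duality. Put $p=d/(d-1)\in(1,2)$, so that $p-1=1/(d-1)=\rho^2$.
\begin{itemize}
\item Reduce to a nonnegative single-bit function $h=1+bz$ with $|b|\le 1$, using $|T_\rho h|\le T_\rho|h|$ pointwise and $\Vert\, |h|\,\Vert_p=\Vert h\Vert_p$.
\item Expand $\tfrac12(1+b)^p+\tfrac12(1-b)^p=\sum_{j\ge 0}\binom{p}{2j}b^{2j}$. Every $\binom{p}{2j}$ is nonnegative for $1\le p\le 2$, so the sum is at least $1+\tfrac{p(p-1)}{2}b^2$.
\item Bernoulli's inequality with exponent $2/p\ge 1$ then gives $\Vert h\Vert_p^2\ge 1+(p-1)b^2=\Vert T_\rho h\Vert_2^2$.
\item Since $T_\rho$ is self-adjoint, H\"older gives $\Vert T_\rho g\Vert_d=\sup_{\Vert h\Vert_p\le 1}\E[(T_\rho h)\,g]\le\Vert g\Vert_2$. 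This is your Step 1, and Steps 2 and 3 then go through verbatim.
\end{itemize}

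Finally, your opening remark should not say there is ``nothing to prove'' for $d<2$. The inequality is false there: for $Q=\prod_{j\le k}z_j$ one has $\Vert Q\Vert_d=\Vert Q\Vert_2=1>(d-1)^{k/2}$. So the restriction $d\ge 2$ is a necessary correction to the statement.
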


\medskip \noindent
\textbf{Proof of \Cref{thm:vbldiverse-general}.} We can use standard concentration bounds to showed that a the number of constraints of $\Phi$ satisfied by a uniformly random $\x \sim \{0,1\}^n$ is well-concentrated around $\E[\Phi]$, using tail bounds on multivariate polynomials over independent random variables. In this setting, we want to show that for a uniformly random $Y=(Y_1, \dots,Y_n) \sim \mathcal{C}$, the number of constraints is concentrated. The random variable $Y=(Y_1, \dots,Y_n) \sim \mathcal{C}$ is obtained by choosing a uniformly random codeword $\mathbf{c} \sim \mathcal{C}$, and mapping it to $\{\pm 1\}^n$. As $Y_1, \dots,Y_n$ are not mutually independent, we cannot use these concentration bounds directly. However, we can use \Cref{lem:dual-indep} which implies that $Y_1, \dots,Y_n$ are $\floor{(c \log n)}$-wise independent. 

Hence, we can bound the higher moments of $P(Y)-\E [\Phi]=P(Y)-P(\emptyset)$. When we compute the expectation of any polynomial function $g$ of the random variables $Y_1, \dots, Y_n$, we can treat them as mutually independent as long as no more than $\floor{(c \log n)}$ of them appear in a product. Because $P$ has degree at most $k$, this implies that the $d$-th moments of $P(Y)-P(\emptyset)$ and $P(Z)-P(\emptyset)$, where $Z $ is uniformly random over $\{\pm 1\}^n$ are equal, as long as $d \leq \floor{(c \log n)}$. For any even $d \leq \floor{(c \log n)}$, we now use \Cref{lem:hypercontractivity} and show that $\left(\E [(P(Z)-P(\emptyset))^d]\right)^{1/d} \leq (d-1)^{k/2} \sqrt{\mathsf{Var}[P]}$. Hence, $ \E [(P(Z)-P(\emptyset))^d] \leq (d-1)^{kd/2} (\mathsf{Var}[P])^{d/2}$. Next, to upper bound the probability that $P(Y)<P(\emptyset)-2t \sqrt{\mathsf{Var}[\Phi]}$, we can use Markov's inequality\footnote{As $d$ is even, if $P(Y)<P(\emptyset)-2t \sqrt{\mathsf{Var}[\Phi]}$, this implies that $(P(Y)-P(\emptyset))^d >\left(2t \sqrt{\mathsf{Var}[\Phi]}\right)^d$, and Markov's inequality states that for any random variable $Z$, $\Pr[Z \geq t] \leq \frac{\E[Z]}{t}$. } to upper bound this by $\frac{\E_{Y}[(P(Y)-P(\emptyset))^d]}{(2t \sqrt{\mathsf{Var}[\Phi]})^d}$. Now, as long as $d \leq \floor{(c \log n)}$, this is at most $\left( \frac{(d-1)^{k/2}}{2t} \right)^d \leq \left( \frac{d^{k/2}}{2t} \right)^d$. Choose $d=(2t/e)^{2/k}$, this is upper bounded by $e^{-(2t/e)^{2/k}}$. The value of $t$ in this theorem is $\frac{e}{2}\sqrt{ (c\ln n)^k}$. Hence, $d$ is at most $c \ln n$, and the probability can be upper bounded by $ n^{-c}$.

\begin{remark}[Codes that satisfy the conditions of \Cref{thm:vbldiverse-general}] \label{rem:good-codes}
     For any constant $\rho \in (0,1)$, the random linear code of rate $\rho$ satisfies the conditions of \Cref{thm:vbldiverse-general}~\cite[Theorem 4.2.1]{guruswami2019essential}. It also has a constant relative distance. There also exist explicit families of error correcting codes with constant relative rate and constant relative distance satisfying the conditions. Examples include Justesen codes~\cite{justesen2003class}.  
\end{remark}

\begin{corollary} \label{cor:xor}
    Let $\Phi$ be an unweighted \textsc{Max-$k$-XOR} instance with $n$ variables and $m$ constraints. Let $\mathcal{C}$ be any linear code with relative dual distance $\delta^{\perp} \geq \frac{c \ln n}{n}$. The probability that $|\Phi(\mathbf{c})| \leq \frac{m}{2}\left(1-\frac{e}{2}\sqrt{\frac{ (c \ln n)^k }{m}}\right)$ for a uniformly random $\mathbf{c} \sim \mathcal{C}$ is at most $n^{-c}$.
\end{corollary}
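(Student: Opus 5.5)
The plan is to apply \Cref{thm:vbldiverse-general} directly; the only work is computing $\E[\Phi]$ and $\mathsf{Var}[\Phi]$ for an unweighted \textsc{Max-$k$-XOR} instance.

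For the expectation and variance, I would reuse the Fourier representation from the proof of \Cref{lem:bonami}. Under $z_j=(-1)^{x_j}$, the $i$-th constraint $\bigoplus_{j\in S_i}x_j=b_i$ has indicator
\[
P_i(z)=\tfrac12+\tfrac12(-1)^{b_i}\prod_{j\in S_i}z_j .
\]
Summing over $i$ gives $\widehat{P}(\emptyset)=m/2$, so $\E[\Phi]=m/2$. Every nonconstant Fourier coefficient comes from a monomial $\prod_{j\in S_i}z_j$, which contributes $\pm\tfrac12$.

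The one point needing care is the variance. If all the sets $S_i$ are distinct, then $\mathsf{Var}[\Phi]=\sum_{S\neq\emptyset}\widehat P(S)^2=m/4$. If two constraints share the same support, their coefficients add: they either cancel (opposite parities) or combine to $\pm1$. To handle this, I would bound the variance directly instead. Each $P_i-\tfrac12$ is a character scaled by $\pm\tfrac12$, and distinct characters are orthogonal, so
\[
\mathsf{Var}[\Phi]=\E\Bigl[\bigl(\textstyle\sum_i (P_i-\tfrac12)\bigr)^2\Bigr]=\tfrac14\sum_{S}\Bigl(\sum_{i:S_i=S}\pm1\Bigr)^2 .
\]
With repeated supports this can exceed $m/4$. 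I would therefore adopt the reading suggested by the paper's phrase ``variance is exactly $m$'', namely that the supports of distinct constraints are distinct, as implicitly assumed in \Cref{lem:bonami}. This gives $\mathsf{Var}[\Phi]=m/4$. (In the $\pm1$ normalization $P=2\Phi-m$ used in \Cref{lem:bonami}, the variance is $m$.) Without this assumption, multiplicities would enter the bound.

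Finally, substitute into \Cref{thm:vbldiverse-general}. The threshold becomes
\[
\frac m2-\frac e4\sqrt{(c\ln n)^k\cdot\frac m4}=\frac m2-\frac e8\sqrt{(c\ln n)^k m}.
\]
This is at least the claimed $\frac m2\bigl(1-\frac e2\sqrt{(c\ln n)^k/m}\bigr)=\frac m2-\frac e4\sqrt{(c\ln n)^k m}$. Hence the event $|\Phi(\mathbf c)|\le \frac m2\bigl(1-\frac e2\sqrt{(c\ln n)^k/m}\bigr)$ is contained in the event bounded by the theorem, and its probability is at most $n^{-c}$.
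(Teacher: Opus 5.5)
Your proposal is correct and follows the paper's argument: the paper plugs $\E[\Phi]=m/2$ and $\mathsf{Var}[\Phi]=m$ into \Cref{thm:vbldiverse-general}. Your value $\mathsf{Var}[\Phi]=m/4$ is actually the correct one under the paper's definition, since the paper's $m$ is the variance of the $\pm1$ polynomial $2|\Phi|-m$, and your monotonicity step is exactly why that overestimate is harmless.

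Your non-degeneracy caveat is genuinely needed: $m$ copies of a single constraint give $|\Phi(\mathbf{c})|\in\{0,m\}$, each with probability $1/2$ once the dual distance exceeds $k$. It can be weakened from "distinct supports" to "no repeated constraint," because two distinct constraints on the same support have opposite parities, so their characters cancel and $\mathsf{Var}[\Phi]\le m/4$ still holds.
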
 

\begin{proof}
    As $\Phi$ is a \textsc{Max-$k$-XOR} instance, it is easy to see that $\E[\Phi]=m/2$ and $\mathsf{Var}[\Phi]=m$, and the result follows from~\Cref{thm:vbldiverse-general}. 
\end{proof}

We can prove a weaker result for \textsc{Max-$k$-SAT}. For a $k$-CNF formula $\Phi$ on $n$ variables and $m$ clauses, assuming the clauses are distinct, we can show that $\mathsf{Var}[\Phi] \leq \frac{m^2}{2^k}\left(1-\frac{1}{2^k}\right)$. However, this is not enough to obtain good enough concentration bounds from~\Cref{thm:vbldiverse-general}. Hence, we aim to handle `nice' $k$-CNF formulae, and define a combinatorial parameter that captures this. 

\subsection{Variable diversity in for structured $k$-SAT instances}
\label{sec:structured-sat}
As noted in \Cref{sec:vbl-diverse}, we aim to define a combinatorial parameter that captures formulas admitting good concentration bounds. 
\begin{definition}
    For a $k$-CNF formula $\Phi$ with clauses $C_1, \dots, C_m$ on the sets $S_1, \dots, S_m$, we define the overlap to be 
    
    $$\Lambda(\Phi):=\frac{1}{m 2^k} \sum_{j , j' \in [m]}2^{|S_j \cap S_{j'}|} \; .$$
\end{definition}

\begin{corollary} \label{cor:sat}
    Let $\Phi$ be a \textsc{Max-$k$-SAT} instance on $n$ variables, $m$ clauses and overlap $\Lambda(\Phi)$. Let $\mathcal{C}$ be any linear code with relative dual distance $\delta^{\perp} \geq \frac{c \ln n}{n}$. Then, for a uniformly random codeword $\mathbf{c} \sim \mathcal{C}$, 

    $$ \Pr\left[ |\Phi(\mathbf{c})| \leq m\left(1-\frac{1}{2^k}\right)-\frac{e}{4} \sqrt{\frac{m \cdot \Lambda(\Phi)}{2^k}\cdot (c \ln n)^k} \right] \leq n^{-c} \; .$$
\end{corollary}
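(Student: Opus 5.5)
The plan is to derive \Cref{cor:sat} directly from \Cref{thm:vbldiverse-general}. That theorem already gives the bound $\Pr[|\Phi(\mathbf{c})| \leq \E[\Phi]-\frac{e}{4}\sqrt{(c\ln n)^k \mathsf{Var}[\Phi]}] \leq n^{-c}$. So it suffices to show two things. First, for a $k$-CNF formula with clauses of exactly $k$ literals, $\E[\Phi]=m(1-2^{-k})$. Second, $\mathsf{Var}[\Phi] \leq \frac{m\Lambda(\Phi)}{2^k}$. Substituting both into the theorem gives the claimed inequality; the right-hand side only grows when $\mathsf{Var}[\Phi]$ is replaced by its upper bound, so the event only enlarges in the right direction. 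The expectation is immediate: under a uniformly random assignment each clause on $k$ distinct variables is violated with probability $2^{-k}$. If clauses with fewer than $k$ variables are allowed, their violation probability is at least as small, so $\E[\Phi]\ge m(1-2^{-k})$ still holds, which is all the inequality needs.

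For the variance, write $|\Phi(\x)| = m - \sum_j U_j$, where $U_j$ is the indicator that clause $C_j$ is violated. Then $\mathsf{Var}[\Phi] = \sum_{j,j'} \mathrm{Cov}(U_j,U_{j'})$. For a fixed pair, $U_jU_{j'}=1$ requires the unique falsifying assignments of both clauses to agree on $S_j\cap S_{j'}$. If they agree, this event has probability $2^{-|S_j\cup S_{j'}|} = 2^{-2k+|S_j\cap S_{j'}|}$; if they conflict, it has probability $0$. In either case $\mathrm{Cov}(U_j,U_{j'}) \le \E[U_jU_{j'}] \le 2^{|S_j\cap S_{j'}|}/2^{2k}$. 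Summing over all ordered pairs gives $\mathsf{Var}[\Phi] \le \frac{1}{2^{2k}}\sum_{j,j'}2^{|S_j\cap S_{j'}|} = \frac{m\Lambda(\Phi)}{2^k}$, exactly by the definition of $\Lambda(\Phi)$.

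The only point needing care is to make sure the exact-$k$ convention holds, or that clauses of size $<k$ only help. For a clause on $|S_j|<k$ variables, the bound $\E[U_jU_{j'}]\le 2^{-|S_j|-|S_{j'}|+|S_j\cap S_{j'}|}$ exceeds $2^{|S_j\cap S_{j'}|-2k}$, so the variance bound would fail as stated. I would therefore assume exactly $k$ literals per clause, as the definition of $\Lambda$ implicitly normalizes by $2^k$. Subject to that, there is no real obstacle; the bound on the covariance by the joint violation probability is the key (and only) estimate.
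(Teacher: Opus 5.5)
Your proof is correct, and its outer structure matches the paper's: reduce to $\mathsf{Var}[\Phi]\le m\Lambda(\Phi)/2^k$ and plug this into \Cref{thm:vbldiverse-general}. The difference is in how you get the variance bound.

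The paper works in the Fourier basis. It expands the satisfaction polynomials $1-\prod_{i\in S_j}\frac{1-\epsilon_{j,i}z_i}{2}$ and reads off $\widehat{P}(T)=\pm 2^{-k}\sum_{j:T\subseteq S_j}\prod_{i\in T}\epsilon_{j,i}$. It then applies Parseval and bounds each signed sum in absolute value by $a_T=|\{j:T\subseteq S_j\}|$. Finally it double-counts $\sum_{T\neq\emptyset}a_T^2=\sum_{j,j'}(2^{|S_j\cap S_{j'}|}-1)$.

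You instead expand the variance into pairwise covariances of the violation indicators. You bound each covariance by the joint violation probability. The two computations organize the same quantity differently: the pair $(j,j')$ contributes exactly $\mathrm{Cov}(U_j,U_{j'})$ to $\sum_{T\neq\emptyset}\widehat{P}(T)^2$.

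Your route needs no Fourier analysis. It also exposes sign information that the paper's absolute-value step discards. For pairs whose falsifying assignments conflict on $S_j\cap S_{j'}$, you get $\E[U_jU_{j'}]=0$, hence nonpositive covariance. So your intermediate bound would even let you restrict the sum defining $\Lambda(\Phi)$ to consistent pairs. The paper's route has the advantage of staying inside the bias-polynomial framework already set up for \Cref{thm:vbldiverse-general}.

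You also make explicit two points the paper leaves implicit. One is that $\E[\Phi]=m(1-2^{-k})$. The other is that the variance bound needs $|S_j|=k$; the paper's proof assumes this too, although its general definition of $k$-SAT allows shorter clauses.

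One wording slip: replacing $\mathsf{Var}[\Phi]$ by a larger quantity lowers the threshold. So the corollary's event is \emph{contained in} the event controlled by the theorem, which is exactly why the $n^{-c}$ bound transfers.
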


\label{app:k-sat-cor}
\begin{proof}
It is enough to show that for any $k$-CNF formula $\Phi$, $\mathsf{Var}[\Phi] \le \frac{m \cdot \Lambda(\Phi)}{2^k}$; the corollary then follows from \Cref{thm:vbldiverse-general}.

Let $C_1, \ldots, C_m$ be the clauses of $\Phi$, with $C_j$ on variable set $S_j$ ($|S_j| = k$) and literal signs $\epsilon_{j,i} \in \{-1,+1\}$ for $i \in S_j$. The satisfaction polynomial for $C_j$ (that indicates whether it is satisfied or not) is
\[
P_j(z) \;=\; 1 \;-\; \prod_{i \in S_j} \frac{1 - \epsilon_{j,i} z_i}{2},
\]
since the product equals $1$ exactly when every literal of $C_j$ is false. Expanding and summing over $j$, the bias polynomial $P(z) = \sum_j P_j(z)$ has Fourier coefficients
\[
\widehat{P}(T) \;=\; \frac{(-1)^{|T|+1}}{2^k} \sum_{j:\, T \subseteq S_j} \prod_{i \in T} \epsilon_{j,i}, \qquad \emptyset \neq T,\ |T| \le k.
\]

By Parseval's theorem,
\[
\mathsf{Var}[\Phi] \;=\; \sum_{\emptyset \neq T} \widehat{P}(T)^2 \;\le\; \frac{1}{4^k} \sum_{\emptyset \neq T} a_T^2,
\]
where $a_T = |\{j : T \subseteq S_j\}|$, using $\bigl(\sum_{j} \pm 1\bigr)^2 \le a_T^2$. Expanding $a_T^2 = |\{(j,j') : T \subseteq S_j \cap S_{j'}\}|$ and switching the order of summation,
\[
\sum_{\emptyset \neq T} a_T^2 \;=\; \sum_{j,j' = 1}^m \bigl(2^{|S_j \cap S_{j'}|} - 1\bigr) \;\le\; \sum_{j,j'} 2^{|S_j \cap S_{j'}|} \;=\; m \cdot 2^k \cdot \Lambda(\Phi).
\]
Dividing by $4^k$ yields $\mathsf{Var}[\Phi] \le m \cdot \Lambda(\Phi) / 2^k$.
\end{proof}

\section{Diverse-SAT in the Local Lemma Regime}
\label{sec:LLL}

In this section, we study variable diversity on input formulae that are guaranteed to be satisfiable via the Lov\'asz Local Lemma (LLL). In \Cref{sec:lll-lower}, we discuss hardness results for the diameter problem $(s=2)$.  In \Cref{sec:alg-disp-lll}, we design a polynomial time algorithm that outputs $s$ satisfying assignments to a given E$k$-CNF formula satisfying the LLL conditions, maximizing the Max-Min diversity up to an approximation factor of $\frac{1}{2}$.

\subsection{Diameter of E$k$-SAT in LLL regime: Enter NAE-E$k$-SAT}
\label{sec:lll-lower}

In this section, we demonstrate how the Diverse-E$k$-SAT problem for a bounded-degree formula is related to the NAE-E$k$-SAT problem for the same formula. An instance of NAE-E$k$-SAT is a E$k$-CNF formula $\Phi$, and it is NAE-satisfied by an assignment $\alpha$ if, for every clause $C$ of $\Phi$, $\alpha$ sets at least one literal in $C$ to true and at least one literal to false. Let $\Omega_{\Phi}$ denote the set of assignments that satisfy $\Phi$. We define the diameter of $\Phi$ to be $\D(\Phi) = \max_{\alpha,\alpha^{\prime} \in \Omega_{\Phi}} d_{H}(\alpha,\alpha^{\prime})$. This is a special case of the Diverse-E$k$-SAT problem for $s=2$. A formula $\Phi$ is NAE-satisfiable if and only if there exists an assignment $\alpha$ such that both $\alpha$ and $\bar{\alpha}$ (the assignment obtained by flipping every bit of $\alpha$) satisfy $\Phi$. Hence, a formula $\Phi$ has diameter $n$ if and only if it is NAE-satisfiable.

\begin{definition}
A $(k,d)$-formula is a CNF formula where each clause has width exactly $k$, and each variable appears in at most $d$ clauses. $(k,d)$-SAT is the computational problem of determining whether a given $(k,d)$-formula is satisfiable. Similarly, NAE-$(k,d)$-SAT is the computational problem of determining whether a given $(k,d)$-formula is NAE-satisfiable.
\end{definition}

\begin{definition}
We define the function $f(k)$ as the maximum degree $d$ such that every $(k,d)$-formula is satisfiable. Similarly, we define $g(k)$ as the maximum degree $d$ such that every $(k,d)$-formula is NAE-satisfiable.
\end{definition}
\noindent
\textbf{The complexity jump.} The definition of $f(k)$ implies that $(k,f(k))$-SAT is trivial (as a decision problem). In~\cite{kratochvil1993one}, it was shown that while every $(k, f(k))$-formula is satisfiable, determining the satisfiability of a $(k, f(k)+1)$-formula is NP-complete. That is, there exists a \emph{complexity jump}, where the problem moves from being trivial to being NP-complete. Hence, determining the value of $f(k)$, as a function of $k$ is a problem of interest. 

The lopsided Lov\'{a}sz Local Lemma~\cite{erdos1991lopsided} was used by Gebauer, Szab\'{o}, and Tardos~\cite{gebauer2016local} to show that $f(k) \geq \lfloor\frac{2^{k+1}}{e(k+1)}\rfloor$. They also showed that $f(k) = \left(\frac{2}{e} + O\left(\frac{1}{\sqrt{k}}\right)\right)\frac{2^{k}}{k}$ by constructing unsatisfiable E$k$-CNF formulas with that degree. This determines the value of $f(k)$ asymptotically with respect to $k$.

We study the corresponding complexity jump for NAE-E$k$-SAT. We show that NAE-E$k$-SAT also exhibits a similar complexity jump, generalizing the result by~\cite{kratochvil1993one} (\Cref{thm:NAE-jump}). While this is a straightforward extension of \cite{kratochvil1993one}, we were not able to find this result in the literature prior to this. We also observe that $g(k) \geq \floor{\frac{f(k)}{2}}$ (\Cref{lem:g-lowerbnd}). We conjecture that $g(k)=\left(\frac{1}{e}+O\left(\frac{1}{\sqrt{k}}\right)\right)\frac{2^k}{k}$. 
\begin{theorem} \label{lem:g-lowerbnd}
    $g(k) \geq \floor{f(k)/2}\; .$
\end{theorem}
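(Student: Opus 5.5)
Given a $(k,d)$-formula $\Phi$ with $d \le \floor{f(k)/2}$, we will show it is NAE-satisfiable. The idea is the standard observation from the paragraph before the theorem: $\Phi$ is NAE-satisfiable if and only if some assignment $\alpha$ satisfies both $\Phi$ and the formula $\bar\Phi$ obtained by negating every literal of every clause. Indeed, $\alpha$ satisfies $\bar\Phi$ if and only if $\bar{\alpha}$ satisfies $\Phi$, and for a single clause $C$, "$\alpha$ satisfies $C$ and $\alpha$ satisfies $\bar C$" says exactly that some literal of $C$ is true under $\alpha$ and some literal of $C$ is false under $\alpha$. So we form the combined formula $\Psi = \Phi \wedge \bar\Phi$, whose clause set consists of every clause $C$ of $\Phi$ together with its complement $\bar C$. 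A satisfying assignment of $\Psi$ is precisely an NAE-satisfying assignment of $\Phi$.

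Next we check that $\Psi$ is a $(k,2d)$-formula. Every clause of $\Psi$ still has exactly $k$ literals on $k$ distinct variables, since complementation does not change the variable set of a clause. A variable $x$ appearing in at most $d$ clauses of $\Phi$ appears in at most $d$ clauses of $\bar\Phi$, hence in at most $2d \le f(k)$ clauses of $\Psi$. By the definition of $f(k)$, every $(k,f(k))$-formula is satisfiable, and a formula of degree at most $2d \le f(k)$ is in particular a $(k,f(k))$-formula. Therefore $\Psi$ is satisfiable, so $\Phi$ is NAE-satisfiable. As $\Phi$ was an arbitrary $(k,d)$-formula with $d\le \floor{f(k)/2}$, we get $g(k)\ge \floor{f(k)/2}$.

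The only step that needs care is the definitional one: confirming that $f(k)$ is monotone, i.e.\ that "degree at most $f(k)$" is what the definition quantifies over. It is, since a $(k,d)$-formula is defined by the condition that each variable appears in \emph{at most} $d$ clauses. A second minor point is that $\Psi$ might contain a clause twice, if $\Phi$ contains both $C$ and $\bar C$. Keeping it as a multiset only increases the degree count we used, which is still at most $2d$, and removing duplicates only decreases it. Either way the bound holds, so no real obstacle arises.
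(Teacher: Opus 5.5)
Your proof is correct and follows the paper's argument exactly: pair each clause with its literal-wise complement to get a $(k,2d)$-formula with $2d \le f(k)$, which is therefore satisfiable, and any satisfying assignment of it NAE-satisfies $\Phi$. Your remarks on the ``at most $d$'' degree bound and on duplicate clauses are correct points that the paper leaves implicit.
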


\begin{proof}
Let $\Phi$ be a $(k,d)$-formula, where $d = \left\lfloor\frac{f(k)}{2}\right\rfloor$. Construct a formula $\Phi^{\prime}$ as follows: For every clause $C$ in $\Phi$, we construct two clauses $C_{1}$ and $C_{2}$. $C_{1}$ is identical to $C$, and $C_{2}$ is obtained by negating all the literals in $C$. $\Phi^{\prime}$ is a $k$-CNF formula. Further, every variable in $\Phi$ appears in at most $2d$ clauses in $\Phi^{\prime}$. Hence, $\Phi^{\prime}$ is a $(k,2d)$-formula. Because $2d \leq f(k)$, $\Phi^{\prime}$ is satisfiable. Consider any satisfying assignment to $\Phi^{\prime}$. By definition, this assignment must satisfy both $C_{1}$ and $C_{2}$ for every clause $C$ in $\Phi$. Hence, it sets at least one literal to true and at least one literal to false in every clause of $\Phi$. This implies that the satisfying assignment is an NAE-satisfying assignment for $\Phi$. Thus, every $(k, d)$-formula is NAE-satisfiable, which means $g(k) \geq d = \left \lfloor\frac{f(k)}{2}\right\rfloor$.
\end{proof}

\begin{theorem} \label{thm:NAE-jump}
    Determining the NAE-satisfiability of $(k, g(k)+1)$ formulas is NP-hard. 
\end{theorem}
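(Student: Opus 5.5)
We follow the strategy of Kratochv\'il, Savick\'y and Tuza~\cite{kratochvil1993one}: we reduce from NAE-E$k$-SAT with unbounded degree, which is NP-hard for every $k \geq 3$ (Schaefer). Set $d = g(k)+1$. By the definition of $g(k)$ there is a $(k,d)$-formula $F$ that is not NAE-satisfiable. Fix such an $F$ that is \emph{minimal}: deleting any clause makes it NAE-satisfiable. Since $k$ is a constant, $F$ has constant size and can be hardwired into the reduction. Pick a clause $C_0$ of $F$ and a variable $u \in \vbl(C_0)$. Then $F' = F \setminus \{C_0\}$ is NAE-satisfiable, and every NAE-assignment of $F'$ must set all literals of $C_0$ equal. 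Call $F'$, with $C_0$'s literal positions exposed, a \emph{gadget}. In $F'$ the variables of $C_0$ have degree at most $d-1$, so each of them has one free occurrence slot.

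The first step is to build an equality/inequality device from the gadget. Because $C_0$ is forced to be all-equal, its $k$ literal positions behave as $k$ ``terminals'' whose literal values must coincide in any NAE-assignment of the gadget. Also, since NAE is invariant under global complementation, any consistent forced pattern can be realized on either polarity. Chaining copies of the gadget gives a structure that forces arbitrarily many fresh variables to agree (or disagree) with one another. Each chain variable uses at most one free slot in each of two consecutive gadgets, so its degree stays at most $d$. The link between gadgets would be made by identifying a terminal variable of one copy with a terminal variable of the next, while keeping degrees at most $d$.

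The second step is the reduction itself. Given an arbitrary E$k$-CNF formula $\Phi$ for NAE-SAT, replace the $j$-th occurrence of each variable $x$ by a fresh variable $x_j$. Each copy then appears in exactly one original clause. Tie all copies of $x$ together with an equality chain from the first step, so every copy has degree at most $d$. A NAE-assignment of $\Phi$ extends to the gadgets, using the fact that $F'$ is NAE-satisfiable with the forced pattern on either polarity. Conversely, any NAE-assignment of the new formula forces all copies to be equal and therefore yields a NAE-assignment of $\Phi$. The construction is polynomial, so NAE-$(k,g(k)+1)$-SAT is NP-hard.

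The main obstacle is the first step: designing the equality device while respecting the degree bound. Minimality only says that $C_0$'s literals are all equal, and the variables of $C_0$ may have degree exactly $d-1$ in $F'$, leaving only one free slot each. That is exactly what a chain link consumes, so it is unclear whether any slot remains for attaching the original clause. I would first try to pick $u$ so that its total degree in $F$ is less than $d$; such a $u$ exists if $F$ is chosen minimal with respect to total occurrences. Failing that, I would argue as Kratochv\'il et al.\ do: take two copies of the gadget, delete one occurrence of $u$ in each, and join them through a new variable. The careful degree accounting here is where the real work lies. One could instead allow $d$ itself to be the bound slack, since we only need NP-hardness at degree exactly $g(k)+1$.
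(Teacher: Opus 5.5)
\paragraph{A gap at the equality device.} Your skeleton is the paper's: reduce from unbounded-degree NAE-E$k$-SAT, split each variable into single-occurrence copies, and tie the copies together with gadgets built from a minimal non-NAE-satisfiable $(k,g(k)+1)$-formula $F$ with a clause $C_0$ deleted. Your $F'=F\setminus\{C_0\}$ is the core of \Cref{lem:nae-unsat}. But the proposal stops exactly where the proof has content. You never specify an equality device, and the one link you sketch fails:
\begin{itemize}
\item Identifying a terminal of one copy of $F'$ with a terminal of the next gives a variable with up to $2(d-1)$ occurrences. That exceeds $d$ once $d\ge 3$.
\item NAE is invariant under complementation, so terminals that are merely equal to one another are not constants. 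Unlike the forced-false literals of Kratochv\'{\i}l--Savick\'y--Tuza, they cannot serve as padding in a clause relating two copies.
\end{itemize}
The idea the paper supplies is the chaining clause $B_i=(y_i\vee\bar y_{i+1}\vee x_{i,1}\vee\dots\vee x_{i,k-2})$. A gadget forces its $k-2$ auxiliary literals to equal $y_i$ itself. Then NAE on $B_i$ is equivalent to $y_{i+1}=y_i$, and complementation symmetry lets every NAE-assignment of $\Psi$ extend to the gadgets. Your fallbacks do not replace this:
\begin{itemize}
\item In an E$k$-formula the number of occurrences is $k$ times the number of clauses. So minimality in occurrences is just clause-minimality, and it yields no variable of $C_0$ with degree below $d$.
\item You cannot buy slack by raising the degree bound. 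The statement is about degree exactly $g(k)+1$, and hardness only propagates upward in $d$.
\end{itemize}

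\paragraph{The degree accounting.} Your worry about slots is well founded, and the paper's write-up does not settle it either. In \Cref{lem:nae-unsat} the forced-equal variables $z_1,\dots,z_{k-1}$ keep their full degree from $F$ (up to $d$). Yet $y_i$ also occurs in $B_{i-1}$, $B_i$ and its original clause, and each $x_{i,j}$ occurs in $B_i$. A count that does close is the following:
\begin{itemize}
\item Use your $F'$, which leaves one free slot on each terminal $t_1,\dots,t_k$.
\item Add a tie clause $(t_1\vee\dots\vee t_{k-1}\vee\bar y_i)$, forcing $y_i$ to the gadget's common value.
\item Use $t_k$ as one auxiliary literal of $B_i$.
\end{itemize}
With $k-2$ such gadgets per link, $y_i$ has $k+1$ occurrences and every other variable has at most $d$. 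So this argument needs $g(k)\ge k$. That holds for large $k$ by \Cref{lem:g-lowerbnd} together with the Gebauer--Szab\'o--Tardos bound.

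Some such input is unavoidable, by Seymour's linear-algebra argument. A nonzero real vector annihilated by the signed clause--variable incidence matrix yields an NAE-assignment of the clauses meeting its support. Hence a minimal non-NAE-satisfiable formula has at least as many clauses as variables. Two consequences follow:
\begin{itemize}
\item $g(3)=2$; the Fano plane gives the upper bound.
\item A $(3,3)$-formula is non-NAE-satisfiable exactly when some connected component is $3$-regular with a sign-nonsingular signed incidence matrix.
\end{itemize}
Sign-nonsingularity is testable in polynomial time (Robertson--Seymour--Thomas). So for $k=3$ the statement would imply P $=$ NP, and a correct proof must restrict $k$ accordingly.
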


\begin{lemma} \label{lem:nae-unsat}
    Suppose there exists a $(k,d)$-formula $\Phi$ that is not NAE-satisfiable. Then, for any given set of $k-1$ variables $z_1, z_2, \dots, z_{k-1}$, we can construct a NAE-satisfiable $(k,d)$-formula $\Phi_{z_1, \dots, z_{k-1}}$ with the following property-- Any assignment that NAE-satisfies $\Phi_{z_1, \dots, z_{k-1}}$ assigns the same truth value to $z_1, z_2, \dots, z_{k-1}$.
\end{lemma}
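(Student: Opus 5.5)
Following the Kratochvíl--Savický--Tuza construction, I would start from a \emph{minimal} non-NAE-satisfiable formula and relax a single clause. Let $\Phi$ be a $(k,d)$-formula that is not NAE-satisfiable. Deleting clauses keeps it a $(k,d)$-formula, so remove clauses one at a time until we reach a subformula $\Phi'$ that is still not NAE-satisfiable but becomes NAE-satisfiable once any one clause is removed. Pick a clause $C=(\ell_1\vee\cdots\vee\ell_k)$ of $\Phi'$. The formula $\Phi'\setminus\{C\}$ is NAE-satisfiable. Every NAE-satisfying assignment of it must make $\ell_1,\dots,\ell_k$ all equal, since otherwise it would NAE-satisfy $\Phi'$ as well. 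Note also that NAE-satisfiability is invariant under complementing all variables. So the only freedom left in $C$ is the common value of its literals, and that value can be either true or false.

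Next I would replace $C$ with $C'$, which is $C$ with its last literal $\ell_k$ deleted. Its remaining literals $\ell_1,\dots,\ell_{k-1}$ range over $k-1$ distinct variables $y_1,\dots,y_{k-1}$, each appearing with some sign. Keep the variable of $\ell_k$ as it is; one occurrence has been removed, so no degree goes up. To force plain \emph{variable} equality instead of literal equality, rename variables of the gadget: if $\ell_i=\neg y_i$, swap the roles of $y_i$ and $\neg y_i$ everywhere in $\Phi'\setminus\{C\}$. This flips signs only, so clause widths and degrees are unchanged and NAE-satisfiability is preserved. After that, the $k-1$ literals of $C'$ are positive. Then identify $y_1,\dots,y_{k-1}$ with the given variables $z_1,\dots,z_{k-1}$, and give all other variables of $\Phi'\setminus\{C\}$ fresh names. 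Call the result $\Phi_{z_1,\dots,z_{k-1}}$. It is a $(k,d)$-formula, because it is a subformula of the renamed $\Phi'$.

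Two properties remain to verify.
\begin{itemize}
\item \textbf{NAE-satisfiability.} Take any NAE-satisfying assignment of $\Phi'\setminus\{C\}$. Its restriction, transported through the renaming, NAE-satisfies $\Phi_{z_1,\dots,z_{k-1}}$.
\item \textbf{Forcing.} Suppose some NAE-satisfying assignment gave two of $z_1,\dots,z_{k-1}$ different values. Translated back, it NAE-satisfies $\Phi'\setminus\{C\}$, and $\ell_1,\dots,\ell_{k-1}$ are not all equal. Then $C$ is NAE-satisfied no matter what value $\ell_k$ takes, so $\Phi'$ is NAE-satisfiable, which is a contradiction.
\end{itemize}

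I expect the main obstacle to be the bookkeeping rather than any single idea. One point is that the statement asks for a formula built \emph{on} the given variables $z_i$, so the renaming must leave the degrees of the $z_i$ at most $d$. This holds because each $z_i$ inherits exactly the occurrences that $y_i$ had in $\Phi'\setminus\{C\}$. The other point is that $C$ has $k$ distinct variables, which is guaranteed because clauses have width exactly $k$. I would also handle the degenerate case in which $\Phi'$ has a single clause, where $\Phi'\setminus\{C\}$ is empty. This cannot actually occur, since a single clause of width $k\ge 2$ is NAE-satisfiable.
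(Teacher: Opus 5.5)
Your argument is the paper's: pass to a minimal non-NAE-satisfiable $\Phi'$, pick a clause $C$, and note that every NAE-satisfying assignment of $\Phi'\setminus\{C\}$ makes all literals of $C$ equal (otherwise $C$, hence $\Phi'$, would be NAE-satisfied). You then normalize signs and rename. Your two verification bullets are correct for the renamed $\Phi'\setminus\{C\}$.

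The problem is the step ``replace $C$ with $C'$'', where $C'$ is $C$ with $\ell_k$ deleted. Read literally, your text keeps $C'$ in the formula: you check that no degree goes up and that the literals of $C'$ become positive. If $C'$ stays, the construction fails in two ways.
\begin{enumerate}
\item $C'$ has width $k-1$, so the result is not a $(k,d)$-formula. It is also not a subformula of the renamed $\Phi'$, so your justification of the $(k,d)$ property does not apply.
\item Worse, the result is not NAE-satisfiable. By the forcing property you established, every NAE-satisfying assignment of $\Phi'\setminus\{C\}$ sets $z_1=\cdots=z_{k-1}$. That leaves $C'=z_1\vee\cdots\vee z_{k-1}$ with all literals equal. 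So your first bullet, which extends an NAE-satisfying assignment of $\Phi'\setminus\{C\}$, is false for any formula containing $C'$.
\end{enumerate}

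The repair is immediate, and there are two ways to do it.
\begin{enumerate}
\item Discard $C'$ and let $\Phi_{z_1,\dots,z_{k-1}}$ be just the renamed $\Phi'\setminus\{C\}$. Both bullets then hold verbatim, and each $z_i$ even has degree at most $d-1$.
\item Do what the paper does: instead of deleting $\ell_k$, substitute a fresh variable $z'$ for its variable. The clause $z_1\vee\cdots\vee z_{k-1}\vee z'$ has width exactly $k$, and degrees do not increase since $z'$ occurs once. Any NAE-satisfying assignment of $\Phi'\setminus\{C\}$ extends to one of the new formula by giving $z'$ the value opposite to the common value of the $z_i$. Forcing is unaffected, because every NAE-satisfying assignment of the larger formula still NAE-satisfies $\Phi'\setminus\{C\}$.
\end{enumerate}
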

\begin{proof}
    Without loss of generality, assume that $\Phi$ is a \emph{minimal} instance. That is, by removing any clause $C$, the remaining formula is NAE-satisfiable. Now, consider any clause $C$. Without loss of generality, we can assume the clause to be $z_1 \vee z_2 \vee \dots \vee z_k$ (we can relabel the variables and negate the literals accordingly). Let $\Phi=\Phi_1 \wedge C$, where $\Phi_1$ is the conjunction of the remaining clauses. Now, $\Phi_1$ is NAE-satisfiable, and every satisfying assignment to $\Phi_1$ assigns the same truth value to $z_1, z_2, \dots, z_{k-1}$, because $\Phi$ is not NAE-satisfiable. Now, replace $z_k$ in $C$ by a new variable $z'$ (which does not appear in any other clause) to get the formula $\Phi_{z_1,\dots, z_k}$ with the properties described above.  
\end{proof}

    \medskip \noindent
    \textbf{Proof of \Cref{thm:NAE-jump}.} We design a reduction from NAE-E$k$-SAT to NAE-$(k, g(k)+1)$-SAT. Let $\Psi$ be any E$k$-CNF formula. Let $y$ be any variable of $\Psi$ that appears in $m>g(k)+1$ clauses $C_1, \dots, C_m$. We show that we can replace $y$ with variables $y_1, \dots, y_m$ in the clauses $C_1, \dots, C_m$ to obtain a formula $\Tilde{\Psi}$ such that $\Tilde{\Psi}$ is NAE-satisfiable if and only if $\Psi$ is NAE-satisfiable. We now define $\Tilde{\Psi}$. We first define the formula $\Psi'$ by replacing the occurrence of $y$ in $C_1$ by a new variable $y_1$, $y$ in $C_2$ by a new variable $y_2$, etc. For each $i=1, \dots, m$, we also define the new variables $x_{i,1}, \dots, x_{i,k-2}$. We use \Cref{lem:nae-unsat} to define a formula $\Phi_{y_i,x_{i,1},\dots,x_{i,k-2}}$. For each $i$, we also define the chaining clauses $B_i=(y_i \vee \Bar{y}_{i+1}\vee x_{i,1} \dots x_{i,k-2})$. We now define the formula

    $$\Tilde{\Psi}:=\Psi' \bigwedge_{i=1}^m B_i \bigwedge_{i=1}^m\Phi_{y_i,x_{i,1},\dots,x_{i,k-2}} \; .$$

    Consider any assignment $\alpha$ to the variables. If $\alpha$ NAE-satisfies $\Tilde{\Psi}$, then $\alpha$ must NAE-satisfy $\Phi_{y_i,x_{i,1},\dots,x_{i,k-2}}$ for each $i$, and \Cref{lem:nae-unsat} implies that $y_i,x_{i,1},\dots,x_{i,k-2}$ are all assigned the same truth value. Suppose $\alpha$ assigns different truth values to $y_i$ and $y_{i+1}$. This implies that $\alpha$ assigns the same truth value to $y_i$ and $\Bar{y}_{i+1}$. Because we have already shown that $y_i,x_{i,1},\dots,x_{i,k-2}$ are assigned the same truth value, every variable in the clause $B_i$ has the same truth value, contradicting the assumption that $\alpha$ NAE-satisfies $\Tilde{\Psi}$. Hence, every assignment that NAE-satisfies $\Tilde{\Psi}$ has to assign the same truth value to the variables $y_i$ for each $i=1,\dots,m$, and $\Tilde{\Psi}$ is NAE-satisfiable if and only if $\Psi$ is NAE-satisfiable. We can repeat this process for each variable that appears in $>g(k)+1$ clauses to obtain a $(k,g(k)+1)$-formula that is NAE-satisfiable if and only if $\Psi$ is NAE-satisfiable.

\subsection{Algorithms for diverse bounded degree E$k$-SAT}
\label{sec:alg-disp-lll}

We first state the constructive Lovasz Local lemma and the Moser Tardos algorithm, rephrased in the language of CSPs. For a clause $C$, we use the notation $\mathsf{vbl}(C)$ to denote the variables that the clause $C$ is defined over.

\begin{theorem}[\cite{moser2010constructive}]
    \label{thm:LLL-CSP}
    Let $\Phi$ be a binary constraint satisfaction problem on $n$ variables and $m$ constraints $C_1,C_2, \dots, C_m$. For each $i \in [m]$, let $p_i$ denote the probability that constraint $C_i$ is violated by a uniformly random assignment to its variables. For two constraints $C_i, C_j$, we say that $C_i \sim C_j$ if and only if $\mathsf{vbl}(C_i) \bigcap \textsf{vbl}(C_j) \neq \emptyset$. For each $i \in [m]$, we define $\Gamma_i=\{j \neq i \in [m]: C_i \sim C_j\}$. Further, suppose there exist real numbers $x_1, \dots, x_m \in (0,1)$ such that for each $i \in [m]$,
    \begin{equation*}
        p_i \leq x_i \prod_{j \in \Gamma_i} (1-x_j) \; ,
    \end{equation*}
    Then, there exists an assignment to the $n$ variables that satisfies all the constraints in $\Phi$. Further, there exists a randomized algorithm that outputs such an assignment in expected time $O(m)$. 
\end{theorem}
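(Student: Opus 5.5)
This is the Moser--Tardos theorem, so I will reproduce the standard resampling algorithm and its witness-tree analysis rather than invent a new argument. The algorithm starts from a uniformly random assignment to all $n$ variables. While some constraint $C_i$ is violated, it picks one (by any fixed rule) and resamples every variable in $\mathsf{vbl}(C_i)$ independently and uniformly. If the algorithm terminates, its output satisfies all constraints, which gives existence. It therefore suffices to bound the expected number of resampling steps by $\sum_i \frac{x_i}{1-x_i}$, which is $O(m)$ when the $x_i$ are bounded away from $1$. This is the regime the paper uses, e.g.\ $x_i = 1/(d+1)$ in the symmetric $2^k/e$ setting; each step costs only local work.

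The analysis proceeds as follows. First, fix in advance an infinite table of independent random values for each variable; the $j$-th resampling of a variable reads the next entry, so the execution is a deterministic function of the table. Next, for the $t$-th resampling step I build a \emph{witness tree} $\tau$. Scan the log backwards from step $t$. A previously resampled constraint $C_j$ is attached as a child of the deepest node whose label is in $\Gamma_j^+ = \Gamma_j \cup \{j\}$, and is discarded if no such node exists. This tree is \emph{proper}: the children of any node have distinct labels, and children of a node labelled $i$ carry labels in $\Gamma_i^+$. Distinct steps give distinct witness trees, so the expected number of steps equals $\sum_\tau \Pr[\tau \text{ appears in the log}]$.

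The key lemma, which I expect to be the main obstacle, is the coupling bound $\Pr[\tau \text{ appears}] \le \prod_{v \in \tau} p_{[v]}$. To prove it, process the nodes of $\tau$ in reverse BFS order. For each node $v$ labelled $i$, check $C_i$ against the table entries for its variables at the index equal to the number of nodes below $v$ in $\tau$ whose labels share that variable. If $\tau$ appears, each such check must fail at exactly those table positions. The positions used by different nodes are disjoint because of the ordering and the dependency structure, so the checks are independent and the $i$-th fails with probability $p_i$. The delicate part is showing that the table index used by the algorithm for each node is determined by $\tau$ alone.

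Finally, I bound the total weight of proper trees rooted at $i$ via a Galton--Watson process. Start at the root $i$. Each node labelled $j$ independently includes each $\ell \in \Gamma_j^+$ as a child with probability $x_\ell$. A direct computation shows that this process produces a given proper tree $\tau$ with probability
\[
\frac{1-x_i}{x_i} \prod_{v \in \tau} x'_{[v]}, \quad \text{where } x'_j = x_j \prod_{\ell \in \Gamma_j}(1-x_\ell).
\]
Since $p_j \le x'_j$ by hypothesis, summing over $\tau$ gives
\[
\sum_{\tau \text{ rooted at } i} \Pr[\tau \text{ appears}] \le \frac{x_i}{1-x_i},
\]
and summing over $i$ completes the expected-time bound.
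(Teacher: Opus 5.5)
Since the paper does not prove this theorem and only cites Moser and Tardos, the comparison is with their argument, and your outline is that argument. The algorithm, the backward witness-tree construction, properness, injectivity of $t\mapsto\tau(t)$, the Galton--Watson identity and the final summation are all correct. Your running-time caveat is necessary, not pedantic. A single constraint on $n$ variables with $p_1=x_1=1-2^{-n}$ meets the hypothesis yet needs $2^n-1$ expected resamplings. So what is actually proved is the bound $\sum_i x_i/(1-x_i)$ on resamplings, which is $O(m)$ only when the $x_i$ are bounded away from $1$, as in the paper's application.

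The genuine gap is in the coupling lemma, at the step you yourself call delicate. If ``nodes below $v$'' means the subtree of $v$, your index rule is false. A new node is attached to the \emph{deepest} related node, so a deeper node whose label contains $P\in\mathsf{vbl}([v])$ can sit in a sibling subtree. Take $\mathsf{vbl}(R)=\{P,Q\}$, $\mathsf{vbl}(A)=\{P\}$, $\mathsf{vbl}(B)=\{Q\}$, $\mathsf{vbl}(E)=\{P,Q\}$ and the log $E,B,B,A,R$. The witness tree of the last step has $A$ and $B$ as children of the root, a second copy of $B$ below the first, and $E$ below that. When $A$ was resampled, $P$ had already been resampled once (by $E$), so the algorithm read entry $1$ of $P$, although $A$ has no descendants. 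A subtree count would also give entry $0$ of $P$ to both $A$ and $E$, so the entries would not even be disjoint.

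If instead you meant all strictly deeper nodes, the rule is right, but its justification is exactly what is missing. The correct index is $|S_v(P)|$, where $S_v(P)$ is the set of \emph{all} nodes $w$ of $\tau$ with $d(w)>d(v)$ and $P\in\mathsf{vbl}([w])$. The fact that makes this work is the following. Suppose the labels of $u$ and $v$ share a variable and $u$ was attached after $v$ in the backward scan, so $u$ comes from an earlier step. Then $d(u)>d(v)$, because $v$ was already available as a parent for $u$.

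This fact has three consequences:
\begin{enumerate}
\item labels at equal depth are variable-disjoint;
\item every resampling of $P$ before the step of $v$ is attached at depth greater than $d(v)$;
\item conversely, every node of $S_v(P)$ comes from such a resampling.
\end{enumerate}
Hence the algorithm read exactly entry $|S_v(P)|$ of $P$, a quantity determined by $\tau$ alone. Of two nodes sharing $P$, the deeper one has the strictly smaller index, so distinct nodes read disjoint entries. Equivalently, the reverse-BFS check that takes the next unused entry of each variable reads exactly these entries. Independence of the table entries then gives $\Pr[\tau\text{ appears}]\le\prod_{v}p_{[v]}$, and the rest of your proof goes through unchanged.
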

In particular, if $\Phi$ is a E$k$-CNF formula and if each clause intersects with at most $2^k/e-1$ other clauses, $\Phi$ is satisfiable and a satisfying assignment can be found in polynomial time using the Moser-Tardos algorithm. The main result of this section is that we can also use the Moser-Tardos algorithm to get multiple diverse satisfying assignments to such E$k$-CNF formulas.

\begin{theorem} \label{thm:LLL-dispersion}
    Let $\Phi$ be a E$k$-CNF formula over $n$ variables and $n^{O(1)}$ clauses such that each clause intersects with at most $2^k/e-1$ other clauses. Then, for $s=2^{o(n)}$, and $\delta= H_2^{-1}\left(1-2/(k\ln(2))\right) \in (0,1)$, $\Phi$ has $s$ satisfying assignments $z_1^*, z_2^*, \dots, z_s^*$, such that $\sigma_{\text{min}}(z_1^*, z_2^*, \dots, z_s^*) \geq \delta n$. Furthermore, there exists a randomized algorithm that outputs such a set of satisfying assignments in expected time $poly(s,n)$.
\end{theorem}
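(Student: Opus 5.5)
The plan is a greedy farthest-point insertion in which each new point is produced by a Moser--Tardos run that is forced to lie far from every point already chosen. We build the list one assignment at a time. Given satisfying assignments $z_1^*,\dots,z_t^*$ with $t<s$, we want a new satisfying assignment at distance at least $\delta n$ from each of them.

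To obtain it, we add to $\Phi$ the constraints $D_j$ (for $j\le t$) stating ``$d_H(z, z_j^*) \ge \delta n$'' and aim to certify the resulting CSP via \Cref{thm:LLL-CSP}. These constraints are global, so their dependency neighbourhoods contain every clause and every other $D_{j'}$. However, each $D_j$ has violation probability $p = 2^{-(1-H_2(\delta))n + o(n)}$ under a uniform assignment. By the choice of $\delta$, $1-H_2(\delta) = 2/(k\ln 2)$, so $p \approx e^{-2n/k}$.

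The LLL weights will be chosen as follows.
\begin{itemize}
\item Each clause gets weight $x_C = e/2^k$ (or the $1/(d+1)$ choice), as in the standard bounded-degree argument.
\item Each $D_j$ gets a weight $y$ that is exponentially small but much larger than $p$, for instance $y = e^{-n/k}$.
\end{itemize}
The clause condition then picks up an extra factor $(1-y)^{t} \ge 1 - ty$, which stays close to $1$ because $s=2^{o(n)}$. We can absorb this by slightly perturbing $x_C$, using the slack available when the degree is at most $2^k/e-1$ with $x_C$ chosen appropriately. If there is no slack, we take $x_C=1/(d+1)$ and use the strict inequality in the bound, or weaken $\delta$ by an $o(1)$ term.

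The condition for $D_j$ requires
\[
p \le y\,(1-y)^{t}\prod_{C}(1-x_C).
\]
Here $\prod_C(1-x_C)\ge (1-e/2^k)^{m}$. Since $m\le n\cdot(2^k/e)/k$ (each variable lies in about $2^k/e$ clauses, giving $m \le nd/k$), this product is roughly $e^{-n/k}$. So the requirement becomes $e^{-2n/k+o(n)} \le e^{-n/k}\cdot e^{-n/k}\cdot (1-o(1))$, which is essentially tight. This bookkeeping is exactly why $\delta$ is defined through $2/(k\ln 2)$. I expect this calculation to be the main obstacle. It involves the counting $m\le nd/k$, possibly replacing $e/2^k$ by $x_C = 1/(d+1)$, and choosing $y$ to balance the two factors. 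If the constants do not close, I would shave $\delta$ by $o(1)$ or use the slack in $2^k/e - 1$.

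Existence of the whole configuration follows by induction: at every step the LLL certifies a satisfying assignment of $\Phi\wedge\bigwedge_j D_j$.

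For the algorithm, \Cref{thm:LLL-CSP} applies to arbitrary events over independent variables. Resampling a $D_j$ means resampling all $n$ variables, and checking $D_j$ takes $O(n)$ time. The Moser--Tardos expected number of resamplings is at most $\sum_C x_C/(1-x_C) + \sum_j y/(1-y) = O(m) + o(1)$, so each step runs in expected $\poly(n,t)$ time. Repeating this for $s$ steps gives total expected time $\poly(s,n)$, and by construction every pair in the output is at distance at least $\delta n$.
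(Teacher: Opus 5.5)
Your architecture is the paper's: add distance constraints, certify them with \Cref{thm:LLL-CSP}, and run Moser--Tardos greedily. Using $t$ separate constraints $D_j$ instead of the paper's single constraint $C_{\text{dist},S}$ is immaterial. The gap is exactly the step you flagged. With $x_C=e/2^k$ and $y=e^{-n/k}$, the condition $p\le y(1-y)^{t-1}\prod_C(1-x_C)$ is not ``essentially tight''; it can fail by a factor exponential in $n$. All you know is $\prod_C(1-x_C)\ge(1-e/2^k)^m$, with $m$ as large as $\lfloor 2^k/e\rfloor n/k$, and this can be $e^{-(1+c_k)n/k}$ for some $c_k>0$. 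For instance, take $k=7$ and disjoint blocks of $47$ distinct clauses on the same $7$ variables, so each clause meets $46\le 2^7/e-1$ others. Then $\prod_C(1-x_C)=(1-e/128)^{47n/7}\approx e^{-1.009\,n/7}$, whereas $p\ge 2^{-(1-H_2(\delta))n}/\poly(n)=e^{-2n/k}/\poly(n)$. So the right-hand side drops below $p$ for large $n$. Your fallbacks do not repair this as written. Shaving $\delta$ proves a weaker theorem. The $2^{o(n)}$ factor you allow in $p$ could not be absorbed by a tight inequality anyway; in fact $p\le 2^{-(1-H_2(\delta))n}$ with no error term, since $\delta<1/2$.

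The missing observation is that $y$ should be read off from $p$ rather than fixed in advance. The paper does exactly this: it chooses the weight $x_{m+1}$ of its distance constraint so that that constraint's LLL condition holds with equality. Take $y:=2p/\prod_C(1-x_C)$. Using $m\le 2^kn/(ek)$ and $p\le e^{-2n/k}$ gives $y\le 2A^{-n/k}$ with $A=e^2(1-e/2^k)^{2^k/e}$, and $A>1$ for $k\ge 3$. Hence $ty\le 2sA^{-n/k}=o(1)$ for $s=2^{o(n)}$, once $n\ge n_0=\Theta(\log s)$. Both LLL conditions then hold:
\begin{itemize}
\item The $D_j$ condition holds because $(1-y)^{t-1}\ge 1/2$.
\item The clause condition holds because $x_C=e/2^k$ already has constant slack: $(e/2^k)(1-e/2^k)^{2^k/e-1}\ge(1+\eps)2^{-k}$ for some $\eps=\eps(k)>0$, so you only need $(1-y)^t\ge 1/(1+\eps)$.
\end{itemize}
So the factor $2$ in $2/(k\ln 2)$ is not there to balance $y$ against $\prod_C(1-x_C)$. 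It lets $p\approx e^{-2n/k}$ pay for $\prod_C(1-x_C)\approx e^{-n/k}$ with an exponential factor to spare, and that spare factor is what makes the distance weight exponentially small. Alternatively, lowering $x_C$ to $1/(\lfloor 2^k/e\rfloor+1)$ pushes $\prod_C(1-x_C)$ above $e^{-n/k}$ by an exponential factor while keeping the clause condition, which would also rescue $y=e^{-n/k}$. But that is the computation the proposal leaves undone.
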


\label{app:thm:LLL-dispersion}
\begin{proof}
    Fix any $t \le s$ and let $S=\{z_1^*,\dots,z_t^*\}\subseteq\Om$ be a set of satisfying
    assignments, and let $(C_1,\dots,C_m)$ be the clauses of $\f$. Define a CSP $\Psi_S$ on the
    same $n$ variables: introduce a single distance constraint $C_{\text{dist},S}$ with
    $\vbl(C_{\text{dist},S})=[n]$ and $C_{\text{dist},S}(z)=1$ iff $d_H(z,z_i^*)\ge\delta n$ for every
    $i\in[t]$, and let the constraints of $\Psi_S$ be $\{C_1,\dots,C_m\}\cup\{C_{\text{dist},S}\}$.
    Writing $\delta'=1-H_2(\delta)$, the probability $p_{\text{dist},S}$ that a uniformly random
    assignment violates $C_{\text{dist},S}$ satisfies
    \[
        p_{\text{dist},S}\;\le\;\frac{s\cdot 2^{H_2(\delta)n}}{2^n}\;=\;s\,2^{-\delta'n},
    \]
    since a Hamming ball of radius $\delta n$ has volume at most $2^{H_2(\delta)n}$.\footnote{$H_2(\cdot)$ is the binary entropy function and $2^{H_2(\eps)n}$ is the volume of a Hamming ball of radius $\eps n$.}
    If $\Psi_S$ satisfies the conditions of \Cref{thm:LLL-CSP}, then it is satisfiable, so $\f$ has a
    satisfying assignment at distance $\ge\delta n$ from every assignment in $S$; such an assignment
    can be found and added to $S$ in $\poly(n)$ time via Moser--Tardos.

    We verify the LLL conditions. Take the coefficients $x_1,\dots,x_{m+1}\in(0,1)$ of
    \Cref{thm:LLL-CSP} to be $x_i=e/2^k$ for $i\le m$, with $x_{m+1}$ assigned to $C_{\text{dist},S}$
    and set to $x_{m+1}=\frac{s\,2^{-\delta'n}}{(1-e/2^k)^m}$. The condition for $C_{\text{dist},S}$ holds
    by construction, since $p_{m+1}=s\,2^{-\delta'n}=x_{m+1}\prod_{i=1}^m(1-e/2^k)$. Write
    $D=\lfloor 2^k/e\rfloor-1$, so that $\frac{1}{D+1}=\frac{e}{2^k}$. Because $C_{\text{dist},S}$ is
    adjacent to every clause, for each $i\le m$,
    \[
        x_i\prod_{j\in\Gamma_i}(1-x_j)
        =\frac{1}{D+1}\Big(1-\tfrac{1}{D+1}\Big)^{D}\big(1-x_{m+1}\big)
        =\frac{1}{D+1}\cdot\frac{1}{(1+1/D)^D}\cdot\big(1-x_{m+1}\big),
    \]
    and it suffices to bound this below by $1/2^k$.

    There is a constant $\eps=\eps(k)>0$ with $(1+1/D)^{-D}\ge\frac1e(1+\eps)$; combined with
    $\frac{1}{D+1}=\frac{e}{2^k}$ this gives $\frac{1}{D+1}\cdot\frac{1}{(1+1/D)^D}\ge\frac{1+\eps}{2^k}$.
    Hence it suffices to ensure $1-x_{m+1}\ge\frac{1}{1+\eps}$, i.e.
    $x_{m+1}\le\eps':=\frac{\eps}{1+\eps}$. Since every clause meets at most $D$ others, every variable
    lies in at most $D+1$ clauses, so $m\le\frac{(D+1)n}{k}=\frac{2^k n}{ek}$, and with
    $\delta=H_2^{-1}\!\big(1-\tfrac{2}{k\ln 2}\big)$ (so $\delta'=\tfrac{2}{k\ln 2}$),
    \[
        x_{m+1}=\frac{s\,2^{-\delta'n}}{(1-e/2^k)^m}
        \;\le\;\frac{s\,2^{-\delta'n}}{(1-e/2^k)^{2^k n/(ek)}}
        \;=\;s\,A^{-n/k},\qquad A:=e^{2}\big(1-e/2^k\big)^{2^k/e}.
    \]
    For $k\ge 3$ we have $A>1$ (and $A\to e$ as $k\to\infty$), so $x_{m+1}=s\,A^{-n/k}$.

    \textbf{The algorithm.} The bound $x_{m+1}\le\eps'$, and hence the LLL
    precondition for $\Psi_S$, is true when $n\ge n_0:=\frac{k}{\ln A}\ln(s/\eps')=\Theta(\log s)$.
    The construction therefore has two regimes. For $n\ge n_0$ we build $S^*$ inductively: find
    $z_1^*$ by Moser--Tardos, and at step $i+1$ run Moser--Tardos on $\Psi_{S_i}$ (where
    $S_i=\{z_1^*,\dots,z_i^*\}$) to obtain $z_{i+1}^*$ with $d_H(z_{i+1}^*,z_j^*)\ge\delta n$ for all
    $j\le i$; after $s$ steps this yields $S^*=\{z_1^*,\dots,z_s^*\}$ with pairwise distance $\ge\delta n$.
    For $n<n_0$ we have $2^n<(s/\eps')^{k\ln 2/\ln A}=\poly(s)$, so the same separated set is obtained
    by exhaustive search in $\poly(s,n)$ time. Either way the procedure runs in $\poly(s,n)$ time; in
    particular, when $s=2^{o(n)}$ we have $n_0=o(n)$, so the exhaustive branch is invoked for only
    finitely many $n$.
\end{proof}

\bibliographystyle{alpha}
\bibliography{bib2doi}

@inproceedings{galvez2025computing,
  title={Computing diverse and nice triangulations},
  author={G{\'a}lvez, Waldo and Goswami, Mayank and Merino, Arturo and Park, GiBeom and Tsai, Meng-Tsung},
  booktitle={International Symposium on Fundamentals of Computation Theory},
  pages={180--193},
  year={2025},
  organization={Springer}
}

@article{galvez2025framework,
  title={A framework for the design of efficient diversification algorithms to np-hard problems},
  author={G{\'a}lvez, Waldo and Goswami, Mayank and Merino, Arturo and Park, GiBeom and Tsai, Meng-Tsung and Verdugo, Victor},
  journal={arXiv preprint arXiv:2501.12261},
  year={2025}
}

@article{kratochvil1993one,
  title = {One more occurrence of variables makes satisfiability jump from trivial to NP-complete},
  author = {Kratochv{\'\i}l, Jan and Savick{\`y}, Petr and Tuza, Zsolt},
  journal = {SIAM Journal on Computing},
  volume = {22},
  number = {1},
  pages = {203--210},
  year = {1993},
  publisher = {SIAM},
  timestamp = {Sat, 27 May 2017 01:00:00 +0200},
  biburl = {https://dblp.org/rec/journals/siamcomp/KratochvilST93.bib},
  bibsource = {dblp computer science bibliography, https://dblp.org},
  doi = {10.1137/0222015},
  _bib2doi_selected = {dblp:/rec/journals/siamcomp/KratochvilST93.bib},
  _bib2doi_confirmed = {true},
}

@inproceedings{erdos1975problems,
  title = {Problems and results on 3-chromatic hypergraphs and some related questions},
  author = {Erd{\H{o}}s, Paul and Lov{\'a}sz, L{\'a}szl{\'o}},
  booktitle = {Infinite and finite sets},
  volume = {10},
  pages = {609--627},
  year = {1975},
  organization = {North-Holland, Amsterdam},
}

@article{gebauer2016local,
  title = {The local lemma is asymptotically tight for {SAT}},
  author = {Gebauer, Heidi and Szab{\'o}, Tibor and Tardos, G{\'a}bor},
  journal = {Journal of the ACM (JACM)},
  volume = {63},
  number = {5},
  pages = {1--32},
  year = {2016},
  publisher = {ACM New York, NY, USA},
  timestamp = {Sun, 19 Jan 2025 00:00:00 +0100},
  biburl = {https://dblp.org/rec/journals/jacm/GebauerST16.bib},
  bibsource = {dblp computer science bibliography, https://dblp.org},
  doi = {10.1145/2975386},
  _bib2doi_selected = {dblp:/rec/journals/jacm/GebauerST16.bib},
  _bib2doi_confirmed = {true},
}

@article{moser2010constructive,
  title = {A constructive proof of the general Lov{\'a}sz local lemma},
  author = {Moser, Robin A and Tardos, G{\'a}bor},
  journal = {Journal of the ACM (JACM)},
  volume = {57},
  number = {2},
  pages = {1--15},
  year = {2010},
  publisher = {ACM New York, NY, USA},
  timestamp = {Sun, 19 Jan 2025 00:00:00 +0100},
  biburl = {https://dblp.org/rec/journals/jacm/MoserT10.bib},
  bibsource = {dblp computer science bibliography, https://dblp.org},
  doi = {10.1145/1667053.1667060},
  _bib2doi_selected = {dblp:/rec/journals/jacm/MoserT10.bib},
  _bib2doi_confirmed = {true},
}

@inproceedings{bhangale2015simultaneous,
  title = {Simultaneous approximation of constraint satisfaction problems},
  author = {Bhangale, Amey and Kopparty, Swastik and Sachdeva, Sushant},
  booktitle = {International Colloquium on Automata, Languages, and Programming},
  pages = {193--205},
  year = {2015},
  organization = {Springer},
  timestamp = {Sat, 19 Oct 2019 01:00:00 +0200},
  biburl = {https://dblp.org/rec/conf/icalp/BhangaleKS15.bib},
  bibsource = {dblp computer science bibliography, https://dblp.org},
  doi = {10.1007/978-3-662-47672-7_16},
  _bib2doi_selected = {dblp:/rec/conf/icalp/BhangaleKS15.bib},
  _bib2doi_confirmed = {true},
}

@article{erdos1991lopsided,
  title = {Lopsided Lov{\'{a}}sz Local Lemma and Latin transversals},
  author = {Paul Erd{\"{o}}s and Joel Spencer},
  journal = {Discret. Appl. Math.},
  volume = {30},
  number = {2-3},
  pages = {151--154},
  year = {1991},
  timestamp = {Tue, 01 Jun 2021 01:00:00 +0200},
  biburl = {https://dblp.org/rec/journals/dam/ErdosS91.bib},
  bibsource = {dblp computer science bibliography, https://dblp.org},
  doi = {10.1016/0166-218X(91)90040-4},
  _bib2doi_selected = {dblp:/rec/journals/dam/ErdosS91.bib},
  _bib2doi_confirmed = {true},
  _bib2doi_finished = {true},
}

@article{baste2022diversity,
  title = {Diversity of solutions: An exploration through the lens of fixed-parameter tractability theory},
  author = {Baste, Julien and Fellows, Michael R. and Jaffke, Lars and Masa{\v{r}}{\'\i}k, Tom{\'a}{\v{s}} and de Oliveira Oliveira, Mateus and Philip, Geevarghese and Rosamond, Frances A.},
  journal = {Artificial Intelligence},
  volume = {303},
  pages = {103644},
  year = {2022},
  publisher = {Elsevier},
  timestamp = {Fri, 21 Jan 2022 00:00:00 +0100},
  biburl = {https://dblp.org/rec/journals/ai/BasteFJMOPR22.bib},
  bibsource = {dblp computer science bibliography, https://dblp.org},
  doi = {10.1016/j.artint.2021.103644},
  _bib2doi_selected = {dblp:/rec/journals/ai/BasteFJMOPR22.bib},
  _bib2doi_confirmed = {true},
}

@inproceedings{austrin2025algorithms,
  title = {Algorithms for the Diverse-k-{SAT} Problem: The Geometry of Satisfying Assignments},
  author = {Per Austrin and Ioana O. Bercea and Mayank Goswami and Nutan Limaye and Adarsh Srinivasan},
  booktitle = {52nd International Colloquium on Automata, Languages, and Programming, {ICALP} 2025, Aarhus, Denmark, July 8-11, 2025},
  year = {2025},
  organization = {Schloss Dagstuhl-Leibniz-Zentrum f{\"u}r Informatik},
  timestamp = {Mon, 30 Jun 2025 01:00:00 +0200},
  biburl = {https://dblp.org/rec/conf/icalp/AustrinB0LS25.bib},
  bibsource = {dblp computer science bibliography, https://dblp.org},
  doi = {10.4230/LIPIcs.ICALP.2025.14},
  publisher = {Schloss Dagstuhl - Leibniz-Zentrum f{\"{u}}r Informatik},
  volume = {334},
  pages = {14:1--14:17},
  editor = {Keren Censor{-}Hillel and Fabrizio Grandoni and Jo{\"{e}}l Ouaknine and Gabriele Puppis},
  series = {LIPIcs},
  _bib2doi_selected = {dblp:/rec/conf/icalp/AustrinB0LS25.bib},
  _bib2doi_confirmed = {true},
  _bib2doi_finished = {true},
}

@inproceedings{nadel2011generating,
  title = {Generating Diverse Solutions in {SAT}},
  author = {Alexander Nadel},
  booktitle = {Theory and Applications of Satisfiability Testing - {SAT} 2011 - 14th International Conference, {SAT} 2011, Ann Arbor, MI, USA, June 19-22, 2011. Proceedings},
  pages = {287--301},
  year = {2011},
  organization = {IEEE},
  timestamp = {Sun, 25 Oct 2020 01:00:00 +0200},
  biburl = {https://dblp.org/rec/conf/sat/Nadel11.bib},
  bibsource = {dblp computer science bibliography, https://dblp.org},
  doi = {10.1007/978-3-642-21581-0_23},
  publisher = {Springer},
  volume = {6695},
  editor = {Karem A. Sakallah and Laurent Simon},
  series = {Lecture Notes in Computer Science},
  _bib2doi_selected = {dblp:/rec/conf/sat/Nadel11.bib},
  _bib2doi_confirmed = {true},
  _bib2doi_finished = {true},
}

@article{gonzalez1985clustering,
  title = {Clustering to minimize the maximum intercluster distance},
  author = {Gonzalez, Teofilo F},
  journal = {Theoretical Computer Science},
  volume = {38},
  pages = {293--306},
  year = {1985},
  publisher = {Elsevier},
  timestamp = {Wed, 17 Feb 2021 00:00:00 +0100},
  biburl = {https://dblp.org/rec/journals/tcs/Gonzalez85.bib},
  bibsource = {dblp computer science bibliography, https://dblp.org},
  doi = {10.1016/0304-3975(85)90224-5},
  _bib2doi_selected = {dblp:/rec/journals/tcs/Gonzalez85.bib},
  _bib2doi_confirmed = {true},
}

@inproceedings{agbaria2010sat,
  title = {{SAT}-based semiformal verification of hardware},
  author = {Agbaria, Sabih and Carmi, Dan and Cohen, Orly and Korchemny, Dmitry and Lifshits, Michael and Nadel, Alexander},
  booktitle = {Formal Methods in Computer Aided Design},
  pages = {25--32},
  year = {2010},
  organization = {IEEE},
  timestamp = {Mon, 09 Aug 2021 01:00:00 +0200},
  biburl = {https://dblp.org/rec/conf/fmcad/AgbariaCCKLN10.bib},
  bibsource = {dblp computer science bibliography, https://dblp.org},
  url = {https://ieeexplore.ieee.org/document/5770929/},
  _bib2doi_selected = {dblp:/rec/conf/fmcad/AgbariaCCKLN10.bib},
  _bib2doi_confirmed = {true},
}

@inproceedings{hebrard2005finding,
  title = {Finding diverse and similar solutions in constraint programming},
  author = {Hebrard, Emmanuel and Hnich, Brahim and O'Sullivan, Barry and Walsh, Toby},
  booktitle = {Proceedings of the 20th National Conference on Artificial Intelligence (AAAI)},
  pages = {372--377},
  year = {2005},
  timestamp = {Tue, 05 Sep 2023 01:00:00 +0200},
  biburl = {https://dblp.org/rec/conf/aaai/HebrardHOW05.bib},
  bibsource = {dblp computer science bibliography, https://dblp.org},
  url = {http://www.aaai.org/Library/AAAI/2005/aaai05-059.php},
  _bib2doi_selected = {dblp:/rec/conf/aaai/HebrardHOW05.bib},
  _bib2doi_confirmed = {true},
}

@inproceedings{fomin2020diversity,
  title = {Diversity of Solutions: An Exploration Through the Lens of Fixed-Parameter Tractability Theory},
  author = {Julien Baste and Michael R. Fellows and Lars Jaffke and Tom{\'{a}}s Masar{\'{\i}}k and Mateus de Oliveira Oliveira and Geevarghese Philip and Frances A. Rosamond},
  booktitle = {Proceedings of the Twenty-Ninth International Joint Conference on Artificial Intelligence, {IJCAI} 2020},
  pages = {1119--1125},
  year = {2020},
  timestamp = {Sat, 09 Apr 2022 01:00:00 +0200},
  biburl = {https://dblp.org/rec/conf/ijcai/BasteFJMOPR20.bib},
  bibsource = {dblp computer science bibliography, https://dblp.org},
  doi = {10.24963/ijcai.2020/156},
  publisher = {ijcai.org},
  editor = {Christian Bessiere},
  _bib2doi_selected = {dblp:/rec/conf/ijcai/BasteFJMOPR20.bib},
  _bib2doi_confirmed = {true},
  _bib2doi_finished = {true},
}

@inproceedings{liang2025diversat,
  title = {Diversat: a novel and effective local search algorithm for diverse {SAT} problem},
  author = {Liang, Jiaxin and Zhou, Junping and Yin, Minghao},
  booktitle = {Proceedings of the AAAI Conference on Artificial Intelligence},
  volume = {39},
  number = {11},
  pages = {11290--11298},
  year = {2025},
  timestamp = {Thu, 17 Apr 2025 01:00:00 +0200},
  biburl = {https://dblp.org/rec/conf/aaai/LiangZY25.bib},
  bibsource = {dblp computer science bibliography, https://dblp.org},
  doi = {10.1609/aaai.v39i11.33228},
  _bib2doi_selected = {dblp:/rec/conf/aaai/LiangZY25.bib},
  _bib2doi_confirmed = {true},
}

@article{gima2024computing,
  title = {Computing diverse pair of solutions for tractable {SAT}},
  author = {Gima, Tatsuya and Iwamasa, Yuni and Kobayashi, Yasuaki and Kurita, Kazuhiro and Otachi, Yota and Saito, Rin},
  journal = {arXiv preprint arXiv:2412.04016},
  year = {2024},
  timestamp = {Tue, 14 Jan 2025 00:00:00 +0100},
  biburl = {https://dblp.org/rec/journals/corr/abs-2412-04016.bib},
  bibsource = {dblp computer science bibliography, https://dblp.org},
  doi = {10.48550/arXiv.2412.04016},
  _bib2doi_selected = {dblp:/rec/journals/corr/abs-2412-04016.bib},
  _bib2doi_confirmed = {true},
  _bib2doi_finished = {true},
}

@inproceedings{hanaka2023framework,
  title = {A framework to design approximation algorithms for finding diverse solutions in combinatorial problems},
  author = {Hanaka, Tesshu and Kiyomi, Masashi and Kobayashi, Yasuaki and Kobayashi, Yusuke and Kurita, Kazuhiro and Otachi, Yota},
  booktitle = {Proceedings of the AAAI Conference on Artificial Intelligence},
  volume = {37},
  number = {4},
  pages = {3968--3976},
  year = {2023},
  timestamp = {Mon, 04 Sep 2023 01:00:00 +0200},
  biburl = {https://dblp.org/rec/conf/aaai/HanakaK0KKO23.bib},
  bibsource = {dblp computer science bibliography, https://dblp.org},
  doi = {10.1609/aaai.v37i4.25511},
  _bib2doi_selected = {dblp:/rec/conf/aaai/HanakaK0KKO23.bib},
  _bib2doi_confirmed = {true},
}

@inproceedings{cevallos2016local,
  title = {Local search for max-sum diversification},
  author = {Alfonso Cevallos and Friedrich Eisenbrand and Rico Zenklusen},
  booktitle = {Proceedings of the Twenty-Eighth Annual {ACM-SIAM} Symposium on Discrete Algorithms, {SODA} 2017, Barcelona, Spain, Hotel Porta Fira, January 16-19},
  pages = {130--142},
  year = {2017},
  organization = {SIAM},
  timestamp = {Tue, 02 Feb 2021 00:00:00 +0100},
  biburl = {https://dblp.org/rec/conf/soda/CevallosEZ17.bib},
  bibsource = {dblp computer science bibliography, https://dblp.org},
  doi = {10.1137/1.9781611974782.9},
  url = {https://doi.org/10.1137/1.9781611974782.9},
  editor = {Philip N. Klein},
  publisher = {{SIAM}},
  _bib2doi_selected = {dblp:/rec/conf/soda/CevallosEZ17.bib},
  _bib2doi_confirmed = {true},
  _bib2doi_finished = {true},
}

@inproceedings{zhou2023lsdtkms,
  title = {{LS-DTKMS:} {A} Local Search Algorithm for Diversified Top-k {MaxSAT} Problem},
  author = {Junping Zhou and Jiaxin Liang and Minghao Yin and Bo He},
  booktitle = {26th International Conference on Theory and Applications of Satisfiability Testing, {SAT} 2023, Alghero, Italy, July 4-8, 2023},
  volume = {271},
  pages = {29:1--29:16},
  year = {2023},
  organization = {Schloss Dagstuhl-Leibniz-Zentrum f{\"u}r Informatik},
  timestamp = {Thu, 10 Aug 2023 01:00:00 +0200},
  biburl = {https://dblp.org/rec/conf/sat/ZhouLYH23.bib},
  bibsource = {dblp computer science bibliography, https://dblp.org},
  doi = {10.4230/LIPIcs.SAT.2023.29},
  publisher = {Schloss Dagstuhl - Leibniz-Zentrum f{\"{u}}r Informatik},
  editor = {Meena Mahajan and Friedrich Slivovsky},
  series = {LIPIcs},
  _bib2doi_selected = {dblp:/rec/conf/sat/ZhouLYH23.bib},
  _bib2doi_confirmed = {true},
  _bib2doi_finished = {true},
}

@article{hastad_optimal_2001,
  title = {Some optimal inapproximability results},
  volume = {48},
  issn = {0004-5411, 1557-735X},
  url = {https://dl.acm.org/doi/10.1145/502090.502098},
  doi = {10.1145/502090.502098},
  abstract = {We prove optimal, up to an arbitrary ε {\textgreater} 0, inapproximability results for Max-E k -Sat for k ≥ 3, maximizing the number of satisfied linear equations in an over-determined system of linear equations modulo a prime p and Set Splitting. As a consequence of these results we get improved lower bounds for the efficient approximability of many optimization problems studied previously. In particular, for Max-E2-Sat, Max-Cut, Max-di-Cut, and Vertex cover.},
  language = {en},
  number = {4},
  urldate = {2025-11-17},
  journal = {Journal of the ACM},
  author = {Håstad, Johan},
  month = {jul},
  year = {2001},
  pages = {798--859},
  timestamp = {Wed, 14 Nov 2018 00:00:00 +0100},
  biburl = {https://dblp.org/rec/journals/jacm/Hastad01.bib},
  bibsource = {dblp computer science bibliography, https://dblp.org},
  _bib2doi_selected = {dblp:/rec/journals/jacm/Hastad01.bib},
  _bib2doi_confirmed = {true},
}

@article{hastad_advantage_2004,
  title = {On the advantage over a random assignment},
  volume = {25},
  copyright = {http://onlinelibrary.wiley.com/termsAndConditions\#vor},
  issn = {1042-9832, 1098-2418},
  url = {https://onlinelibrary.wiley.com/doi/10.1002/rsa.20031},
  doi = {10.1002/rsa.20031},
  abstract = {Abstract We initiate the study of a new measure of approximation. This measure compares the performance of an approximation algorithm to the random assignment algorithm. This is a useful measure for optimization problems where the random assignment algorithm is known to give essentially the best possible polynomial time approximation. In this paper, we focus on this measure for the optimization problems Max‐Lin‐2 in which we need to maximize the number of satisfied linear equations in a system of linear equations modulo 2, and Max‐ k ‐Lin‐2, a special case of the above problem in which each equation has at most k variables. The main techniques we use, in our approximation algorithms and inapproximability results for this measure, are from Fourier analysis and derandomization. © 2004 Wiley Periodicals, Inc. Random Struct. Alg., 2004},
  language = {en},
  number = {2},
  urldate = {2025-11-30},
  journal = {Random Structures \& Algorithms},
  author = {Håstad, Johan and Venkatesh, S.},
  month = {sep},
  year = {2004},
  pages = {117--149},
  timestamp = {Wed, 14 Nov 2018 00:00:00 +0100},
  biburl = {https://dblp.org/rec/journals/rsa/HastadV04.bib},
  bibsource = {dblp computer science bibliography, https://dblp.org},
  _bib2doi_selected = {dblp:/rec/journals/rsa/HastadV04.bib},
  _bib2doi_confirmed = {true},
}

@article{trevisan1998parallel,
  title = {Parallel approximation algorithms by positive linear programming},
  author = {Trevisan, Luca},
  journal = {Algorithmica},
  volume = {21},
  number = {1},
  pages = {72--88},
  year = {1998},
  publisher = {Springer},
  timestamp = {Tue, 25 Feb 2025 00:00:00 +0100},
  biburl = {https://dblp.org/rec/journals/algorithmica/Trevisan98.bib},
  bibsource = {dblp computer science bibliography, https://dblp.org},
  doi = {10.1007/PL00009209},
  _bib2doi_selected = {dblp:/rec/journals/algorithmica/Trevisan98.bib},
  _bib2doi_confirmed = {true},
}

@inproceedings{hast2005approximating,
  title = {Approximating - Outperforming a Random Assignment with Almost a Linear Factor},
  author = {Gustav Hast},
  booktitle = {Automata, Languages and Programming, 32nd International Colloquium, {ICALP} 2005, Lisbon, Portugal, July 11-15, 2005, Proceedings},
  pages = {956--968},
  year = {2005},
  organization = {Springer},
  timestamp = {Tue, 23 May 2017 01:00:00 +0200},
  biburl = {https://dblp.org/rec/conf/icalp/Hast05.bib},
  bibsource = {dblp computer science bibliography, https://dblp.org},
  doi = {10.1007/11523468_77},
  publisher = {Springer},
  volume = {3580},
  editor = {Lu{\'{\i}}s Caires and Giuseppe F. Italiano and Lu{\'{\i}}s Monteiro and Catuscia Palamidessi and Moti Yung},
  series = {Lecture Notes in Computer Science},
  _bib2doi_selected = {dblp:/rec/conf/icalp/Hast05.bib},
  _bib2doi_confirmed = {true},
  _bib2doi_finished = {true},
}

@article{charikar2009near,
  title = {Near-optimal algorithms for maximum constraint satisfaction problems},
  author = {Charikar, Moses and Makarychev, Konstantin and Makarychev, Yury},
  journal = {ACM Transactions on Algorithms (TALG)},
  volume = {5},
  number = {3},
  pages = {1--14},
  year = {2009},
  publisher = {ACM New York, NY, USA},
  timestamp = {Sat, 30 Sep 2023 01:00:00 +0200},
  biburl = {https://dblp.org/rec/journals/talg/CharikarMM09.bib},
  bibsource = {dblp computer science bibliography, https://dblp.org},
  doi = {10.1145/1541885.1541893},
  _bib2doi_selected = {dblp:/rec/journals/talg/CharikarMM09.bib},
  _bib2doi_confirmed = {true},
}

@inproceedings{samorodnitsky2006gowers,
  title = {Gowers uniformity, influence of variables, and PCPs},
  author = {Samorodnitsky, Alex and Trevisan, Luca},
  booktitle = {Proceedings of the thirty-eighth annual ACM symposium on Theory of Computing},
  pages = {11--20},
  year = {2006},
  timestamp = {Tue, 25 Feb 2025 00:00:00 +0100},
  biburl = {https://dblp.org/rec/conf/stoc/SamorodnitskyT06.bib},
  bibsource = {dblp computer science bibliography, https://dblp.org},
  doi = {10.1145/1132516.1132519},
  _bib2doi_selected = {dblp:/rec/conf/stoc/SamorodnitskyT06.bib},
  _bib2doi_confirmed = {true},
}

@article{cevallos2019improved,
  title = {An improved analysis of local search for max-sum diversification},
  author = {Cevallos, Alfonso and Eisenbrand, Friedrich and Zenklusen, Rico},
  journal = {Mathematics of Operations Research},
  volume = {44},
  number = {4},
  pages = {1494--1509},
  year = {2019},
  publisher = {INFORMS},
  timestamp = {Thu, 19 Dec 2019 00:00:00 +0100},
  biburl = {https://dblp.org/rec/journals/mor/CevallosEZ19.bib},
  bibsource = {dblp computer science bibliography, https://dblp.org},
  doi = {10.1287/moor.2018.0982},
  _bib2doi_selected = {dblp:/rec/journals/mor/CevallosEZ19.bib},
  _bib2doi_confirmed = {true},
}

@book{odonnell2014analysis,
  title = {Analysis of Boolean Functions},
  author = {O'Donnell, Ryan},
  year = {2014},
  publisher = {Cambridge University Press},
  note = {Theorem 9.23},
  timestamp = {Mon, 01 Sep 2014 01:00:00 +0200},
  biburl = {https://dblp.org/rec/books/daglib/0033652.bib},
  bibsource = {dblp computer science bibliography, https://dblp.org},
  isbn = {978-1-10-703832-5},
  _bib2doi_selected = {dblp:/rec/books/daglib/0033652.bib},
  _bib2doi_confirmed = {true},
}

@book{macwilliams1977theory,
  title = {The Theory of Error-Correcting Codes},
  author = {MacWilliams, Florence Jessie and Sloane, Neil James Alexander},
  volume = {16},
  year = {1977},
  publisher = {Elsevier},
  _bib2doi_finished = {true},
}

@book{guruswami2019essential,
  title = {Essential Coding Theory},
  author = {Guruswami, Venkatesan and Rudra, Atri and Sudan, Madhu},
  year = {2019},
  publisher = {Draft available online},
  note = {\url{https://cse.buffalo.edu/~atri/courses/coding-theory/book/}},
  _bib2doi_finished = {true},
}

@article{beck1991algorithmic,
  title = {An algorithmic approach to the {Lov{\'a}sz Local Lemma} {I}},
  author = {Beck, Joseph},
  journal = {Random Structures \& Algorithms},
  volume = {2},
  number = {4},
  pages = {343--365},
  year = {1991},
  publisher = {Wiley Online Library},
  timestamp = {Wed, 14 Nov 2018 00:00:00 +0100},
  biburl = {https://dblp.org/rec/journals/rsa/Beck91a.bib},
  bibsource = {dblp computer science bibliography, https://dblp.org},
  doi = {10.1002/rsa.3240020402},
  _bib2doi_selected = {dblp:/rec/journals/rsa/Beck91a.bib},
  _bib2doi_confirmed = {true},
}

@article{drabik2024finding,
  title = {Finding Diverse Solutions Parameterized by Cliquewidth},
  author = {Karolina Drabik and Tom{\'{a}}s Masar{\'{\i}}k},
  booktitle = {Proceedings of the AAAI Conference on Artificial Intelligence},
  year = {2024},
  timestamp = {Mon, 24 Jun 2024 01:00:00 +0200},
  biburl = {https://dblp.org/rec/journals/corr/abs-2405-20931.bib},
  bibsource = {dblp computer science bibliography, https://dblp.org},
  doi = {10.48550/arXiv.2405.20931},
  journal = {CoRR},
  volume = {abs/2405.20931},
  url = {https://doi.org/10.48550/arXiv.2405.20931},
  eprinttype = {arXiv},
  eprint = {2405.20931},
  _bib2doi_selected = {dblp:/rec/journals/corr/abs-2405-20931.bib},
  _bib2doi_confirmed = {true},
  _bib2doi_finished = {true},
}

@article{iwamasa2025general,
  title = {A General Framework for Finding Diverse Solutions via Network Flow and Its Applications},
  author = {Yuni Iwamasa and Tomoki Matsuda and Shunya Morihira and Hanna Sumita},
  booktitle = {36th International Symposium on Algorithms and Computation (ISAAC 2025)},
  volume = {abs/2504.17633},
  pages = {41:1--41:17},
  year = {2025},
  organization = {Schloss Dagstuhl-Leibniz-Zentrum f{\"u}r Informatik},
  timestamp = {Fri, 23 May 2025 01:00:00 +0200},
  biburl = {https://dblp.org/rec/journals/corr/abs-2504-17633.bib},
  bibsource = {dblp computer science bibliography, https://dblp.org},
  doi = {10.48550/arXiv.2504.17633},
  eprinttype = {arXiv},
  eprint = {2504.17633},
  journal = {CoRR},
  _bib2doi_selected = {dblp:/rec/journals/corr/abs-2504-17633.bib},
  _bib2doi_confirmed = {true},
  _bib2doi_finished = {true},
}

@article{justesen2003class,
  title={Class of constructive asymptotically good algebraic codes},
  author={Justesen, J{\o}rn},
  journal={IEEE Transactions on information theory},
  volume={18},
  number={5},
  pages={652--656},
  year={2003},
  publisher={IEEE}
}

@inproceedings{paturi1997satisfiability,
  title={Satisfiability coding lemma},
  author={Paturi, Ramamohan and Pudl{\'a}k, Pavel and Zane, Francis},
  booktitle={Proceedings 38th Annual Symposium on Foundations of Computer Science},
  pages={566--574},
  year={1997},
  organization={IEEE}
}

@inproceedings{schoning1999probabilistic,
  title={A probabilistic algorithm for k-{SAT} and constraint satisfaction problems},
  author={Schoning, T},
  booktitle={40th Annual Symposium on Foundations of Computer Science (Cat. No. 99CB37039)},
  pages={410--414},
  year={1999},
  organization={IEEE}
}

@article{scheder2024ppsz,
  title={PPSZ is better than you think},
  author={Scheder, Dominik},
  journal={TheoretiCS},
  volume={3},
  year={2024},
  publisher={Episciences. org}
}

@article{paturi2005improved,
  title={An improved exponential-time algorithm for k-{SAT}},
  author={Paturi, Ramamohan and Pudl{\'a}k, Pavel and Saks, Michael E and Zane, Francis},
  journal={Journal of the ACM (JACM)},
  volume={52},
  number={3},
  pages={337--364},
  year={2005},
  publisher={ACM New York, NY, USA}
}

@article{austrin2024algorithmsarxiv,
  title={Algorithms for the {Diverse}-k-{SAT} problem: the geometry of satisfying assignments},
  author={Austrin, Per and Bercea, Ioana O and Goswami, Mayank and Limaye, Nutan and Srinivasan, Adarsh},
  journal={arXiv preprint arXiv:2408.03465},
  year={2024}
}

@article{misra2025parameterized,
  title={On the parameterized complexity of diverse {SAT}},
  author={Misra, Neeldhara and Mittal, Harshil and Rai, Ashutosh},
  journal={Theoretical Computer Science},
  pages={115653},
  year={2025},
  publisher={Elsevier}
}

@inproceedings{zheng2025exact,
  title={Exact Approaches for the Diverse Satisfiability Problem},
  author={Zheng, Zhifei and Cherif, Sami and S{\'a} Shibasaki, Rui and Li, Chu-Min and Zhang, Jialu},
  booktitle={European Conference on Logics in Artificial Intelligence},
  pages={207--224},
  year={2025},
  organization={Springer}
}

@article{crescenzi_hamming_2002,
    title = {On the {Hamming} distance of constraint satisfaction problems},
    volume = {288},
    issn = {03043975},
    url = {https://linkinghub.elsevier.com/retrieve/pii/S0304397501001463},
    doi = {10.1016/S0304-3975(01)00146-3},
    language = {en},
    number = {1},
    urldate = {2023-06-15},
    journal = {Theoretical Computer Science},
    author = {Crescenzi, P. and Rossi, G.},
    month = oct,
    year = {2002},
    pages = {85--100},
}

\end{document}